\newtheorem{theorem}{Theorem}[section]
\newtheorem{definition}[theorem]{definition}
\newtheorem{lemma}[theorem]{Lemma}
\newtheorem{example}[theorem]{Example}
\newtheorem{proposition}[theorem]{Proposition}
\newtheorem{corollary}[theorem]{Corollary}
\newtheorem{remark}[theorem]{Remark}
\newenvironment{proof}[1][Proof]{\textsc{#1.\ }} {\ \rule{0.5em}{0.5em}}
\numberwithin{equation}{section}
\numberwithin{figure}{section}
\chardef\@x10\chardef\@xv60
\def\tcitime{
\def\@time{%
  \@minute\time\@hour\@minute\divide\@hour\@xv
  \ifnum\@hour<\@x 0\fi\the\@hour:%
  \multiply\@hour\@xv\advance\@minute-\@hour
  \ifnum\@minute<\@x 0\fi\the\@minute
  }}%
\def\QCTOpt[#1]#2{%
  \def\QCTOptB{#1}
  \def\QCTOptA{#2}
}
\def\QCTNOpt#1{%
  \def\QCTOptA{#1}
  \let\QCTOptB\empty
}
\def\Qct{%
  \@ifnextchar[{%
    \QCTOpt}{\QCTNOpt}
}
\def\QCBOpt[#1]#2{%
  \def\QCBOptB{#1}
  \def\QCBOptA{#2}
}
\def\QCBNOpt#1{%
  \def\QCBOptA{#1}
  \let\QCBOptB\empty
}
\def\Qcb{%
  \@ifnextchar[{%
    \QCBOpt}{\QCBNOpt}
}
\def\PrepCapArgs{%
  \ifx\QCBOptA\empty
    \ifx\QCTOptA\empty
      {}%
    \else
      \ifx\QCTOptB\empty
        {\QCTOptA}%
      \else
        [\QCTOptB]{\QCTOptA}%
      \fi
    \fi
  \else
    \ifx\QCBOptA\empty
      {}%
    \else
      \ifx\QCBOptB\empty
        {\QCBOptA}%
      \else
        [\QCBOptB]{\QCBOptA}%
      \fi
    \fi
  \fi
}
\def\GRAPHICSPS#1{%
 \ifcase\GRAPHICSTYPE
   \special{ps: #1}%
 \or
   \special{language "PS", include "#1"}%
 \fi
}%
\def\graffile#1#2#3#4{%
    \bgroup
    \leavevmode
    \@ifundefined{bbl@deactivate}{\def~{\string~}}{\activesoff}
    \raise -#4 \BOXTHEFRAME{%
        \hbox to #2{\raise #3\hbox to #2{\null #1\hfil}}}%
    \egroup
}%
\def\draftbox#1#2#3#4{%
 \leavevmode\raise -#4 \hbox{%
  \frame{\rlap{\protect\tiny #1}\hbox to #2%
   {\vrule height#3 width\z@ depth\z@\hfil}%
  }%
 }%
}%
\newif\ifwasdraft
\def\GRAPHIC#1#2#3#4#5{%
 \ifnum\draft=\@ne\draftbox{#2}{#3}{#4}{#5}%
  \else\graffile{#1}{#3}{#4}{#5}%
  \fi
 }%
\def\addtoLaTeXparams#1{%
    \edef\LaTeXparams{\LaTeXparams #1}}%
\newif\ifBoxFrame \BoxFramefalse
\newif\ifOverFrame \OverFramefalse
\newif\ifUnderFrame \UnderFramefalse
\def\BOXTHEFRAME#1{%
   \hbox{%
      \ifBoxFrame
         \frame{#1}%
      \else
         {#1}%
      \fi
   }%
}
\def\doFRAMEparams#1{\BoxFramefalse\OverFramefalse\UnderFramefalse\readFRAMEparams#1\end}%
\def\readFRAMEparams#1{%
 \ifx#1\end%
  \let\next=\relax
  \else
  \ifx#1i\dispkind=\z@\fi
  \ifx#1d\dispkind=\@ne\fi
  \ifx#1f\dispkind=\tw@\fi
  \ifx#1t\addtoLaTeXparams{t}\fi
  \ifx#1b\addtoLaTeXparams{b}\fi
  \ifx#1p\addtoLaTeXparams{p}\fi
  \ifx#1h\addtoLaTeXparams{h}\fi
  \ifx#1X\BoxFrametrue\fi
  \ifx#1O\OverFrametrue\fi
  \ifx#1U\UnderFrametrue\fi
  \ifx#1w
    \ifnum\draft=1\wasdrafttrue\else\wasdraftfalse\fi
    \draft=\@ne
  \fi
  \let\next=\readFRAMEparams
  \fi
 \next
 }%
\def\IFRAME#1#2#3#4#5#6{%
      \bgroup
      \let\QCTOptA\empty
      \let\QCTOptB\empty
      \let\QCBOptA\empty
      \let\QCBOptB\empty
      #6%
      \parindent=0pt%
      \leftskip=0pt
      \rightskip=0pt
      \setbox0 = \hbox{\QCBOptA}%
      \@tempdima = #1\relax
      \ifOverFrame
          \typeout{This is not implemented yet}%
          \show\HELP
      \else
         \ifdim\wd0>\@tempdima
            \advance\@tempdima by \@tempdima
            \ifdim\wd0 >\@tempdima
               \textwidth=\@tempdima
               \setbox1 =\vbox{%
                  \noindent\hbox to \@tempdima{\hfill\GRAPHIC{#5}{#4}{#1}{#2}{#3}\hfill}\\%
                  \noindent\hbox to \@tempdima{\parbox[b]{\@tempdima}{\QCBOptA}}%
               }%
               \wd1=\@tempdima
            \else
               \textwidth=\wd0
               \setbox1 =\vbox{%
                 \noindent\hbox to \wd0{\hfill\GRAPHIC{#5}{#4}{#1}{#2}{#3}\hfill}\\%
                 \noindent\hbox{\QCBOptA}%
               }%
               \wd1=\wd0
            \fi
         \else
            \ifdim\wd0>0pt
              \hsize=\@tempdima
              \setbox1 =\vbox{%
                \unskip\GRAPHIC{#5}{#4}{#1}{#2}{0pt}%
                \break
                \unskip\hbox to \@tempdima{\hfill \QCBOptA\hfill}%
              }%
              \wd1=\@tempdima
           \else
              \hsize=\@tempdima
              \setbox1 =\vbox{%
                \unskip\GRAPHIC{#5}{#4}{#1}{#2}{0pt}%
              }%
              \wd1=\@tempdima
           \fi
         \fi
         \@tempdimb=\ht1
         \advance\@tempdimb by \dp1
         \advance\@tempdimb by -#2%
         \advance\@tempdimb by #3%
         \leavevmode
         \raise -\@tempdimb \hbox{\box1}%
      \fi
      \egroup%
}%
\def\DFRAME#1#2#3#4#5{%
 \begin{center}
     \let\QCTOptA\empty
     \let\QCTOptB\empty
     \let\QCBOptA\empty
     \let\QCBOptB\empty
     \ifOverFrame
        #5\QCTOptA\par
     \fi
     \GRAPHIC{#4}{#3}{#1}{#2}{\z@}
     \ifUnderFrame
        \nobreak\par\nobreak#5\QCBOptA
     \fi
 \end{center}%
 }%
\def\FFRAME#1#2#3#4#5#6#7{%
 \begin{figure}[#1]%
  \let\QCTOptA\empty
  \let\QCTOptB\empty
  \let\QCBOptA\empty
  \let\QCBOptB\empty
  \ifOverFrame
    #4
    \ifx\QCTOptA\empty
    \else
      \ifx\QCTOptB\empty
        \caption{\QCTOptA}%
      \else
        \caption[\QCTOptB]{\QCTOptA}%
      \fi
    \fi
    \ifUnderFrame\else
      \label{#5}%
    \fi
  \else
    \UnderFrametrue%
  \fi
  \begin{center}\GRAPHIC{#7}{#6}{#2}{#3}{\z@}\end{center}%
  \ifUnderFrame
    #4
    \ifx\QCBOptA\empty
      \caption{}%
    \else
      \ifx\QCBOptB\empty
        \caption{\QCBOptA}%
      \else
        \caption[\QCBOptB]{\QCBOptA}%
      \fi
    \fi
    \label{#5}%
  \fi
  \end{figure}%
 }%
\def\makeactives{
  \catcode`\"=\active
  \catcode`\;=\active
  \catcode`\:=\active
  \catcode`\'=\active
  \catcode`\~=\active
}
   \gdef\activesoff{%
      \def"{\string"}
      \def;{\string;}
      \def:{\string:}
      \def'{\string'}
      \def~{\string~}
    }
\def\FRAME#1#2#3#4#5#6#7#8{%
 \bgroup
 \ifnum\draft=\@ne
   \wasdrafttrue
 \else
   \wasdraftfalse%
 \fi
 \def\LaTeXparams{}%
 \dispkind=\z@
 \def\LaTeXparams{}%
 \doFRAMEparams{#1}%
 \ifnum\dispkind=\z@\IFRAME{#2}{#3}{#4}{#7}{#8}{#5}\else
  \ifnum\dispkind=\@ne\DFRAME{#2}{#3}{#7}{#8}{#5}\else
   \ifnum\dispkind=\tw@
    \edef\@tempa{\noexpand\FFRAME{\LaTeXparams}}%
    \@tempa{#2}{#3}{#5}{#6}{#7}{#8}%
    \fi
   \fi
  \fi
  \ifwasdraft\draft=1\else\draft=0\fi{}%
  \egroup
 }%
\def\TEXUX#1{"texux"}
\long\def\QQQ#1#2{%
     \long\expandafter\def\csname#1\endcsname{#2}}%
\long\def\QQA#1#2{}%
\def\QTR#1#2{{\csname#1\endcsname #2}}
\def\EXPAND#1[#2]#3{}%
\def\NOEXPAND#1[#2]#3{}%
\def\LaTeXparent#1{}%
\def\ChildStyles#1{}%
\def\ChildDefaults#1{}%
\def\QTagDef#1#2#3{}%
  \providecommand{\UNICODE}[2][]{}
\def\QQfnmark#1{\footnotemark}
 \def\abstract{%
  \if@twocolumn
   \section*{Abstract (Not appropriate in this style!)}%
   \else \small
   \begin{center}{\bf Abstract\vspace{-.5em}\vspace{\z@}}\end{center}%
   \quotation
   \fi
  }%
   \def\registered{\relax\ifmmode{}\r@gistered
                    \else$\m@th\r@gistered$\fi}%
 \def\r@gistered{^{\ooalign
  {\hfil\raise.07ex\hbox{$\scriptstyle\rm\text{R}$}\hfil\crcr
  \mathhexbox20D}}}}{}%
\newdimen\theight
\def\Column{%
 \vadjust{\setbox\z@=\hbox{\scriptsize\quad\quad tcol}%
  \theight=\ht\z@\advance\theight by \dp\z@\advance\theight by \lineskip
  \kern -\theight \vbox to \theight{%
   \rightline{\rlap{\box\z@}}%
   \vss
   }%
  }%
 }%
\def\qed{%
 \ifhmode\unskip\nobreak\fi\ifmmode\ifinner\else\hskip5\p@\fi\fi
 \hbox{\hskip5\p@\vrule width4\p@ height6\p@ depth1.5\p@\hskip\p@}%
 }%
\def\miss{\hbox{\vrule height2\p@ width 2\p@ depth\z@}}%
\def\tcol#1{{\baselineskip=6\p@ \vcenter{#1}} \Column}  %
\def\newfmtname{LaTeX2e}
  \DeclareOldFontCommand{\rm}{\normalfont\rmfamily}{\mathrm}
  \DeclareOldFontCommand{\sf}{\normalfont\sffamily}{\mathsf}
  \DeclareOldFontCommand{\tt}{\normalfont\ttfamily}{\mathtt}
  \DeclareOldFontCommand{\bf}{\normalfont\bfseries}{\mathbf}
  \DeclareOldFontCommand{\it}{\normalfont\itshape}{\mathit}
  \DeclareOldFontCommand{\sl}{\normalfont\slshape}{\@nomath\sl}
  \DeclareOldFontCommand{\sc}{\normalfont\scshape}{\@nomath\sc}
\def\alpha{{\Greekmath 010B}}%
\def\beta{{\Greekmath 010C}}%
\def\gamma{{\Greekmath 010D}}%
\def\delta{{\Greekmath 010E}}%
\def\epsilon{{\Greekmath 010F}}%
\def\zeta{{\Greekmath 0110}}%
\def\eta{{\Greekmath 0111}}%
\def\theta{{\Greekmath 0112}}%
\def\iota{{\Greekmath 0113}}%
\def\kappa{{\Greekmath 0114}}%
\def\lambda{{\Greekmath 0115}}%
\def\mu{{\Greekmath 0116}}%
\def\nu{{\Greekmath 0117}}%
\def\xi{{\Greekmath 0118}}%
\def\pi{{\Greekmath 0119}}%
\def\rho{{\Greekmath 011A}}%
\def\sigma{{\Greekmath 011B}}%
\def\tau{{\Greekmath 011C}}%
\def\upsilon{{\Greekmath 011D}}%
\def\phi{{\Greekmath 011E}}%
\def\chi{{\Greekmath 011F}}%
\def\psi{{\Greekmath 0120}}%
\def\omega{{\Greekmath 0121}}%
\def\varepsilon{{\Greekmath 0122}}%
\def\vartheta{{\Greekmath 0123}}%
\def\varpi{{\Greekmath 0124}}%
\def\varrho{{\Greekmath 0125}}%
\def\varsigma{{\Greekmath 0126}}%
\def\varphi{{\Greekmath 0127}}%
\def\nabla{{\Greekmath 0272}}
\def\FindBoldGroup{%
   {\setbox0=\hbox{$\mathbf{x\global\edef\theboldgroup{\the\mathgroup}}$}}%
}
\def\Greekmath#1#2#3#4{%
    \if@compatibility
        \ifnum\mathgroup=\symbold
           \mathchoice{\mbox{\boldmath$\displaystyle\mathchar"#1#2#3#4$}}%
                      {\mbox{\boldmath$\textstyle\mathchar"#1#2#3#4$}}%
                      {\mbox{\boldmath$\scriptstyle\mathchar"#1#2#3#4$}}%
                      {\mbox{\boldmath$\scriptscriptstyle\mathchar"#1#2#3#4$}}%
        \else
           \mathchar"#1#2#3#4%
        \fi
    \else
        \FindBoldGroup
        \ifnum\mathgroup=\theboldgroup 
           \mathchoice{\mbox{\boldmath$\displaystyle\mathchar"#1#2#3#4$}}%
                      {\mbox{\boldmath$\textstyle\mathchar"#1#2#3#4$}}%
                      {\mbox{\boldmath$\scriptstyle\mathchar"#1#2#3#4$}}%
                      {\mbox{\boldmath$\scriptscriptstyle\mathchar"#1#2#3#4$}}%
        \else
           \mathchar"#1#2#3#4%
        \fi
      \fi}
\newif\ifGreekBold  \GreekBoldfalse
\let\SAVEPBF=\pbf
\def\pbf{\GreekBoldtrue\SAVEPBF}%
  \newcounter{equationnumber}
  \def\mathletters{%
     \addtocounter{equation}{1}
     \edef\@currentlabel{\theequation}%
     \setcounter{equationnumber}{\c@equation}
     \setcounter{equation}{0}%
     \edef\theequation{\@currentlabel\noexpand\alph{equation}}%
  }
    \def\BibTeX{{\rm B\kern-.05em{\sc i\kern-.025em b}\kern-.08em
                 T\kern-.1667em\lower.7ex\hbox{E}\kern-.125emX}}}{}%
\def\AmS{{\protect\usefont{OMS}{cmsy}{m}{n}%
                A\kern-.1667em\lower.5ex\hbox{M}\kern-.125emS}}}{}%
\def\@@eqncr{\let\@tempa\relax
    \ifcase\@eqcnt \def\@tempa{& & &}\or \def\@tempa{& &}%
      \else \def\@tempa{&}\fi
     \@tempa
     \if@eqnsw
        \iftag@
           \@taggnum
        \else
           \@eqnnum\stepcounter{equation}%
        \fi
     \fi
     \global\tag@false
     \global\@eqnswtrue
     \global\@eqcnt\z@\cr}
\def\TCItag{\@ifnextchar*{\@TCItagstar}{\@TCItag}}
\def\@TCItag#1{%
    \global\tag@true
    \global\def\@taggnum{(#1)}}
\def\@TCItagstar*#1{%
    \global\tag@true
    \global\def\@taggnum{#1}}
\let\DOTSI\relax
\def\RIfM@{\relax\ifmmode}%
\def\FN@{\futurelet\next}%
\def\iint{\DOTSI\intno@\tw@\FN@\ints@}%
\def\iiint{\DOTSI\intno@\thr@@\FN@\ints@}%
\def\iiiint{\DOTSI\intno@4 \FN@\ints@}%
\def\idotsint{\DOTSI\intno@\z@\FN@\ints@}%
\def\ints@{\findlimits@\ints@@}%
\newif\iflimtoken@
\newif\iflimits@
\def\findlimits@{\limtoken@true\ifx\next\limits\limits@true
 \else\ifx\next\nolimits\limits@false\else
 \limtoken@false\ifx\ilimits@\nolimits\limits@false\else
 \ifinner\limits@false\else\limits@true\fi\fi\fi\fi}%
\def\multint@{\int\ifnum\intno@=\z@\intdots@                          
 \else\intkern@\fi                                                    
 \ifnum\intno@>\tw@\int\intkern@\fi                                   
 \ifnum\intno@>\thr@@\int\intkern@\fi                                 
 \int}
\def\multintlimits@{\intop\ifnum\intno@=\z@\intdots@\else\intkern@\fi
 \ifnum\intno@>\tw@\intop\intkern@\fi
 \ifnum\intno@>\thr@@\intop\intkern@\fi\intop}%
\def\intic@{%
    \mathchoice{\hskip.5em}{\hskip.4em}{\hskip.4em}{\hskip.4em}}%
\def\negintic@{\mathchoice
 {\hskip-.5em}{\hskip-.4em}{\hskip-.4em}{\hskip-.4em}}%
\def\ints@@{\iflimtoken@                                              
 \def\ints@@@{\iflimits@\negintic@
   \mathop{\intic@\multintlimits@}\limits                             
  \else\multint@\nolimits\fi                                          
  \eat@}
 \else                                                                
 \def\ints@@@{\iflimits@\negintic@
  \mathop{\intic@\multintlimits@}\limits\else
  \multint@\nolimits\fi}\fi\ints@@@}%
\def\intkern@{\mathchoice{\!\!\!}{\!\!}{\!\!}{\!\!}}%
\def\plaincdots@{\mathinner{\cdotp\cdotp\cdotp}}%
\def\intdots@{\mathchoice{\plaincdots@}%
 {{\cdotp}\mkern1.5mu{\cdotp}\mkern1.5mu{\cdotp}}%
 {{\cdotp}\mkern1mu{\cdotp}\mkern1mu{\cdotp}}%
 {{\cdotp}\mkern1mu{\cdotp}\mkern1mu{\cdotp}}}%
\def\RIfM@{\relax\protect\ifmmode}
\def\text{\RIfM@\expandafter\text@\else\expandafter\mbox\fi}
\let\nfss@text\text
\def\text@#1{\mathchoice
   {\textdef@\displaystyle\f@size{#1}}%
   {\textdef@\textstyle\tf@size{\firstchoice@false #1}}%
   {\textdef@\textstyle\sf@size{\firstchoice@false #1}}%
   {\textdef@\textstyle \ssf@size{\firstchoice@false #1}}%
   \glb@settings}
\def\textdef@#1#2#3{\hbox{{%
                    \everymath{#1}%
                    \let\f@size#2\selectfont
                    #3}}}
\newif\iffirstchoice@
\def\Let@{\relax\iffalse{\fi\let\\=\cr\iffalse}\fi}%
\def\vspace@{\def\vspace##1{\crcr\noalign{\vskip##1\relax}}}%
\def\multilimits@{\bgroup\vspace@\Let@
 \baselineskip\fontdimen10 \scriptfont\tw@
 \advance\baselineskip\fontdimen12 \scriptfont\tw@
 \lineskip\thr@@\fontdimen8 \scriptfont\thr@@
 \lineskiplimit\lineskip
 \vbox\bgroup\ialign\bgroup\hfil$\m@th\scriptstyle{##}$\hfil\crcr}%
\def\Sb{_\multilimits@}%
\def\endSb{\crcr\egroup\egroup\egroup}%
\def\Sp{^\multilimits@}%
\newdimen\ex@
\def\rightarrowfill@#1{$#1\m@th\mathord-\mkern-6mu\cleaders
 \hbox{$#1\mkern-2mu\mathord-\mkern-2mu$}\hfill
 \mkern-6mu\mathord\rightarrow$}%
\def\leftarrowfill@#1{$#1\m@th\mathord\leftarrow\mkern-6mu\cleaders
 \hbox{$#1\mkern-2mu\mathord-\mkern-2mu$}\hfill\mkern-6mu\mathord-$}%
\def\leftrightarrowfill@#1{$#1\m@th\mathord\leftarrow
\mkern-6mu\cleaders
 \hbox{$#1\mkern-2mu\mathord-\mkern-2mu$}\hfill
 \mkern-6mu\mathord\rightarrow$}%
\def\overrightarrow{\mathpalette\overrightarrow@}%
\def\overrightarrow@#1#2{\vbox{\ialign{##\crcr\rightarrowfill@#1\crcr
 \noalign{\kern-\ex@\nointerlineskip}$\m@th\hfil#1#2\hfil$\crcr}}}%
\def\overleftarrow{\mathpalette\overleftarrow@}%
\def\overleftarrow@#1#2{\vbox{\ialign{##\crcr\leftarrowfill@#1\crcr
 \noalign{\kern-\ex@\nointerlineskip}$\m@th\hfil#1#2\hfil$\crcr}}}%
\def\overleftrightarrow{\mathpalette\overleftrightarrow@}%
\def\overleftrightarrow@#1#2{\vbox{\ialign{##\crcr
   \leftrightarrowfill@#1\crcr
 \noalign{\kern-\ex@\nointerlineskip}$\m@th\hfil#1#2\hfil$\crcr}}}%
\def\underrightarrow{\mathpalette\underrightarrow@}%
\def\underrightarrow@#1#2{\vtop{\ialign{##\crcr$\m@th\hfil#1#2\hfil
  $\crcr\noalign{\nointerlineskip}\rightarrowfill@#1\crcr}}}%
\def\underleftarrow{\mathpalette\underleftarrow@}%
\def\underleftarrow@#1#2{\vtop{\ialign{##\crcr$\m@th\hfil#1#2\hfil
  $\crcr\noalign{\nointerlineskip}\leftarrowfill@#1\crcr}}}%
\def\underleftrightarrow{\mathpalette\underleftrightarrow@}%
\def\underleftrightarrow@#1#2{\vtop{\ialign{##\crcr$\m@th
  \hfil#1#2\hfil$\crcr
 \noalign{\nointerlineskip}\leftrightarrowfill@#1\crcr}}}%
\def\qopnamewl@#1{\mathop{\operator@font#1}\nlimits@}
\let\nlimits@\displaylimits
\def\setboxz@h{\setbox\z@\hbox}
\def\varlim@#1#2{\mathop{\vtop{\ialign{##\crcr
 \hfil$#1\m@th\operator@font lim$\hfil\crcr
 \noalign{\nointerlineskip}#2#1\crcr
 \noalign{\nointerlineskip\kern-\ex@}\crcr}}}}
 \def\rightarrowfill@#1{\m@th\setboxz@h{$#1-$}\ht\z@\z@
  $#1\copy\z@\mkern-6mu\cleaders
  \hbox{$#1\mkern-2mu\box\z@\mkern-2mu$}\hfill
  \mkern-6mu\mathord\rightarrow$}
\def\leftarrowfill@#1{\m@th\setboxz@h{$#1-$}\ht\z@\z@
  $#1\mathord\leftarrow\mkern-6mu\cleaders
  \hbox{$#1\mkern-2mu\copy\z@\mkern-2mu$}\hfill
  \mkern-6mu\box\z@$}
\def\projlim{\qopnamewl@{proj\,lim}}
\def\injlim{\qopnamewl@{inj\,lim}}
\def\varinjlim{\mathpalette\varlim@\rightarrowfill@}
\def\varprojlim{\mathpalette\varlim@\leftarrowfill@}
\def\varliminf{\mathpalette\varliminf@{}}
\def\varliminf@#1{\mathop{\underline{\vrule\@depth.2\ex@\@width\z@
   \hbox{$#1\m@th\operator@font lim$}}}}
\def\varlimsup{\mathpalette\varlimsup@{}}
\def\varlimsup@#1{\mathop{\overline
  {\hbox{$#1\m@th\operator@font lim$}}}}
\def\align{\@verbatim \frenchspacing\@vobeyspaces \@alignverbatim
You are using the "align" environment in a style in which it is not defined.}
\let\csname endalign*\endcsname =\endtrivlist
\def\alignat{\@verbatim \frenchspacing\@vobeyspaces \@alignatverbatim
You are using the "alignat" environment in a style in which it is not defined.}
\let\csname endalignat*\endcsname =\endtrivlist
\def\xalignat{\@verbatim \frenchspacing\@vobeyspaces \@xalignatverbatim
You are using the "xalignat" environment in a style in which it is not defined.}
\let\csname endxalignat*\endcsname =\endtrivlist
\def\gather{\@verbatim \frenchspacing\@vobeyspaces \@gatherverbatim
You are using the "gather" environment in a style in which it is not defined.}
\let\csname endgather*\endcsname =\endtrivlist
\def\multiline{\@verbatim \frenchspacing\@vobeyspaces \@multilineverbatim
You are using the "multiline" environment in a style in which it is not defined.}
\let\csname endmultiline*\endcsname =\endtrivlist
\def\arrax{\@verbatim \frenchspacing\@vobeyspaces \@arraxverbatim
You are using a type of "array" construct that is only allowed in AmS-LaTeX.}
\def\tabulax{\@verbatim \frenchspacing\@vobeyspaces \@tabulaxverbatim
You are using a type of "tabular" construct that is only allowed in AmS-LaTeX.}
\let\csname endarrax*\endcsname =\endtrivlist
\let\csname endtabulax*\endcsname =\endtrivlist
 \def\endequation{%
     \ifmmode\ifinner 
      \iftag@
        \addtocounter{equation}{-1} 
        $\hfil
           \displaywidth\linewidth\@taggnum\egroup \endtrivlist
        \global\tag@false
        \global\@ignoretrue
      \else
        $\hfil
           \displaywidth\linewidth\@eqnnum\egroup \endtrivlist
        \global\tag@false
        \global\@ignoretrue
      \fi
     \else
      \iftag@
        \addtocounter{equation}{-1} 
        \eqno \hbox{\@taggnum}
        \global\tag@false%
        $$\global\@ignoretrue
      \else
        \eqno \hbox{\@eqnnum}
        $$\global\@ignoretrue
      \fi
     \fi\fi
 }
 \newif\iftag@ \tag@false
 \def\TCItag{\@ifnextchar*{\@TCItagstar}{\@TCItag}}
 \def\@TCItag#1{%
     \global\tag@true
     \global\def\@taggnum{(#1)}}
 \def\@TCItagstar*#1{%
     \global\tag@true
     \global\def\@taggnum{#1}}
     \def\tag{\@ifnextchar*{\@tagstar}{\@tag}}
     \def\@tag#1{%
         \global\tag@true
         \global\def\@taggnum{(#1)}}
     \def\@tagstar*#1{%
         \global\tag@true
         \global\def\@taggnum{#1}}
\begin{document}

\title{The Cauchy problem on a characteristic cone for the Einstein equations
in arbitrary dimensions
}
\author{
Yvonne Choquet-Bruhat \\
Acad\'emie des Sciences, Paris
\and
Piotr T. Chru\'{s}ciel \\
Universit\"at Wien
\and
Jos\'e M. Mart\'in-Garc\'ia \\
Institut d'Astrophysique de Paris, and \\
Laboratoire Univers et Th\'eories, Meudon
}

\maketitle

\begin{abstract}
We derive
explicit formulae for a set of constraints for the Einstein
equations on a null hypersurface, in arbitrary dimensions. We
solve these constraints and show that they provide necessary
and sufficient conditions so that a spacetime solution of the
Cauchy problem on a characteristic cone for the hyperbolic
system of the reduced Einstein equations in wave-map gauge also
satisfies the full Einstein equations. We prove a geometric
uniqueness theorem for this Cauchy problem in the vacuum case.
\end{abstract}

\tableofcontents
\section{Introduction}
 \label{S25IV10.0}

An Einsteinian spacetime is a pair $(V,g)$ with $V$ a manifold
and $g$ a Lorentzian metric solution on $V$ of Einstein's
equations. These are a geometric system of quasilinear second
order partial differential equations which express, in vacuum,
the vanishing of the Ricci tensor of $g$. Two isometric
spacetimes are considered as the same. A fundamental problem is
the determination of generic solutions from initial data, i.e.
the Cauchy problem. It is a badly posed problem from the
analyst point of view in arbitrary local coordinates on $V$
since the characteristic determinant of the PDE system is
identically zero. It is well known that if initial data are
given on a spacelike manifold $M$ the problem splits into
constraints to be satisfied by the geometric initial data, i.e.
the induced metric and second fundamental form of $M$ as
submanifold of $V,$ and an evolution system. The latter must be
well posed and must exhibit propagation of the gravitational
field with the speed of light, that is, must be a hyperbolic
system with characteristic cones the null cones of the
looked-for spacetime metric. The simplest way to obtain such a
system is to suppose that the coordinates satisfy so-called
harmonicity conditions, or, more generally, to introduce a
preassigned metric $\hat{g},$ called target metric, which
permits to write the Ricci tensor as the sum of two tensorial
operators, one of which is a hyperbolic operator acting on $g$,
called the reduced Ricci tensor, and the other a homogeneous
first order differential operator acting on a vector $H,$
called wave-map gauge vector, which vanishes when the identity
map is a wave map from $(V,g)$ onto $(V,\hat{g})$. When $M$ is
spacelike, classical theorems of analysis show existence and
uniqueness of solutions of so-reduced Einstein equations and,
due to the Bianchi identities, of the original Einstein
equations if the initial data satisfy the constraints. The case
where the initial manifold is null has analogies with the
spacelike case but also important differences: First, the
induced metric is degenerate,   and unconstrained in the
regions where $\tau$ (as defined below) has no zeroes. Next,
the second fundamental forms defined on a spacelike and on a
null manifold, for which the normal is also tangent, have very
different properties. Finally, null initial data on a light
cone, or on two-intersecting null hypersurfaces, determine the
solution in one time direction only, past or future.

A complete understanding of this problem is still lacking, even
in space-time dimension four. The most exhaustive studies are
for the case of two intersecting null surfaces
\cite{CagnacEinsteinCRAS1,CagnacEinsteinCRAS2,SachsCIVP,Dautcourt,
PenroseCIVP,ChristodoulouMzH,F1,F2,DamourSchmidt,
MzHSeifertCIVP}; compare, in different settings,
\cite{CaciottaNicoloI,CaciottaNicoloII,HaywardNullSurfaceEquations}.
The most complete construction of equations satisfied by
initial data has been given by Damour and
Schmidt~\cite{DamourSchmidt}, and the most satisfactory
treatment of the local evolution by Rendall~\cite{RendallCIVP}.
The problem with data on a characteristic cone  presents new
mathematical difficulties due to its singularity at the vertex,
and only partial results have been obtained before
in~\cite{YvonneCIVP,DossaAHP,F1,FriedrichCMP86,RendallCIVP2}.

The object of this work is to present a treatment of the
Einstein equations with data on a characteristic cone in all
dimensions $n+1\geq 3$. We proceed as follows:

Though the equations are geometric and the final results
coordinate independent, it is useful to introduce adapted
coordinates to carry-out the analysis. We take a $C^{\infty}$
manifold $V$ diffeomorphic to $\mathbf{R}^{n+1}$, and we consider
a cone $C_{O}$ in $V$ with vertex $O\in V$ and equation, in
coordinates $y^{\alpha}$ compatible with the $C^{\infty}$
structure of $V$
\begin{equation*}
y^{0}=r,\qquad r:=\{\sum_{i=1,\ldots,n}(y^{i})^{2}\}^{\frac{1}{2}}\;.
\end{equation*}
We consider the Cauchy problem with data on $C_{O}$ for the
Einstein equations with unknown a Lorentzian metric $g$,
assuming that $C_{O}$ will
be a characteristic cone of the metric $g$ and the lines $y^{0}=r,$
$\frac{y^{i}}{r}=c^{i},$ where the $c^{i}$ are constants, its null
rays. It is well known\footnote{%
It is known, using normal coordinates, that a null cone in a
$C^{1,1}$, Lorentzian spacetime always admits such a
representation in a neighbourhood of its vertex; the null
geodesics which generate it are represented by the lines where
$r$ only varies. \label{normalcone}\label{f19V10.1} This is
guaranteed to hold only in a neighbourhood of the vertex, as
there can be caustics.}
that the
characteristic cone of a $C^{1,1}$ Lorentzian metric admits
always such a representation in a neighbourhood of its vertex.
We review in section 3 an existence theorem which applies to
the reduced Einstein equations in wave-map gauge with Minkowski
target reading in these coordinates
\begin{equation}
\hat{g}=-(dy^{0})^{2}+\sum_{i=1,\ldots,n}(dy^{i})^{2}.
 \label{19V10.2}
\end{equation}
We introduce in section~\ref{nullhypersurfaces} what we call
adapted null coordinates, singular on the line $r=0,$ in
particular at the vertex $O$ of $C_{O},$ but $C^{\infty}$
elsewhere, by setting
\begin{equation*}
x^{0}:=r-y^{0},\qquad x^{1}:=r,
\end{equation*}
and defining $x^{A},$ $A=2,...n,$ to be local coordinates on the sphere
$S^{n-1}$. In coordinates $x^{\alpha}$ the trace $\overline{g}$ on
$C_{O}$ of the metric $g$ we are looking for has the form
\begin{equation*}
\overline{g}=\overline{g}_{00}(dx^{0})^{2}+2\nu_{0}dx^{0}dx^{1}+2\nu_{A}dx^{0}dx^{A}+
\overline{g}_{AB}dx^{A}dx^{B}.
\end{equation*}

Remark that the question, whether or not
$x^{1}$ is  an affine parameter on the null rays
$x^{A}=c^{A}$, depends on derivatives tranversal to $C_{O}$ of
the spacetime metric $g$, which are usually not considered as
part of the initial data for characteristic Cauchy problems.

The adapted null, but singular at the vertex, coordinates
$x^{\alpha}$ are used to solve ``wave-map gauge constraints''
satisfied by $\overline{g}$.

In section~\ref{ConstraintsAndGauge} we review the standard
argument, that the Bianchi identities imply that if $g$
satisfies the reduced Einstein equations with source a
divergence-free stress energy tensor, then the vector $H$
satisfies a homogeneous hyperbolic system; it vanishes in the
future of $C_{O}$ if its trace $\overline{H}$ vanishes on
$C_{O}$ .

We show in sections~\ref{constraintC1} to~\ref{solutionC0} that
$\overline{H}=0$ if the initial data $\overline{g}$ satisfies a
set of $n+1$ equations which we call the wave-map gauge
constraints. These constraints read as a hierarchical system of
ordinary differential equations along the light rays, singular
at the vertex $O,$ if one uses the adapted null coordinates
$x^{\alpha}$. We write this complete system for a general
$\hat{g}$ and generalized wave gauge, in arbitrary dimensions
$n+1\geq 3$. We integrate them successively  under natural
limit conditions on the unknowns at $O$. We study briefly in
section~\ref{ss5V.1} the case when the degenerate metric
$\tilde{g}$ induced on $C_{O}$ (i.e. the $x^{1}$ dependent
quadratic $\overline{g}_{AB}$) is prescribed.

In section~\ref{solutionC1} in order to have an evolutionary
equation for $\tau$ we set, as many authors before us,
$\overline{g}_{AB}=$ $\Omega^{2}\gamma_{AB},$ with $\gamma$ an
arbitrarily given $x^{1}$ dependent metric on $S^{n-1}$. The
first wave-map gauge constraint can be written in a form which
involves the two unknowns $\nu_{0}$ and $\Omega$. Its general
solution is obtained by the introduction of an arbitrary
function $\kappa$. We study in particular the case $\kappa=0$
which leads to the Raychaudhuri equation for $\tau$ for which
we prove global existence for a small $|\sigma|$ which depends
only on the given $\gamma$. A simple integration determines
then $\Omega$, hence $\overline{g}_{AB}$ and we are back to the
equations for $\nu_{0},\nu_{A}, \overline{g}_{00}$ with given
$\tilde{g}$. We remark that the equation for $\nu_{0}$ (for
$\kappa=0)$ implies that the vector $\ell$ is parallelly
transported  along the null ray by the connection of a
spacetime metric in wave-map gauge satisfying the Einstein
equations. In sections~\ref{constraintCA},~\ref{solutionCA},
\ref{constraintC0} and~\ref{solutionC0} we establish, and
integrate, the other constraints determining $\nu_{A}$ and
$\overline{g}_{00}$. A theorem in section~\ref{summary}
summarizes our analysis of the wave-gauge constraint equations.
A uniqueness theorem is proved in Section~\ref{slgu}.

A major question left open by our work is the description of
the largest class of unconstrained initial data which lead to
solutions of the wave-map gauge constraints such that the
components in $y^{\alpha}$ coordinates of the trace $\overline{g}$
satisfy the (non trivial) initial conditions given in section
\ref{ssCagDos} for the existence theorem for quasilinear wave
equations. The problem is that the wave-gauge constraint
equations determine the components of $\overline{g}$ in the
$x^{\alpha}$ coordinates, and these components are linked with
the components in the $y^{\alpha} $ coordinates by linear
relations which are singular at the vertex. We simply note here
that initial data which are Minkowskian in a neighbourhood of
the vertex are easily seen to be in the class where the
existence theorem holds; see also~\cite{CCM3} for more general
family of data. We plan to return to this problem in a near
future.

\section{Definitions}
 \label{S25IV10.1}

\subsection{Ricci tensor and harmonicity functions}
 \label{sS25IV10.1}

The Ricci tensor of any pseudo Riemannian metric is given in local
coordinates by,
\begin{equation}
R_{\alpha\beta} := \partial_{\lambda}\Gamma_{\alpha\beta}^{\lambda}-%
\partial_{\alpha}\Gamma_{\beta\lambda}^{\lambda}+\Gamma_{\alpha\beta
}^{\lambda}\Gamma_{\lambda\mu}^{\mu}-\Gamma_{\alpha\mu}^{\lambda}\Gamma
_{\beta\lambda}^{\mu},\quad  \partial_{\lambda}:=\frac{\partial}{%
\partial x^{\lambda}},
\label{Ricci}
\end{equation}
with $\Gamma_{\alpha\beta}^{\lambda}$ the Christoffel symbols
\begin{equation}
\Gamma_{\alpha\beta}^{\lambda}:=g^{\lambda\mu}[\mu,\alpha\beta],\quad  [%
\mu,\alpha\beta]:=\frac{1}{2}(\partial_{\alpha}g_{\beta\mu}+\partial_{%
\beta}g_{\alpha\mu}-\partial_{\mu}g_{\alpha\beta}).
\label{Christoffel}
\end{equation}
The Ricci tensor satisfies the identity
\begin{equation}
R_{\alpha\beta}\equiv R_{\alpha\beta}^{(h)}+{\frac{1}{2}}(g_{\alpha\lambda
}\partial_{\beta}\Gamma^{\lambda}+g_{\beta\lambda}\partial_{\alpha}\Gamma^{%
\lambda}) ,
\label{RiccihIdentity}
\end{equation}
where $Ricc^{(h)}(g)$, the reduced Ricci tensor, is a
quasi-linear, quasi-diagonal operator on the components of $g$,
\begin{equation}
 R_{\alpha\beta}^{(h)}\equiv-{\frac{1}{2}}g^{\lambda\mu}
 \partial_{\lambda}\partial_{\mu}g_{\alpha\beta}+f[g,\partial g]_{\alpha\beta}
 \;,
\label{Riccih}
\end{equation}
and $f[g,\partial g]_{\alpha\beta}$ is a quadratic form in the first
derivatives $\partial g$ of $g$
with coefficients polynomial in $g$ and its contravariant associate.

The $\Gamma^{\lambda}$'s, called \emph{harmonicity functions},
are defined as
\begin{equation}
 \Gamma^{\alpha}:=g^{\lambda\mu}\Gamma_{\lambda\mu}^{\alpha}
 \;.
\label{Harmonicity}
\end{equation}
The condition $\Gamma^{\alpha}=0$ expresses that the coordinate
function $x^{\alpha}$ satisfies the wave equation in the metric
$g$.

\subsection{Wave-map gauge}
 \label{ssWmg}

The harmonicity functions are coordinate dependent and only
defined locally in general, whether in space,  or time, or
both. The wave-map gauge, which we are about to define,
provides conditions which are tensorial.
A metric $g$ on a manifold $V$ will be said to be \emph{in
$\hat{g}$-wave-map gauge} if the identity map $V\rightarrow V$
is a harmonic diffeomorphism from the spacetime $(V,g)$ onto
the pseudo-Riemannian manifold ($V,\hat{g})$, with $\hat{g}$ a
given metric on $V$. Recall that a mapping $f:(V,g)\rightarrow
(V,\hat{g})$ is a harmonic map if it satisfies the equation, in
abstract index notation,
\begin{equation}
\hat{\square}f^{\alpha} :=g^{\lambda \mu} (\partial_{\lambda \mu
}^{2}f^{\alpha} -\Gamma_{\lambda \mu}^{\sigma} \partial_{\sigma
}f^{\alpha} +\partial_{\lambda} f^{\sigma} \partial_{\mu} f^{\rho}
\hat{\Gamma}_{\sigma \rho}^{\alpha} )=0\;.
\label{WaveMap}
\end{equation}
In a subset in which $f$ is the identity map defined by
$f^{\alpha}(x)=x^{\alpha}$, the above equation reduces to
$H=0$, where the  \emph{wave-gauge vector} $H$ is given in
arbitrary coordinates by the formula
\begin{equation}
H^{\lambda} :=
g^{\alpha\beta} \Gamma_{\alpha\beta}^{\lambda}-W^{\lambda}
\;, \ \text{with}\
W^{\lambda}:=
g^{\alpha\beta} \hat{\Gamma}_{\alpha \beta}^{\lambda}
\;,
\label{WaveGauge0}
\end{equation}
where $\hat{\Gamma}^{\lambda}_{\alpha\beta}$ are the
Christoffel symbols of the \emph{target} metric $\hat{g}$.
See~\cite{YvonneBook} for a more complete discussion of the
concepts and results in this section.

The following identity has been proved to hold, with
$\hat{D}$ the Riemannian covariant derivative in the metric
$\hat{g}$~\cite[page 163]{YvonneBook},
\begin{equation}
R_{\alpha\beta}\equiv R_{\alpha\beta}^{(H)}+{\frac{1}{2}}(g_{\alpha\lambda
}\hat{D}_{\beta}H^{\lambda}+g_{\beta\lambda}\hat{D}_{\alpha}H^{\lambda
}) ,
\label{RicciHIdentity}
\end{equation}
where $R_{\alpha\beta}^{(H)}(g)$, called the reduced Ricci
tensor of the metric $g$ in $\hat{g}$-wave-map gauge, is a
quasi-linear, quasi-diagonal operator on $g$, tensor-valued,
depending on $\hat{g}$:
\begin{equation}
R_{\alpha\beta}^{(H)}\equiv-{\frac{1}{2}}g^{\lambda\mu}
{\hat{D}}_{\lambda} {\hat{D}}_{\mu}g_{\alpha\beta}+
\hat{f}[g,{\hat{D}}g]_{\alpha\beta}\;,
\label{RicciH}
\end{equation}
where $\hat{f}[g,{\hat{D}}g]_{\alpha\beta}$, independent of the
second derivatives of $g$,  is a tensor quadratic in $\hat Dg$
with coefficients  depending upon  $g$ and  $\hat{g}$, of the
form (see formula (7.7) in chapter VI of~\cite{YvonneBook})
\begin{equation}
P_{\alpha\beta}^{\rho\sigma\gamma\delta\lambda\mu}(g)\hat{D}_{\rho}%
g_{\gamma\delta}\hat{D}_{\sigma}g_{\lambda\mu}+{\frac{1}{2}}g^{\lambda\mu
}\{g_{\alpha\rho}\hat{R}_{\lambda}{}^{\rho}{}_{\beta\mu}+g_{\beta\rho}\hat
{R}_{\lambda}{}^{\rho}{}_{\alpha\mu}\}\;,
\label{fPolynomial}
\end{equation}
with $\hat{R}$ the Riemann curvature tensor of the covariant
derivative $\hat{D}$. We will frequently restrict ourselves to
the case in which the target metric is the Minkowski metric
$\eta$, and then denote by $D$  the covariant derivative. In
this case, and if using coordinates such that  the Minkowski
metric takes the canonical form \eqref{19V10.2}, the reduced
Ricci tensor in wave-map gauge coincides with the one in
harmonic coordinates.

We emphasise that, unless explicitly stated, our computations
are valid for a general $\hat {g}$.

Our main results below assume that $W$ takes the form
(\ref{WaveGauge0}). However, several results apply to a large
class of $W$'s of the
form\footnote{H.Friedrich~\cite{FriedrichCMP} introduced
generalized harmonic coordinates by adding
arbitrary functions to the harmonicity conditions.}
\begin{equation}
W^{\lambda} := g^{\alpha\beta}\hat{\Gamma}^{\lambda}_{\alpha\beta}
+\hat{W}^{\lambda}
\;,
\end{equation}
where $\hat{W}$ is a vector which may depend
upon $g$, $\hat{g}$ and possibly some other fields, but not
upon the derivatives of $g$; the relevant restrictions are
pointed out in (\ref{6VI.21new})-(\ref{6VI.22new}),
(\ref{8VI.21})-(\ref{8VI.22}) and
(\ref{8VI.23})-(\ref{8VI.24}). The reduced Ricci tensor becomes
then
\begin{equation}
R^{(H,\hat{W})}_{\alpha\beta} \equiv
R^{(H)}_{\alpha\beta} +
\frac{1}{2}\left(
g_{\alpha\lambda} \hat{D}_{\beta} \hat{W}^{\lambda} +
g_{\beta\lambda} \hat{D}_{\alpha} \hat{W}^{\lambda} \right)
\; .
\end{equation}
However, unless explicitly indicated otherwise we assume
that $\hat W$ is identically zero.

Another interesting generalization (see, e.g.,~\cite{Pretorius}
and references therein) has been inspired by numerical
simulations: if one uses the decomposition
(\ref{RicciHIdentity}), the identities $H^{\lambda}\equiv 0$ are
only obeyed to some finite precision and $H^{\lambda}$ shows a
generic tendency to deviate from zero. Attempts to cure that
have been made by introducing \emph{constraint damping}
terms~\cite{ConstraintDamping}, changing the choice of the
reduced Ricci tensor $R^{(H)}_{\alpha\beta}$ to
\begin{equation}
R^{(H)}_{\alpha\beta} + \frac{1}{2}\epsilon
(n_{\alpha}g_{\beta\lambda}
+n_{\beta}g_{\alpha\lambda}
-\frac{2\rho}{n-1}n_{\lambda}g_{\alpha\beta}) H^{\lambda} \; ,
\label{DampingRicci}
\end{equation}
or, equivalently, the reduced Einstein tensor
$S^{(H)}_{\alpha\beta}$ to
\begin{equation}
S^{(H)}_{\alpha\beta} + \frac{1}{2}\epsilon
(n_{\alpha}g_{\beta\lambda}
+n_{\beta}g_{\alpha\lambda}
+(\rho-1)n_{\lambda}g_{\alpha\beta}
) H^{\lambda} \; ,
\label{DampingEinstein}
\end{equation}
where $n^{\mu}$ is a vector field and $\epsilon$ is a small
positive constant which controls the rate of damping of the
gauge conditions. (As shown in~\cite{ConstraintDamping} the
constant $\rho$ must also be positive to have damping.) We will
show that the damping terms are consistent with our analysis.
For definiteness we will assume that $n^{\mu}$ has been
prescribed, though certain more general situations can easily
be incorporated into our scheme.

\section{Characteristic Cauchy problem}
 \label{SCagnacDossa}

The Einstein equations in wave-map gauge with source a given
stress-energy tensor $T$,
\begin{equation}
R_{\alpha\beta}^{(H)}\equiv-{\frac{1}{2}}g^{\lambda\mu}
\hat{D}_{\lambda}\hat{D}_{\mu}g_{\alpha\beta}
+\hat{f}[g,\hat{D}g]_{\alpha\beta}=\rho_{\alpha\beta}\;,
\qquad
\rho_{\alpha\beta}:=T_{\alpha\beta}
-\frac{\text{tr}_{g}T}{n-1}g_{\alpha\beta}\;,
\end{equation}
form a quasi-diagonal,
hyperquasi-linear\footnote{{\footnotesize That is, the
principal second order terms are diagonal and their
coefficients depend on the unknowns but not on their
derivatives}} system of wave equations for the Lorentzian
metric $g$. The Cauchy problem for such systems with data on a
spacelike $n$-manifold $M_{0}$ is well understood, the Cauchy
data are the values of the unknown on $M_{0}$ and their first
transversal derivatives. When $M_{0}$ is not spacelike  in the
spacetime $(V,g)$ which we are going to construct, the problem
is more delicate. It is known since Leray's work
(see~\cite{Leray}),
\footnote{See~\cite{NicolasCIVPCRAS,HormanderCIVP} for a
treatment of generalized solutions of a linear wave equation
with data on a achronal Lipschitz section of a spacetime with
compact spacelike sections.}
that the Cauchy problem for a
linear hyperbolic system on a given globally hyperbolic
spacetime is well posed if $M_{0}$ is ``compact towards the
past''; that is, is intersected along a compact set by the past
of any compact subset of $V$. However the data depend on the
nature of $M_{0}$ and the formulation of a theorem requires
more care. In the case where $M_{0}$ is a null hypersurface,
except at some singular subsets (intersection in the case of
two null hypersurfaces, vertex in the case of a null cone) the
data is only the function, not its transversal derivative, with
some hypotheses which need to be made as one approaches the
singular {set}.

In this article we concentrate on the case of the light cone,
though most of {the calculations of our equations }apply to any
null hypersurface.

\subsection{The Cagnac-Dossa theorem}
 \label{ssCagDos}

To prove the local existence of solutions of Einstein equations
with data on a characteristic cone we use a wave-map gauge, and
an existence theorem for solutions of quasi linear wave
equations with such data.

The proof of an existence theorem for such a characteristic
quasilinear Cauchy problem is inspired by the Leray's idea of
the linear case, applied to the characteristic cone and linear
wave equations in Cagnac~\cite{Cagnac1980} (cf. also
Cagnac~\cite{Cagnac73join} and Friedlander~\cite{Friedlander});
extended to the quasilinear case by
Cagnac~\cite{Cagnac1981}.
The most complete results appear in
Dossa's thesis, the second part of which is published in
abbreviated form in~\cite{Dossa97}. One considers
quasi-diagonal, quasi-linear second order system for a set $v$
of scalar functions $v^{I}$, $I=1,\ldots,N$, on
$\mathbf{R}^{n+1}$ of the form
\begin{eqnarray}
 \label{22XI.1}
 &
A^{\lambda\mu}(y,v)\partial_{\lambda\mu}^{2}v+f(y,v,\partial v)=0\;,
\quad y=(y^{\lambda})\in\mathbf{R}^{n+1},
\quad \lambda,\mu=0,1,\ldots,n\geq2
 \;,
 \phantom{xxxx}
&
\\
&
 v =(v^{I}),
 \quad
 \partial v =(\frac{\partial v^{I}}{\partial y^{\lambda}})
 \quad
 \partial_{\lambda\mu}^{2}v
 =(\frac{\partial^{2}v^{I}}{\partial y^{\lambda}\partial y^{\mu}}),
 \quad
 f=(f^{I}),
 \quad
 I=1,\ldots,N\;.
 &
\end{eqnarray}
The initial data
\begin{equation}
 \label{22XI.2}
\overline{v}:=v|_{C_{O}}=\phi
\end{equation}
are given on a subset, including its vertex $O$, of a
characteristic cone $C_{O}$. Throughout this work a bar over an
object denotes the restriction of that object to $C_{O}$.

Cagnac and Dossa assume that there is a domain $U\subset
{\mathbf{R}}^{n+1}$ where $C_{O}$ is represented by the
following cone$^{\mbox{\scriptsize~\ref{f19V10.1}}}$ in
$\mathbf{R}^{n+1}$ (compare Figure~\ref{Fcones} below)
\begin{equation*}
C_{O}:=\{x^{0}\equiv r-y^{0}=0\}\;,\quad
r^{2}:=\sum_{i=1,\ldots,n}(y^{i})^{2}.
\end{equation*}
The initial data $\phi$ is assumed to be defined on the domain
\begin{equation}
 \label{22XI.3}
 C_{O}^{T}:=C_{O}\cap\{0\leq t:=y^{0}\leq T\}
 \;.
\end{equation}
They denote
\begin{equation}
Y_{O}:=\{t:=y^{0}>r\}\;,
\quad
\text{the interior of\ }C_{O}\;,\text{ \ }%
Y_{O}^{T}:=Y_{O}\cap\{0\leq y^{0}\leq T\}\;.
\end{equation}
They also set
\begin{align}
\Sigma_{\tau} & :=C_{O}\cap\{y^{0}=\tau\}\;,
\quad\text{diffeomorphic to\ } S^{n-1}\;,\\
S_{\tau} &  :=Y_{O}\cap\{y^{0}=\tau\}\;,
\quad\text{diffeomorphic to the ball\ }B^{n-1}\;.
\end{align}

We will use the following theorem given in the first part of
Dossa's thesis: it assumes some more differentiability of the
data than the theorem in~\cite{Dossa97}, but it is simpler to
apply to the Einstein equations whose initial data must satisfy
wave-map-gauge constraints, and is sufficient for us here.

Remark that these results assume more regularity from the data
on the cone than the regularity obtained for the solution, a
constant fact in charateristic Cauchy problem
already seen in other contexts.

\begin{theorem}
 \label{ThDossanew1}
Consider the problem (\ref{22XI.1})--(\ref{22XI.2}).
Suppose that:

1. There is an open set $U\times W\subset\mathbf{R}^{n+1}\times\mathbf{R}^{N}%
$, $Y_{O}^{T}\subset U$ where the functions $A^{\lambda\mu}$
are $C^{2m+2}$ in $y$ and $v$. The function $f$ is $C^{2m}$ in
$y\in U$ and $v\in W$ and in $\partial
v\in\mathbf{R}^{(n+1)N}$.

2. For $(y,v)\in U\times W$ the quadratic form $A^{\lambda\mu}$
has Lorentzian signature; it takes the Minkowskian values
for $y=0$ and $v=0$.

3. a. The function $\phi$ takes its values in $W$. The cone
$C_{O}^{T}$ is null for the metric $A^{\lambda\mu}(y,\phi)$ and
$\phi(O)=0$.

b. $\phi$ is the trace on $C_{O}^{T}$ of a $C^{2m+2}$ function
in $U$.

Then there is a number $0<T_{0}\le T < +\infty$, $T_{0}=T$ if
$\phi$ is small enough in $C^{2m+2}$ norm, such that the
problem (\ref{22XI.1})--(\ref{22XI.2}) has one and only one
solution $v$ in $Y_{O}^{T_{0}}$, such that

1. If $m>\frac{n}{2}+1$, $v\in K^{m+1}(Y_{O}^{T_{0}})\cap
F^{m+1}(Y_{O}^{T_{0}})$, in particular $|\partial v|$ is
bounded.

2. If $m=\infty$, $v$ can be extended by continuity to a
$C^{\infty}$ function defined on a neighbourhood
of the origin in ${\mathbf{R}}^{N+1}$.

\end{theorem}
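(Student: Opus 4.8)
The plan is to follow Leray's strategy for the linear characteristic Cauchy problem, adapted to the light cone by Cagnac and extended to the quasilinear case by Dossa. First I would normalize the geometry: since $A^{\lambda\mu}(0,0)$ is Minkowskian and $C_{O}$ is null for $A^{\lambda\mu}(y,\phi)$, a smooth change of coordinates lets us assume that $C_{O}=\{x^{0}\equiv r-y^{0}=0\}$ is the Euclidean light cone and that the coefficients stay close to those of the flat wave operator $\Box$ on a small enough region $Y_{O}^{T}$. The solution is then built by a Picard-type iteration: take $v_{(0)}$ to be a fixed $C^{2m+2}$ extension of $\phi$, and let $v_{(k+1)}$ solve the \emph{linear} characteristic problem $A^{\lambda\mu}(y,v_{(k)})\,\partial^{2}_{\lambda\mu}v_{(k+1)}+f(y,v_{(k)},\partial v_{(k)})=0$ in $Y_{O}^{T}$ with $v_{(k+1)}|_{C_{O}}=\phi$. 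Everything then reduces to (a) solvability and a priori estimates for each linear problem, uniform in $k$, and (b) contraction of the map $v_{(k)}\mapsto v_{(k+1)}$ in a lower-order norm.

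The core analytic input is an energy inequality for a linear quasi-diagonal operator $Lv:=a^{\lambda\mu}(y)\partial^{2}_{\lambda\mu}v+b^{\lambda}(y)\partial_{\lambda}v+c(y)v$ with Lorentzian $a^{\lambda\mu}$ close to Minkowski and data on $C_{O}$. I would obtain it by the multiplier method: contract the stress tensor built from $a^{\lambda\mu}$ and $\partial v$ with a timelike vector field (essentially $\partial_{y^{0}}$), integrate the resulting divergence identity over the truncated solid cone $Y_{O}^{\tau}$, and apply Stokes' theorem. Because $C_{O}$ is characteristic, the boundary integrand over $C_{O}$ is a nonnegative quadratic form in the derivatives of $v$ that are tangential to $C_{O}$ — precisely the derivatives fixed by $\phi$ — while the integrand over the spacelike slice $S_{\tau}$ controls the full $H^{1}$ energy there. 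Commuting $L$ with vector fields adapted to the cone and absorbing the first- and zeroth-order terms by Gronwall gives control of $v$ in the weighted spaces $K^{m+1}(Y_{O}^{\tau})\cap F^{m+1}(Y_{O}^{\tau})$ by the $C^{2m+2}$ norm of the extension of $\phi$, for $\tau\le T_{0}$. The hypothesis $m>\tfrac{n}{2}+1$ enters through Sobolev embedding, so that $|\partial v|$ stays bounded and the iterates remain in $W$ with $A^{\lambda\mu}(y,v_{(k)})$ Lorentzian; and $T_{0}=T$ when $\phi$ is $C^{2m+2}$-small because the quadratic nonlinearities then never push the estimate out of its range of validity.

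The genuine difficulty, where I expect to spend most of the effort, is the singularity of $C_{O}$ at the vertex $O$: the solid cone $Y_{O}^{\tau}$ pinches to a point, the slices $S_{\tau}$ degenerate, and naive constants in the energy inequality diverge. This is exactly why one works in the weighted spaces $K^{m+1},F^{m+1}$, whose norms are tuned to the natural scaling near $O$ (weights by powers of $y^{0}$, i.e.\ of the distance to the vertex); the energy estimates must be derived with these weights from the outset, using $\phi(O)=0$ so that $\phi$ is small on small cones, together with a dyadic rescaling argument near $O$ to get constants independent of proximity to the vertex. Once these uniform weighted estimates are available, boundedness of $\{v_{(k)}\}$ in $K^{m+1}\cap F^{m+1}$ and its Cauchy property in $K^{1}\cap F^{1}$ follow in the usual way — the difference $v_{(k+1)}-v_{(k)}$ solves a linear problem with zero data on $C_{O}$ and a source small in low norm — yielding a limit $v$ solving the quasilinear problem; uniqueness is the same energy estimate applied to the difference of two solutions with identical data. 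Finally, for $m=\infty$ one runs the construction for every finite $m$ on a fixed $Y_{O}^{T_{0}}$, obtains $v\in\bigcap_{m}K^{m+1}\cap F^{m+1}$, upgrades interior regularity by differentiating the equation, and then — using that $\phi$ is the trace of a $C^{\infty}$ function and a removable-singularity argument at $O$ based on the weighted estimates and the Euclidean structure of the cone there — concludes that $v$ extends to a $C^{\infty}$ function on a full neighbourhood of the origin.
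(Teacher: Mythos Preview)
The paper does not prove this theorem: it is quoted as a black box from Dossa's thesis (with antecedents in Cagnac's work), so there is no ``paper's own proof'' to compare against. Your outline --- Picard iteration on the linearized problem, multiplier-type energy estimates on the truncated solid cone, and the weighted spaces $K^{m+1},F^{m+1}$ to absorb the vertex singularity, with $m>\tfrac{n}{2}+1$ entering via Sobolev embedding --- is indeed the strategy of the Cagnac--Dossa papers the authors cite, so in spirit you are on the right track.

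That said, what you have written is a plan rather than a proof, and the step you yourself flag as ``where I expect to spend most of the effort'' is precisely the heart of the matter: obtaining energy estimates that are \emph{uniform as the slices $S_\tau$ shrink to the vertex}, with constants that do not blow up. This requires not just ``weights tuned to scaling'' and a vague ``dyadic rescaling argument,'' but a careful construction of a parametrix (or approximate solution) near $O$ using the $C^{2m+2}$ extension of $\phi$, together with commutator estimates adapted to the conical geometry; the existence part of the linear problem on the cone is itself nontrivial and is what Cagnac's 1980 paper is devoted to. Your sketch also glosses over why $\phi$ being the trace of a $C^{2m+2}$ function (rather than merely smooth on $C_O\setminus\{O\}$) is essential --- it is what allows one to subtract a smooth approximate solution and reduce to a problem with small, well-controlled data near the tip. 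If you intend to actually carry this out rather than cite it, those are the places where real work remains.
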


The spaces $K^{m}(Y_{O}^{T})$ and $F^{m}(Y_{O}^{T})$\ are
Banach spaces of sets of functions on $Y_{O}^{T}$\ which
together with their time and space
derivatives of order less or equal to
$m$ admit a square integrable restriction to each $S_{t}$\ and
for which, respectively, the following norms are finite
\begin{equation*}
||v||_{K^{m}(Y_{O}^{T})}:=\sum_{I=1,\ldots,N}
\Big\{\int_{0}^{T}t^{-n}\sum_{0\leq|k|\leq m}
||\partial^{k}v^I||_{L^{2}(S_{t})}^{2}\;dt\Big\}^{\frac{1}{2}}
\; ,
\end{equation*}
\begin{equation*}
||v||_{F^{m}(Y_{O}^{T})}:=\sum_{I=1,\ldots,N}\;\;
\sup_{0\leq t\leq T}t^{-\frac{n}{2}}
\sum_{0\leq |k|\leq m}||\partial^{k}v^I||_{L^{2}(S_{t})}
\; .
\end{equation*}
The Euclidean metric, $e:=\sum_{i}(dy^{i})^{2}$, is used to
define the measure
on $S_{t}$\ and as usual $k$ denotes a multi-index,\ $k:=(k_{0},k_{1}%
,\ldots,k_{n})$, $\partial^{k}$ the derivation of order $|k|:=k_{0}%
+k_{1}+\ldots+k_{n}$:
\begin{equation}
{\partial}^{k}:=(\partial_{ {0}})^{k_{0}}(\partial_{ {1}})^{k_{1}}%
\ldots(\partial_{ {n}})^{k_{n}}\;,\quad\text{with}\ \partial_{\alpha}%
:=\frac{\partial}{\partial y^{\alpha}}.\label{30VI.2t}%
\end{equation}

\subsection{Einstein equations in the wave-map gauge}

We know that the wave-map gauge reduced Einstein equations on a
manifold $V$ are tensorial equations under coordinate changes,
so that any coordinates can be used.  Note that the principal
part of the wave-map reduced Einstein equations is independent
of the target manifold, and so the Einstein equations on
$\mathbf{R}^{n+1}$ in wave-map gauge are of the form
(\ref{22XI.1}) for an unknown $h$, when we set $g\equiv\eta+h$
and work in the $y$ coordinates where the Minkowski metric
takes the standard form
\begin{equation*}
\eta\equiv-(dy^{0})^{2}+\sum_{i}(dy^{i})^{2}.
\end{equation*}
As an application of Theorem~\ref{ThDossanew1} we obtain
(see also \cite{DossaAHP} in space-dimension $n=3$):

\begin{theorem}
\label{Existence1} (Existence for the wave-gauge reduced
Einstein equations.) Let
$\overline{g}=\overline{\eta}+\overline{h}$ be a quadratic form
on $C_{O}^{T}$ such that the components
${\overline{h}_{\mu\nu}}$ in the coordinates $y^{\mu}$ satisfy
the hypotheses of the Existence Theorem~\ref{ThDossanew1}.
Then, if the source $\rho$ is of class $C^{2m}$ in $Y_{O}^{T}$,
there exists $T_{0}>0$ such that the wave-gauge reduced
Einstein equations\footnote{{\footnotesize We use abstract
index notation when it helps formulate properties of geometric
objects.}} $R_{\alpha\beta}^{(H)}=\rho_{\alpha\beta}$ admit one
and only one solution on $Y_{O}^{T_{0}}$, a Lorentzian metric
$g^{(H)}={\eta}+h$, with $h$ satisfying the conclusions of that
theorem.
\end{theorem}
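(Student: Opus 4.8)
The statement is essentially a corollary of Theorem~\ref{ThDossanew1}: the plan is to check that the wave-map gauge reduced Einstein equations, written for the unknown $h$ with $g = \eta + h$, are of the form \eqref{22XI.1} and satisfy the three hypotheses of that theorem, and then to invoke it. First I would work in the $y^{\mu}$ coordinates in which $\eta$ has the canonical form, so that $\hat{D} = \partial$; then \eqref{RicciHIdentity}--\eqref{RicciH} with $\hat{g} = \eta$, together with $\partial^{2}_{\lambda\mu}\eta_{\alpha\beta} = 0$, give
\[
R^{(H)}_{\alpha\beta} - \rho_{\alpha\beta} = -\tfrac{1}{2}\, g^{\lambda\mu}\, \partial^{2}_{\lambda\mu} h_{\alpha\beta} + \hat{f}[g,\partial h]_{\alpha\beta} - \rho_{\alpha\beta}\;.
\]
Multiplying by $-2$ puts the equation $R^{(H)}_{\alpha\beta} = \rho_{\alpha\beta}$ into the form \eqref{22XI.1} with unknown $v = (h_{\mu\nu})$ of $N = (n+1)(n+2)/2$ components, principal coefficient $A^{\lambda\mu}(y,v) = g^{\lambda\mu}$, and $f = -2\,\hat{f}[g,\partial h] + 2\rho$.

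Next I would verify hypotheses 1 and 2. The coefficient $A^{\lambda\mu} = g^{\lambda\mu}$ is the matrix inverse of $\eta + h$, hence a rational function of the entries of $h$ with denominator $\det(\eta + h)$; it is therefore $C^{\infty}$ — in particular $C^{2m+2}$ — and independent of $y$, on the open set $U \times W \ni (0,0)$ where $\eta + h$ is invertible of Lorentzian signature. On $W$ it has Lorentzian signature by construction, and at $v = 0$ it reduces to $\eta^{\lambda\mu}$. The lower-order term $f = -2\,\hat{f}[g,\partial h] + 2\rho$ is, by \eqref{fPolynomial} with $\hat{g} = \eta$ (so $\hat{R} = 0$), a quadratic form in $\partial h$ whose coefficients are polynomial in $h$ and $g^{\lambda\mu}$, hence $C^{\infty}$ in $(v,\partial v)$, plus the source $\rho_{\alpha\beta}$, which is $C^{2m}$ by hypothesis; so $f$ is $C^{2m}$ in $(y,v,\partial v)$. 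The two features that make the identification work — that a single scalar operator $g^{\lambda\mu}\partial^{2}_{\lambda\mu}$ acts on each component $h_{\alpha\beta}$ (quasi-diagonality), and that its coefficients depend on $h$ but not on $\partial h$ (hyperquasi-linearity) — are exactly what \eqref{RicciH} provides.

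Hypothesis 3 is the point at which the standing assumption on $\overline{g}$ is used, encoded in the hypothesis that $\overline{h}_{\mu\nu}$ satisfies the conditions of Theorem~\ref{ThDossanew1}: $\phi := \overline{h}$ takes values in $W$; the cone $C_{O}^{T}$ is null for $A^{\lambda\mu}(y,\phi)$, which is precisely the requirement that $C_{O}$ be a characteristic cone of $g$; $\phi(O) = \overline{h}(O) = 0$, i.e. $\overline{g}(O)$ is Minkowskian; and $\phi$ is the trace on $C_{O}^{T}$ of a $C^{2m+2}$ function on $U$. With all hypotheses in hand, Theorem~\ref{ThDossanew1} produces a number $0 < T_{0} \le T$ — equal to $T$ if $\overline{h}$ is small in $C^{2m+2}$ norm — and a unique solution $h$ of the system on $Y_{O}^{T_{0}}$, lying in $K^{m+1}(Y_{O}^{T_{0}}) \cap F^{m+1}(Y_{O}^{T_{0}})$ when $m > \tfrac{n}{2} + 1$, or extending to a $C^{\infty}$ function near the origin when $m = \infty$. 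Since the solution stays in $W$, $g^{(H)} = \eta + h$ is a Lorentzian metric on $Y_{O}^{T_{0}}$ solving $R^{(H)}_{\alpha\beta} = \rho_{\alpha\beta}$, as claimed.

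The only real obstacle is the bookkeeping behind \eqref{RicciH}--\eqref{fPolynomial}: one must confirm the quasi-diagonal, hyperquasi-linear structure of the principal part and the smoothness of its coefficients in $h$, and one must be careful to carry out the analysis in the $y$-coordinates, where the canonical form of $\eta$ makes $\hat{D} = \partial$ (the adapted null coordinates would destroy this). After that, matching hypothesis 3 is purely a translation of ``characteristic cone'' and ``Minkowskian at the vertex'' into conditions on $A^{\lambda\mu}(y,\phi)$ and $\phi(O)$; there is no analytic content beyond Theorem~\ref{ThDossanew1} itself.
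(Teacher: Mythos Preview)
Your proposal is correct and follows precisely the route the paper takes: the paper presents Theorem~\ref{Existence1} simply ``as an application of Theorem~\ref{ThDossanew1}'', after noting in the preceding paragraph that in the $y$--coordinates (where $\eta$ is canonical) the wave-map gauge reduced Einstein equations for $h$ in $g=\eta+h$ are of the form~(\ref{22XI.1}). You have supplied exactly the verification of hypotheses 1--3 that the paper leaves implicit, with the correct identification $A^{\lambda\mu}=g^{\lambda\mu}$, $f=-2\hat f+2\rho$, and the correct reading of hypothesis~3 as encoding the null-cone and vertex conditions on $\overline h$.
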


The reader should keep in mind that a solution of the Einstein
equations in wave-map gauge as above will be a solution of the
original Einstein equations if and only if the wave-map gauge
vector $H$ defined by the constructed spacetime metric is
such that $Ricc=Ricc^{(H)}$,
condition satisfied in particular if $H=0$.

The following theorem is a straightforward adaptation of a
theorem proved long ago {by one of us}~\cite{ChBActa} for
spacelike Cauchy data.

\begin{theorem}
\label{Existence2} Let $g^{(H)}$ be a $C^{3}$ Lorentzian
metric, solution on $Y_{O}^{T}$ of the Einstein equations in
wave-map gauge $S_{\alpha\beta}^{(H)}=T_{\alpha\beta}$. Then
$g^{(H)}$ is a solution on $Y_{O}^{T}$ of the full Einstein
equations $S_{\alpha\beta}=T_{\alpha\beta}$ if the wave-gauge
vector vanishes on $C_{O}^{T}$ and the source $T$ satisfies the
conservation law $\nabla_{\alpha}T^{\alpha\beta}=0$.
\end{theorem}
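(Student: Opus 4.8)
The plan is to reproduce, on the characteristic cone, the classical argument for spacelike data (compare \cite{ChBActa}). First I would record the Einstein-tensor version of the identity \eqref{RicciHIdentity}. Tracing \eqref{RicciHIdentity} gives $R\equiv R^{(H)}+\hat D_{\lambda}H^{\lambda}$, so that for \emph{any} metric $g$ one has
\begin{equation*}
S_{\alpha\beta}\equiv S^{(H)}_{\alpha\beta}+\tfrac12\big(g_{\alpha\lambda}\hat D_{\beta}H^{\lambda}+g_{\beta\lambda}\hat D_{\alpha}H^{\lambda}-g_{\alpha\beta}\hat D_{\lambda}H^{\lambda}\big)\;,
\end{equation*}
where $S^{(H)}_{\alpha\beta}$ is the reduced Einstein tensor associated with $R^{(H)}_{\alpha\beta}$ and $H$ is the wave-gauge vector of $g$.

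Next I would apply the divergence $\nabla^{\alpha}$ of the constructed metric $g=g^{(H)}$ to this identity. The left-hand side vanishes identically by the contracted Bianchi identity $\nabla^{\alpha}S_{\alpha\beta}\equiv 0$. On the right-hand side, the hypothesis $S^{(H)}_{\alpha\beta}=T_{\alpha\beta}$ together with the conservation law $\nabla_{\alpha}T^{\alpha\beta}=0$ gives $\nabla^{\alpha}S^{(H)}_{\alpha\beta}=0$. Hence the divergence of the bracketed term vanishes, and this is a linear, homogeneous, second-order differential system for the components $H^{\lambda}$. The step I expect to be the main obstacle is to verify that this system is a genuine hyperbolic system with the right principal part: using $\nabla g=0$, the principal-symbol contributions coming from $\hat D_{\beta}H^{\lambda}$ and from $\hat D_{\lambda}H^{\lambda}$ cancel against each other, leaving
\begin{equation*}
g^{\lambda\mu}\hat D_{\lambda}\hat D_{\mu}H^{\beta}+(\text{terms linear in }H\text{ and its first derivatives})=0\;,
\end{equation*}
a quasi-diagonal wave system on $Y_{O}^{T}$ whose characteristic cones are the null cones of $g$. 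One should also note that $g^{(H)}\in C^{3}$ makes $H\in C^{2}$ and the coefficients of this system $C^{1}$, which suffices for the uniqueness step.

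Finally I would invoke uniqueness for the characteristic Cauchy problem for this linear wave system on the globally hyperbolic domain $Y_{O}^{T}$. By hypothesis $\overline{H}:=H|_{C_{O}^{T}}=0$, and $C_{O}^{T}$ is precisely the characteristic past boundary of $Y_{O}^{T}$; the energy estimates underlying the Leray--Cagnac--Dossa theory (\cite{Leray,Cagnac1980,Dossa97}) then force $H\equiv 0$ throughout $Y_{O}^{T}$. A little care is required near the vertex $O$, but since $H$ already vanishes on all of $C_{O}^{T}$, including a neighbourhood of $O$ in $C_{O}$, a standard limiting/energy argument closes this gap and no existence theorem is needed. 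With $H\equiv 0$ the bracketed term in the displayed identity vanishes, whence $S_{\alpha\beta}=S^{(H)}_{\alpha\beta}=T_{\alpha\beta}$ on $Y_{O}^{T}$, which is the assertion.
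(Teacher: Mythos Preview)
Your proposal is correct and follows essentially the same route as the paper: derive the Einstein-tensor identity \eqref{EinsteinHIdentity}, take the $\nabla$-divergence and use Bianchi together with $\nabla_{\alpha}T^{\alpha\beta}=0$ to obtain the homogeneous system \eqref{WaveH}, observe that $\nabla_{\alpha}\hat D^{\beta}H^{\alpha}-\nabla^{\beta}\hat D_{\alpha}H^{\alpha}$ is first order so that the principal part is the wave operator of $g$, and conclude $H\equiv 0$ from $\overline{H}=0$ by an energy inequality. Your additional comments on the $C^{3}\Rightarrow H\in C^{2}$ regularity and on the vertex only make explicit what the paper leaves implicit.
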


\noindent\begin{proof}
The identity (\ref{RicciHIdentity}) implies (indices raised with $g)$
\begin{equation}
S^{\alpha\beta}\equiv S^{\alpha\beta(H)}+
{\frac{1}{2}}(
    \hat{D}^{\beta}H^{\alpha}
   +\hat{D}^{\alpha}H^{\beta}
   -g^{\alpha\beta}\hat{D}_{\lambda}H^{\lambda}
).\label{EinsteinHIdentity}
\end{equation}
Hence the equations in wave-map gauge
$S_{\alpha\beta}^{(H)}=T_{\alpha\beta}$ and the Bianchi
identities imply that $H$ satisfy the quasidiagonal linear
homogeneous system of second order equations
\begin{equation}
 \nabla_{\alpha}\hat{D}^{\alpha}H^{\beta}
+\nabla_{\alpha}\hat{D}^{\beta}H^{\alpha}
-\nabla^{\beta}\hat{D}_{\alpha}H^{\alpha}=0 \;,
\label{WaveH}
\end{equation}
whose principal terms are wave equations in the metric $g$
since $\nabla_{\alpha} \hat{D}^{\beta} H^{\alpha} - \nabla^{\beta}
\hat{D}_{\alpha} H^{\alpha}$ is at most first order in $H$. If $g$
is $C^{3}$, $H$ is $C^{2}$, and an energy inequality applied to
this linear equation implies easily that $H=0$ in $Y_{O}^{T}$
if $\overline{H}:=H|_{C_{O}^{T}}=0$.
\end{proof}

\medskip

When the support of the initial data is a spacelike manifold
$M_{0}$ the vanishing of $H$ is guaranteed when the constraint
equations
$(S_{\alpha\beta}-T_{\alpha\beta})n^{\beta}|_{M_{0}}=0$ are
satisfied by the initial data, where $n^{\beta}$ is the field
of unit normals to $M_{0}$ in the space-time one seeks to
construct. One of the main goals of this work is to present a
method to construct initial data on the light-cone which
ensures the vanishing of $\overline{H}$.

\section{Null hypersurfaces, adapted coordinates}
 \label{nullhypersurfaces}

The obtention, and solution, of equations to be satisfied by
initial data to ensure the vanishing of $\overline{H}$ is simpler in
coordinates adapted to the geometry of the null initial
manifold.

\subsection{Adapted coordinates}
\label{NullCoordinates}

Let $M_{0}$ be a hypersurface in $\mathbf{R}^{n+1}$ which will
be a null submanifold of the spacetime $(V,g)$ with $V$ some
domain of $\mathbf{R}^{n+1}$. $M_{0}$ is generated by geodesic
null curves, called rays.  In a manner classical for null
surfaces we choose coordinates $x^{\alpha}$ so that $M_{0}$ is
given by the equation $x^{0}=0$, and on $M_{0}$ the coordinate
$x^{1}$ is a parameter along the rays, denoting by $\ell$ the
tangent vector $\frac{\partial}{\partial x^{1}}$. We assume that
the subspaces $\Sigma_{x^{1}}:\{x^{1}$=constant, $x^{0}$=0\} are
spacelike and diffeomorphic to the same $n-1$ manifold $\Sigma$,
except possibly for $\Sigma_{0}$ which reduces to a point in
the case of a characteristic cone. We denote by $x^{A}$ local
coordinates on $\Sigma$. We have $\ell^{0}=0$, $\ell^{1}=1$,
$\ell^{A}=0$.

The assumption that $M_{0}$, $x^{0}=0$, is a null surface for $g$
is equivalent to saying that $\overline{g}_{11}=0$. The covariant
vector $n:={\rm grad} \,x^{0}$, with $x^{0}=0$ the  equation of
$M_{0}$, is a null vector normal and tangent to $M_{0}$ with
components $n_{0}=1$, $n_{1}=n_{A}=0$.  By uniqueness of null
directions tangent to a light cone we have also $\ell_{A}=0$ and
hence, using that $\ell^{\alpha}=\delta^{\alpha}_{1}$,
$\overline{g}_{1A}=0$. Then the trace on $M_{0}$ of the spacetime
metric reduces in the $x^{\alpha}$ coordinates to
(we put an overbar to denote restriction to $M_{0}$ of spacetime
quantities)
\begin{equation}
\overline{g}:=g|_{x^{0}=0}\equiv
 \overline{g}_{00}(dx^{0})^{2}
+2\nu_{0}dx^{0}dx^{1}
+2\nu_{A}dx^{0}dx^{A}
+\overline{g}_{AB}dx^{A}dx^{B}
\;,
\label{null2}
\end{equation}
where
\begin{equation}
\nu_{0}:=\overline{g}_{01},\quad \nu_{A}:=\overline{g}_{0A},
\end{equation}
We remark that the $\overline{g}_{AB}$ are the non zero components of the
quadratic form $\tilde{g}$ induced by $g$ on $M_{0}$ by the identity map.
They define an $x^{1}$-dependent Riemannian metric on $\Sigma$
\begin{equation}
\tilde{g}_{\Sigma}:=\overline{g}_{AB}dx^{A}dx^{B},\quad A,B=2,\ldots,n.
\label{null4}
\end{equation}

The following identities hold on $M_{0}$, because
$\overline{g}_{\alpha\beta}$ and $\overline{g}^{\alpha\beta}$
are inverse matrices.

\begin{equation}
\overline{g}^{00}\equiv \overline{g}^{0A}\equiv 0, \quad
\nu^{0}:=\overline{g}^{01}=\frac{1}{\nu_{0}} ,
\label{null5}
\end{equation}
\begin{equation}
\overline{g}^{AB}\equiv \tilde{g}^{AB},
\text{ \ with\ }
\tilde{g}^{AB}\text{ the inverse matrix of } \overline{g}_{AB}.
\label{null6}
\end{equation}
We denote
\begin{equation}
\nu^{B}:=\overline{g}^{AB}\nu_{A}
\label{null7}
\end{equation}
then
\begin{equation}
\overline{g}^{A1}\equiv -\nu^{0}\nu^{A},\quad
\overline{g}^{11}\equiv
-(\nu^{0})^{2}\overline{g}_{00}+(\nu^{0})^{2}\nu^{A}\nu_{A}.
\label{null8}
\end{equation}

\begin{remark} {\rm
We consider a null adapted coordinates condition for the trace $\overline{g}$
but, in contradistinction with Bondi-Sachs and other authors, we do not
assume these coordinates for the spacetime metric. Such an assumption is a
gauge condition which can be considered as the analogous for the
characteristic Cauchy problem to the zero shift and constant lapse of the
$n+1$ classical decomposition, gauge which is not well adapted to the
well posedness of the evolution problem.
}\end{remark}

In Appendix~\ref{A7VI.1} we collect formulae useful for
explicit calculations, such as the trace on
$M_{0}$ of the Christoffel symbols of $g$, etc.

See~\cite{galloway-nullsplitting,GourgoulhonJaramillo,JKCPRD,
VinceJimcompactCauchyCMP}
for various useful results concerning null surfaces.

\subsection{Characteristic cones}

\subsubsection{General properties}
\label{generalproperties}

It is no geometric restriction\footnote{See
footnote~\ref{normalcone} and details in section~\ref{ssBcv}}
to assume that in a neighbourhood of its vertex the
characteristic cone\footnote{A cone is a topological manifold
but it is not differentiable at its vertex} of the spacetime we
are looking for is represented in some admissible coordinates
$y:=(y^{\alpha})\equiv (y^{0}$, $y^{i}$, $i=1,\ldots,n)$ of
$\mathbf{R}^{n+1}$ by the equation of a Minkowskian cone with
vertex $O$,
\begin{equation}
r-y^{0}=0\;,\quad  r:=\big\{\sum_{i}(y^{i})^{2}\big\}^{\frac{1}{2}}.
\end{equation}
Given the coordinates $y^{\alpha}$ we can define coordinates
$x^{\alpha}$ on $\mathbf{R}^{n+1}$ adapted to the null cone
$C_{O}$ as we did for a general null surface by setting
\begin{equation}
 \label{1IX.1}
x^{0}=r-y^{0}, \quad
x^{1}=r, \quad
x^{A}=\mu^{A}(\frac{y^{i}}{r})
 \;,
\end{equation}
with $x^{A}$ local coordinates\footnote{ They can be angular
coordinates, see e.g.~\cite[Chapter V, Section 4]{ChBdWMII}, or
stereographic coordinates, as in Christodoulou~\cite{ChrBHF}.}
on $S^{n-1}$.
The  null geodesics issued from $O$ have equation $x^{0}=0$,
{}$x^{A}=$constant, so that $\frac{\partial}{\partial x^{1}}$
is tangent to those geodesics. On $C_{O}$ (but not outside of
it in general) the spacetime metric $g$ {that we are going to
construct} takes the form (\ref{null2}), that is, such that
$\overline{g}_{11}=0$ and $\overline{g}_{1A}=0$.

We emphasize that our assumption $\overline{g}$ is given by
(\ref{null2}) is no geometric restriction on a metric $g$ to
have such a trace, however $\overline{g}_{00}$, $\nu_{0}$,
$\nu_{A}$ are not invariant under an isometry of spacetime with
leaves $C_{O}$ invariant, they are gauge-dependent quantities
(see sections~\ref{ssBcv} and~\ref{slgu}).

We compute the relation between the components of a tensor
$T$ in the coordinates $y$ and $x$ using (\ref{1IX.1}) and
its inverse:
\begin{equation}
y^{0}=x^{1}-x^{0}, \quad
y^{i}=x^{1}\Theta^{i}(x^{A}), \quad
{\rm with} \quad
\sum_{i}(\Theta^{i})^{2}=1
 \;.
\end{equation}

If the components of a spacetime symmetric tensor $T$ in the
coordinates $x^{\alpha}$ are denoted $T_{\alpha\beta}$ and if
in the coordinates $y^{\alpha}$ they are denoted
$\underline{T_{\alpha\beta}}$, then the identities for the
change of coordinates of tensors,
$T_{\lambda\mu}=\underline{T_{\alpha\beta}}\frac{\partial y^{\alpha}%
}{\partial x^{\lambda}}\frac{\partial y^{\beta}}{\partial x^{\mu}}$,
give the identities
\begin{equation}
 \label{22XI.21}
T_{00}\equiv\underline{T_{00}},\quad
T_{01}\equiv-\underline{T_{00}} -\underline{T_{0i}}\Theta^{i},\quad
T_{0A}\equiv-\underline{T_{0i}} r\frac{\partial\Theta^{i}}{\partial x^{A}}
 \;,
\end{equation}
\begin{equation}
 \label{22XI.22}
T_{11}\equiv\underline{T_{00}}+2\underline{T_{0i}}\Theta^{i}
           +\underline{T_{ij}}\Theta^{i}\Theta^{j},\quad
T_{1A}\equiv\underline{T_{0i}} r\frac{\partial\Theta^{i}}{\partial x^{A}}
      +\underline{T_{ji}}r\Theta^{j}
          \frac{\partial\Theta^{i}}{\partial x^{A}}\;,
\end{equation}
\begin{equation}
 \label{22XI.23}
T_{AB}\equiv\underline{T_{ij}}r^{2}\frac{\partial\Theta^{i}}{\partial x^{A}%
}\frac{\partial\Theta^{j}}{\partial x^{B}} \;.
\end{equation}
Conversely, $\underline{T_{\lambda\mu}}=\frac{\partial x^{\alpha}%
}{\partial y^{\lambda}}\frac{\partial x^{\beta}}{\partial y^{\mu}}%
T_{\alpha\beta}$ gives
\begin{equation}
 \label{20XII.1}
\underline{T_{00}}\equiv T_{00},\quad
\underline{T_{0i}}\equiv
-(T_{00}+T_{01})\Theta^{i}-T_{0A}\frac{\partial x^{A}}{\partial y^{i}}
\;,  %
\end{equation}
\begin{equation}
 \label{20XII.2}
\underline{T_{ij}}=(T_{00}+2T_{01}+T_{11})\Theta^{i}\Theta^{j}
+(T_{0A}+T_{1A})
(\Theta^{i}\frac{\partial x^{A}}{\partial y^{j}}+\Theta^{j}\frac{\partial
x^{A}}{\partial y^{i}})
+T_{AB}\frac{\partial x^{A}}{\partial y^{i}}\frac{\partial x^{B}}{\partial y^{j}}
 \;,
\end{equation}
with
\begin{equation*}
\frac{\partial x^{A}}{\partial y^{i}}=r^{-1}\mu_{i}^{A}
\;,
\end{equation*}
where the $\mu_{i}^{A}$'s are $C^{\infty}$ functions of the
$x^{B}$ on any subset of $S^{n-1}$ where the $x^{A}$'s are
admissible local coordinates.

One checks, using the identities
\begin{equation*}
\sum_{i}(\Theta^{i})^{2}=1\ \text{ and }\  \sum_{i}\frac{\partial\Theta^{i}%
}{\partial x^{A}}\frac{\partial\Theta^{i}}{\partial x^{B}}\equiv s_{AB},
\end{equation*}
with
\begin{equation*}
s_{AB}dx^{A}dx^{B}\equiv s_{n-1}\;,\quad \text{the metric of}\  S^{n-1}
 \;,%
\end{equation*}
that, when $T=\eta$,
\begin{equation*}
-(dy^{0})^{2}+\sum_{i}(dy^{i})^{2}=-(dx^{0})^{2}+2dx^{0}dx^{1}+(x^{1}%
)^{2}s_{n-1}
 \;.%
\end{equation*}

\subsubsection{Case $\hat{g}=\protect\eta$, the Minkowski metric}

It is natural to take as given metric $\hat{g}$ the metric of a
model spacetime such as Minkowski, or de Sitter, or anti-de
Sitter. While most our formulae will be completely general, the
analysis will mainly be concerned with the case where the
metric $\hat{g}$ is a Minkowski metric $\eta$ given by the
formula written above in the introduced coordinates
$y^{\alpha}$ and in the adapted null coordinates $x^{\alpha}$.
The Riemannian curvature of the Minkowski metric $\eta$ is
zero. The non zero Christoffel symbols of $\eta$ are in our
coordinates $x^{\alpha}$, with $S_{BC}^{A}$ the Christoffel
symbols of the metric $s$,
\begin{equation}
\hat{\Gamma}_{1A}^{B}\overset{\eta}{\equiv}\frac{1}{x^{1}}\delta_{A}^{B}\;,\qquad
\hat{\Gamma}_{AC}^{B}\overset{\eta}{\equiv}S_{AC}^{B}\;,
\label{Minkowski3}
\end{equation}
\begin{equation}
\hat{\Gamma}_{AB}^{0}\overset{\eta}{\equiv}-x^{1}s_{AB}\;,\qquad
\hat{\Gamma}_{AB}^{1}\overset{\eta}{\equiv}-x^{1}s_{AB}\;.
\label{Minkowski4}
\end{equation}
Equalities and identities assuming given metric $\hat{g}=\eta$
and $W^{\lambda}\equiv
g^{\alpha\beta}\hat{\Gamma}_{\alpha\beta}^{\lambda}$ will be
denoted with symbols $\overset{\eta}{=}$ and
$\overset{\eta}{\equiv}$, respectively when ambiguous.

We have
\begin{equation}
\overline{W}^{0}\stackrel{\eta}{\equiv} -x^{1}\overline{g}^{AB}s_{AB}
\stackrel{\eta}{\equiv} \overline{W}^{1}
\;,
\label{Minkowski6}
\end{equation}
\begin{equation}
\overline{W}^{A}\stackrel{\eta}{\equiv}
2\overline{g}^{1C}\hat{\Gamma}_{1C}^{A}
+\overline{g}^{BC}\hat{\Gamma}_{BC}^{A}
\stackrel{\eta}{\equiv}
-\frac{2}{x^{1}}\nu ^{0}\nu ^{A}+\overline{g}^{BC}S_{BC}^{A}
\;.
\label{Minkowski7}
\end{equation}

\subsubsection{Limits at the vertex}
 \label{ssslv}

We set $g=\eta+h$. The condition
$\underline{\overline{h}_{\alpha\beta}}(O)=0$ of
Theorem~\ref{Existence1} can always be satisfied by choice of
an orthonormal frame for the natural frame of the coordinates
$y^{\alpha}$ at the vertex. Since the coordinates $x$ are
singular for $x^{1}=0$ the behaviour near $x^{1}=0$ of the
components $\overline{h}_{\alpha\beta}$ in $x$ coordinates is
obtained only by considering limits.  As explained above, we
can, and will, choose coordinates on $C_{O}$ such that
$\overline{h}_{11}\equiv0$, i.e. $C_{O}:x^{0}=0$ is a null cone
for $g$, and $\overline{h}_{1A}=0$ i.e. the vector
$\ell:=\frac{\partial}{\partial x^{1}}$ is on $C_{O}$ a null
vector. Then the components of $\underline{\overline{h}}$ are
\begin{equation*}
\underline{\overline{h}_{00}}\equiv\overline{h}_{00},\quad  \underline{\overline{h}_{0i}%
}  \equiv-(\overline{h}_{00}+\overline{h}_{01})\Theta^{i}-\overline{h}_{0A}%
\frac{\partial x^{A}}{\partial y^{i}}\text{\ \ with \ \ } \overline{h}_{01}:=\nu
_{0}-1,\quad  \overline{h}_{0A}:=\nu_{A}
\;,%
\end{equation*}
\begin{equation*}
\quad  \underline{\overline{h}_{ij}}=(\overline{h}_{00}+2\overline{h}_{01})\Theta^{i}%
\Theta^{j}+\overline{h}_{0A}(\Theta^{i}\frac{\partial x^{A}}{\partial y^{j}}%
+\Theta^{j}\frac{\partial x^{A}}{\partial y^{i}})+\overline{h}_{AB}\frac{\partial
x^{A}}{\partial y^{i}}\frac{\partial x^{B}}{\partial y^{j}}.
\end{equation*}

We see that the condition
$\underline{\overline{h}_{\alpha\beta}}(O)=0$ is equivalent to
the following conditions in the coordinates $x^{\alpha}$:
\begin{equation}
\lim_{r\rightarrow 0}(1+\overline{g}_{00})=
\lim_{r\rightarrow 0}(\nu_{0}-1)=
\lim_{r\rightarrow 0}(r^{-1}\nu_{A})=
\lim_{r\rightarrow 0} \,(r^{-2}\overline{h}_{AB})=0
\;.
\label{gvertexlimits}
\end{equation}

\subsubsection{A lemma}

For further use we note the following observation:

\begin{lemma}
\label{lemma4.2}
If a $C^{1}$ spacetime function $f$ is such that
on $C_{O}$ in the coordinates $x^{\alpha}$ it holds that
\begin{equation*}
\lim_{r\equiv x^{1}\rightarrow 0}\partial_{1}\overline{f}=0
\end{equation*}
then it also holds
\begin{equation*}
\lim_{r\equiv x^{1}\rightarrow 0}\overline{\partial_{0}f}=0\;.
\end{equation*}
\end{lemma}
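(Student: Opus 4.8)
The natural approach is to transfer everything to the ambient coordinates $y^{\alpha}$, in which $f$ is genuinely $C^{1}$ near the vertex (the adapted coordinates $x^{\alpha}$ being singular there), and to express both $\partial_{1}\overline{f}$ and $\overline{\partial_{0}f}$ through the $y$-derivatives of $f$. Starting from the inverse of \eqref{1IX.1}, namely $y^{0}=x^{1}-x^{0}$ and $y^{i}=x^{1}\Theta^{i}(x^{A})$, one reads off the coordinate vector fields $\partial_{0}=-\partial_{y^{0}}$ and $\partial_{1}=\partial_{y^{0}}+\Theta^{i}\partial_{y^{i}}$, valid wherever the $x^{\alpha}$ are defined. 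Since differentiation along $x^{1}$ commutes with restriction to $C_{O}=\{x^{0}=0\}$ we have $\partial_{1}\overline{f}=\overline{\partial_{1}f}$, and therefore on $C_{O}$ the two identities $\overline{\partial_{0}f}=-\overline{\partial_{y^{0}}f}$ and $\partial_{1}\overline{f}=\overline{\partial_{y^{0}}f}+\Theta^{i}\overline{\partial_{y^{i}}f}$.

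Next I would use that $f$ is $C^{1}$ on a neighbourhood of $O$ in $\mathbf{R}^{n+1}$, so that every $\partial_{y^{\alpha}}f$ is continuous and $\overline{\partial_{y^{\alpha}}f}\to\partial_{y^{\alpha}}f(O)$ as $r=x^{1}\to 0$, with a limit independent of the angular variables $x^{A}$. From the first identity this already gives $\lim_{r\to 0}\overline{\partial_{0}f}=-\partial_{y^{0}}f(O)$, a well-defined number, so it only remains to prove $\partial_{y^{0}}f(O)=0$. Feeding the hypothesis into the second identity and letting $r\to 0$ along a fixed null ray (fixed $x^{A}$, hence fixed direction $\Theta$) yields $\partial_{y^{0}}f(O)+\Theta^{i}\partial_{y^{i}}f(O)=0$ for every admissible $\Theta$. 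But an affine function of $\Theta$ vanishing on a (relatively) open subset of the unit sphere must vanish identically --- evaluate at a pair of antipodal directions, or note that the zero hyperplane of a nonzero linear functional meets $S^{n-1}$ in a set of too small dimension. Hence $\partial_{y^{0}}f(O)=0$ and $\overline{\partial_{0}f}\to 0$, which is the claim.

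The one subtlety worth flagging is the reading of the hypothesis: ``$\lim_{r\to 0}\partial_{1}\overline{f}=0$'' must be understood as the genuine limit at the vertex, i.e.\ along every approach within $C_{O}$ (equivalently, for each angular value $x^{A}$); this is exactly what licenses the step ``$\partial_{y^{0}}f(O)+\Theta^{i}\partial_{y^{i}}f(O)=0$ for all $\Theta$''. With that convention there is no genuine obstacle: the hypotheses in fact force the whole gradient $df(O)$ to vanish, and the tempting ``counterexample'' $f=y^{0}-r$ is excluded precisely because $r$ fails to be $C^{1}$ at $O$. Everything else is the routine bookkeeping between the two coordinate systems already set up in Section~\ref{nullhypersurfaces}.
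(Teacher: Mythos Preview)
Your proof is correct and follows essentially the same argument as the paper: both express $\partial_{1}$ and $\partial_{0}$ in terms of the $y$-derivatives via the change of coordinates, use the $C^{1}$ regularity of $f$ at $O$ to pass to the limit, and then exploit the fact that $\partial_{y^{0}}f(O)+\Theta^{i}\partial_{y^{i}}f(O)=0$ for all directions $\Theta$ forces both terms to vanish separately. Your additional remarks on the reading of the hypothesis and the non-counterexample $f=y^{0}-r$ are helpful clarifications not present in the paper.
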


\begin{proof}
One has the trivial identity
\begin{equation*}
\partial_{1}f\equiv \underline{\partial_{\alpha }f}\frac{\partial
y^{\alpha }}{\partial x^{1}}\equiv
\underline{\partial_{0}f}+\underline{\partial_{i}f}\Theta^{i} .
\end{equation*}
If $f$ is $C^{1}$ in a neighbourhood of $O$,
$\underline{\partial_{i}f}$ tends to a limit, a number $a_{i}$
at $O$, hence the above equation implies
\begin{equation*}
\lim_{r\rightarrow 0}\overline{\underline{\partial_{0}f}}
+a_{i}\Theta^{i}=0\;,
\end{equation*}
condition which can be satisfied for all $x^{A}$ if and only if $a_{i}=0$
and $\lim_{r\rightarrow 0}\overline{\underline{\partial_{0}f}}=0$.
Therefore
\begin{equation*}
 \lim_{r\rightarrow 0}\overline{\partial_{0}f}\equiv
-\lim_{r\rightarrow 0}\underline{\overline{\partial_{0}f}}=0\;.
\end{equation*}
\end{proof}

\subsection{The affine-parameterization condition}

The vector field $\ell:=\frac{\partial}{\partial x^{1}}$,
tangent to the null rays in $M_{0}$, obeys the geodesic
property
\begin{equation}
\overline{\ell^{\alpha}\nabla_{\alpha}\ell^{\beta}}
=\overline{\Gamma}_{11}^{\beta}
\; ,
\label{chi5}
\end{equation}
with
\begin{equation}
\overline{\Gamma}^{0}_{11} \equiv \overline{\Gamma}^{A}_{11} \equiv 0
\quad \text{and} \quad
\overline{\Gamma}_{11}^{1}\equiv
\nu^{0}(\partial_{1}\nu_{0}-\frac{1}{2}\overline{\partial_{0}g_{11}})
\; .
\label{geodesic}
\end{equation}
If we impose the condition $\overline{\Gamma}_{11}^{1}=0$, then
the vector $\ell$ is parallelly transported, and $x^{1}$ is
said to be an affine parameter on the rays; this condition
gives an equation involving $\nu_{0}$, a metric coefficient which
will appear in our first wave-map gauge constraint. However, we
stress that the equation $\overline{\Gamma}^{1}_{11}=0$
involves also
a derivative transversal to $M_{0}$, and thus cannot
be made to hold just by a gauge choice of the coordinate
$x^{1}$ on the cone. We will see later how we can circumvent
this problem in the wave-map gauge.

\subsection{Null extrinsic curvature}
\label{NullExtrinsicCurvature}

\subsubsection{General properties}

Let $M_{0}$ be a null hypersurface with a field of null tangents
$\ell$. The null extrinsic curvature at $x\in M_{0}$ is defined
(see, e.g.,~\cite{galloway-nullsplitting}) as the bilinear form
with components $\overline{\nabla_{\alpha} \ell_{\beta}}$
acting on the quotient of the tangent space to $M_{0}$ at $x$
by the direction defined by $\ell$, i.e. equivalence classes of
tangent vectors of the form $\overline{X}\equiv
$\underline{$\overline{X}$}$+c\ell$ with
\underline{$\overline{X}$}$\in T_{x}M_{0}$, $c$ an arbitrary
number. Indeed, the action of the bilinear form on a pair of
such tangent vectors,
$\overline{\nabla_{\alpha}\ell_{\beta}X^{\alpha}Y^{\beta}}$,
depends only on the equivalence class, that is in our
coordinates on the components $X^{A}$ and $Y^{A}$, hence it is
defined by the components
$\chi_{AB}:= \overline{\nabla_{A}\ell_{B}}$ of the bilinear form.
Using $\ell^{\alpha}=\delta_{1}^{\alpha}$ and
$\ell_{\alpha}:=\overline{g}_{\alpha\beta}\delta_{1}^{\beta}
\equiv \nu_{0}\delta_{\alpha}^{0}$ we have
\begin{equation}
\chi_{AB}\equiv
 -\overline{\Gamma}_{AB}^{0}\nu_{0}\equiv
 \frac{1}{2}\partial_{1}\overline{g}_{AB}.
\label{chi2}
\end{equation}
We denote by
\begin{equation}
\chi_{A}{}^{B}:=\overline{g}^{BC}\chi_{AC} \equiv \overline{\Gamma}_{1A}^{B}
\label{chi3}
\end{equation}
the mixed, $x^{1}$-dependent, 2-tensor on $S^{n-1}$ deduced from the null
second fundamental form. We define its trace
\begin{equation}
\tau :=\chi_{A}{}^{A}\equiv
\overline{g}^{AB}\chi_{AB}\equiv
\partial_{1}(\log \sqrt{\det \tilde{g}_{\Sigma}}),
\label{tau}
\end{equation}
and its traceless part
\begin{equation}
\sigma_{A}{}^{B}:=\chi_{A}{}^{B}-\frac{1}{n-1}\delta_{A}^{B}\tau ,
\text{ \ and we set }
|\sigma|^{2}:=\sigma_{A}{}^{B}\sigma_{B}{}^{A}\; .
\label{sigma}
\end{equation}

See~\cite{GourgoulhonJaramillo,galloway-nullsplitting} for an
analysis of the null second fundamental form through the
Weingarten map.

\subsubsection{Harmonicity functions}
\label{HarmonicityFunctions}

In adapted coordinates,  and using the notation above, the
harmonicity functions $\Gamma^{\alpha} \equiv g^{\lambda\mu}
\Gamma_{\lambda\mu}^{\alpha}$ reduce on $M_{0}$ to
\begin{equation}
\overline{\Gamma}^{0}\equiv \overline{g}^{\lambda \mu} \overline{\Gamma}_{\lambda \mu
}^{0}\equiv 2\nu ^{0}\overline{\Gamma}_{01}^{0}+\overline{g}^{AB}\overline{\Gamma}
_{AB}^{0}\equiv \nu ^{0}(\nu ^{0}\overline{\partial_{0}g_{11}}-\tau )
 \;,
\label{Gamma2}
\end{equation}
\begin{align}
\overline{\Gamma}^{A}\equiv &
\overline{g}^{\lambda \mu} \overline{\Gamma}_{\lambda \mu} ^{A}
\equiv
2\nu^{0}\overline{\Gamma}_{01}^{A}+2\overline{g}^{B1}\overline{\Gamma}_{B1}^{A}+\overline{g}%
^{BC}\overline{\Gamma}_{BC}^{A}
\label{Gamma3} \\
\equiv &\ \nu ^{0}\nu ^{A}(\tau -\nu ^{0}\overline{\partial_{0}g_{11}}%
)+\nu ^{0}\overline{g}^{AB}(\overline{\partial_{0}g_{1B}}+\partial_{1}\nu
_{B}-\partial_{B}\nu _{0}) \notag \\
&\  - 2\nu^{0}\nu^{B}\chi_{B}{}^{A} +\tilde{\Gamma}^{A},
\label{Gamma4}
\end{align}
\begin{align}
\overline{\Gamma}^{1} \equiv&\ \overline{g}^{\lambda \mu} \overline{\Gamma}_{\lambda \mu
}^{1}\equiv
\overline{g}^{11}\overline{\Gamma}^{1}_{11} +
 2\nu ^{0}\overline{\Gamma}_{01}^{1}-2\nu ^{0}\nu ^{A}\overline{\Gamma}%
_{A1}^{1}+\overline{g}^{AB}\overline{\Gamma}_{AB}^{1}
\label{Gamma5} \\
\equiv &\ (\nu^{0})^{2}\partial_{1}\overline{g}_{00}+\overline{g}^{11}\nu^{0}
(\frac{1}{2} \overline{\partial_{0}g_{11}}+\partial_{1}\nu_{0}-\tau\nu_{0})
+2(\nu^{0})^{2}\nu^{A}(-\partial_{1}\nu_{A}+\nu^{B}\chi_{AB}) \notag \\
& +\nu^{0}\overline{g}^{AB}\tilde{\nabla}_{B}\nu_{A}-
\frac{1}{2}\nu^{0}\overline{g}^{AB}\overline{\partial_{0}g_{AB}}
\label{Gamma6} \\
\equiv &\
-\partial_{1} \overline{g}^{11}
+ \overline{g}^{11} \nu^{0} (
    \frac{1}{2} \overline{\partial_{0} g_{11}}
  - \partial_{1} \nu_{0}
  - \tau \nu_{0} )
+ \nu^{0} \overline{g}^{AB} \tilde{\nabla}_{B} \nu_{A}
- \frac{1}{2} \nu^{0} \overline{g}^{AB} \overline{\partial_{0}g_{AB}}
 .
\label{27V.1}
\end{align}
We have defined
\begin{equation}
\label{22XII.1}
 \tilde{\Gamma}^{A}:=\overline{g}^{BC}\tilde{\Gamma}^{A}_{BC}
  \;,
\end{equation}
with $\tilde{\Gamma}^{A}_{BC}$ being the Christoffel symbols of
the metric $\overline{g}_{AB}$. We shall also use
\begin{eqnarray}
\overline{\Gamma}_{1} &:=& \overline{g}_{1\mu} \overline{\Gamma}^{\mu} =
\nu_{0} \overline{\Gamma}^{0} \;, \\
\overline{\Gamma}_{A} &:=& \overline{g}_{AB} \overline{\Gamma}^{B}
\not = \overline{g}_{A\mu} \overline{\Gamma}^{\mu} \;,
\label{GammaA}
\end{eqnarray}
and similarly for components of $\overline{W}$ and
$\overline{H}$ with subindices.

\subsubsection{Vertex limits}
 \label{sssvl2}

We set
\begin{equation}
\overline{g}_{AB}\equiv r^{2}(s_{AB}+\overline{f}_{AB})
\;.
\end{equation}
We have seen in Section~\ref{ssslv} that  it is no geometric
restriction for smooth metrics to assume
\begin{equation*}
\lim_{r\to 0} \,(r^{-2}\overline{g}_{AB}-s_{AB})=0\;,
\quad\text{i.e.}\quad
\lim_{r\rightarrow 0}\overline{f}_{AB}=0
 \;,
\end{equation*}
Then
\begin{equation}
\lim_{r\rightarrow 0} r^{2}\overline{g}^{AB} = s^{AB}
\; .
\end{equation}

Recalling that $\chi_{A}{}^{C}\equiv\overline{\Gamma}_{A1}^{C}$ and using
the relation between connections in different frames gives
\begin{equation*}
\chi_{A}{}^{C}\equiv
\overline{\Gamma}_{A1}^{C}\equiv
\frac{\partial x^{C}}{\partial y^{\alpha}}
\frac{\partial y^{\beta}}{\partial x^{A}}
\frac{\partial y^{\gamma}}{\partial x^{1}}
\underline{\overline{\Gamma}_{\beta\gamma}^{\alpha}}+
\frac{\partial x^{C}}{\partial y^{\alpha}}
\frac{\partial}{\partial x^{A}}
\frac{\partial y^{\alpha}}{\partial x^{1}}
\; .
\end{equation*}
Using
\begin{equation*}
\frac{\partial y^{0}}{\partial x^{1}}=1, \quad
\frac{\partial y^{i}}{\partial x^{1}}=\frac{y^{i}}{r},
\quad\text{hence}\quad
\frac{\partial x^{C}}{\partial y^{\alpha}}
\frac{\partial}{\partial x^{A}}
\frac{\partial y^{\alpha}}{\partial x^{1}}=
\frac{1}{r}
\frac{\partial x^{C}}{\partial y^{\alpha}}
\frac{\partial y^{\alpha}}{\partial x^{A}}=
\frac{1}{r}\delta_{A}^{C}
\; ,
\end{equation*}
we find
\begin{equation*}
\chi_{A}{}^{C}\equiv
\frac{1}{r}\delta_{A}^{C}+
\frac{\partial x^{C}}{\partial y^{i}}
\frac{\partial y^{j}}{\partial x^{A}}
(\underline{\overline{\Gamma}_{j0}^{i}}
+ \frac{y^{h}}{r}\underline{\overline{\Gamma}_{jh}^{i})}
\; .
\end{equation*}
Therefore if the coefficients
$\underline{\overline{\Gamma}_{j0}^{i}}$ $\ $and
$\underline{\overline{\Gamma}_{jk}^{i}}$ are bounded for $0\leq
r\leq a$, the same property holds for for
$\chi_{A}{}^{C}-\frac{1}{r}\delta_{A}^{C}$, for
$\psi:=\frac{n-1}{r}-\tau$ and for
$\sigma_{A}{}^{C}:=\chi_{A}{}^{C}-\frac{1}{n-1}\delta_{A}^{C}\tau$.
These quantities are then also continuous on each null ray.
However the limits when $r$ tends to zero are in general angle
dependent.

We have already said no geometric restriction for a given
spacetime metric to choose for it normal geodesic coordinates
in a neighbourhood of a point, here $O$. Then the Christoffel
symbols vanish at $O$ and if continuous are such that
\begin{equation*}
\lim_{r\rightarrow0}\overline{\Gamma}_{A1}^{C}
=
\lim_{r\rightarrow0}
   \frac{\partial x^{C}}{\partial y^{\alpha}}
   \frac{\partial}{\partial x^{A}}
   \frac{\partial y^{\alpha}}{\partial x^{1}}
\equiv
   \frac{1}{r}
   \frac{\partial x^{C}}{\partial y^{\alpha}}
   \frac{\partial y^{\alpha}}{\partial x^{A}}
\equiv
   \frac{1}{r} \delta_{A}^{C}.
\end{equation*}
Hence
\begin{equation}
\lim_{r\rightarrow0}\psi=\lim_{r\rightarrow0}\sigma_{A}{}^{C}=0 .
\end{equation}
See further results in next section.

\subsection{Boundary conditions in coordinates normal at the vertex}
 \label{ssBcv}

In the following sections we will give explicit expressions for
the  wave-map gauge  constraints. To study their solutions we
will need to know the behaviour of the unknowns at the tip of
the light cone, aiming at finding solutions of the constraints
which satisfy the Cagnac-Dossa hypotheses. The purpose of this
section is to describe this behaviour in coordinate systems
useful for the problem at hand. The analysis here is also
useful for proving geometric uniqueness of solutions.

Consider a smooth space-time $(V, g)$. Let $O\in V$ and let
$C_{O}$ be the future light-cone emanating from $O$. Let $T$ be
any unit timelike vector at $O$, and normalize  null vectors
$\ell$ at $O$ by requiring that $g(\ell,T)=-1$.  {The parallel
transport of $\ell$} defines an affine parameter, denoted by
$s$, on the future null geodesics $s\mapsto\gamma_{\ell}(s)$ with
$\gamma_{\ell}(0)=O$ and with initial tangent $\ell$. Let
$(z^{\mu})$, $\mu=0,\ldots ,n$ be a normal coordinate system
centred at $O$ with $T=\partial_{z^{0}}$, see
e.g.~\cite{Thomas,Lee:Rm}, or~\cite[Chapter 12, Section
7]{YvonneBook}. In those coordinates the future light-cone
emanating from $O$ is given by the equation
\begin{equation*}
 C_{O}=\{z^{0} = |\vec z|\}
 \;,
 \
 \text{where} \ \vec z :=(z^{1},\ldots ,z^n)\;, \
 |\vec z|^{2}:=  {\sum_{i=1}^n (z^{i})^{2}}
 \;.
\end{equation*}
As is well known, in normal coordinates  at $O$, $z=0$, the
Christoffel symbols vanish at $O$. Hence for a $C^{1,1}$ metric
we have $\partial_{\sigma} g_{\mu\nu}(O)=0$, and so, for small
$|z|:=|z^{0}|+|\vec z|$,
\begin{equation}
\label{7VI.2}
 |g_{\mu\nu} - \eta_{\mu\nu}| + |z\|\partial_{\sigma} g_{\mu\nu}|
 \le C |z|^{2}
  \;,
\end{equation}
for some constant $C$.

In the coordinate system $(x^{\mu})=(x^{0}\equiv u,x^{1}\equiv
r,x^{A})$, $A=2,\ldots ,n$, where
\begin{equation}
 \label{7VI.9}
 u =|\vec z|- z^{0} \;, \ r= |\vec z|
 \;,
\end{equation}
and where the $x^{A}$'s are any local coordinates on $S^{n-1}$
parameterizing the unit vector $\vec z/|\vec z|$, the trace of
the metric $g$ on $C_{O}$ takes the desired form (\ref{null2})
as long as the metric, and the light-cone, are smooth;
assuming smoothness of $g$, this will always be the case in
some neighbourhood of the tip $O$.

Equation (\ref{7VI.9}) shows that
\begin{equation*}
 dz^{0} = dr-du\;, \quad
 dz^{i} = \frac{z^{i}}{r} dr + \partial_{A} z^{i} dx^{A}
 \;,
\end{equation*}
which allows us to translate the estimates (\ref{7VI.2}) to the
asymptotic behaviour of the objects of interest near $r=0$: From
\begin{eqnarray*}
 g_{\mu\nu}dz^{\mu} dz^{\nu}
 & = &
 (\eta_{\mu\nu} + O(|z|^{2})) dz^{\mu} dz^{\nu}
\\
 & = &
 (-1+ O(|z|^{2})) (du-dr)^{2} + O(|z|^{2}) (du-dr)
 \left(\frac {z^{i}}r dr + \partial_{A} z^{i} dx^{A}\right)
\\
 &&
 + (\delta_{i}^{j} +  O(|z|^{2}) )
 \left(\frac {z^{i}}r dr + \partial_{A} z^{i} dx^{A}\right)
 \left(\frac {z^{j}}r dr + \partial_{B} z^{j} dx^{B}\right)
\\
 &=&
    (-1+ O(|z|^{2}) ) (du)^{2} + (2+ O(|z|^{2})) du dr +  O(r|z|^{2}) du dx^{A}
\\
 &&
    +  O(|z|^{2}) (dr)^{2}+  O(r |z|^{2}) dx^{A} dr  +
  r^{2} \left(s_{AB}+ O(|z|^{2}) \right) dx^{A} dx^{B}
\end{eqnarray*}
we obtain, at $u=0$, for small $r$,
\begin{eqnarray}
 \label{7VI.33}
 &
 \overline{g}_{00} = -1 + O(r^{2})\;,\
 \partial_{r} \overline{g}_{00} = O(r)\;,\
 \overline{\partial_{u} g_{00}} = O(r)\;,\
 \partial_{A} \overline{g}_{00} = O(r^{2})\;,
  \phantom{xxx}
 &
\\
 \label{7VI.34}
 &
 \nu_{0} = 1 + O(r^{2})\;,\
 \partial_{r} \nu_{0} = O(r)\;,\
 \overline{\partial_{u} g_{01}} = O(r)\;,\
 \partial_{A} \nu_{0} = O(r^{2})\;,
 &
\\
 \label{7VI.35}
 &
 \nu_{A} = O(r^3)\;,\
 \partial_{r} \nu_{A} = O(r^{2})\;,\
 \overline{\partial_{u} g_{0A}} = O(r^{2})\;,\
 \partial_{A} \nu_{0} = O(r^3)\;,
 &
\\
 \label{7VI.36}
 &
 \overline{g}_{AB} = r^{2} \left(s_{AB}+ O(r^{2}) \right)\;,\
 \partial_{r} (\overline{g}_{AB} - r^{2} s_{AB}) = O(r^{3})\;,
 &
\\
 \label{7VI.37}
 &
 \overline{\partial_{u} g_{AB}} = O(r^{3})\;,\
 \partial_{A} (\overline{g}_{AB} - r^{2} s_{AB}) = O(r^{4})\;,
 &
\\
 \label{7VI.38}
 &
 \overline{g}_{11} =0\;,\
 \partial_{r} \overline{g}_{11} = 0 \;,\
 \overline{\partial_{u} g_{11}} = O(r)\;,\
 \partial_{A} \overline{g}_{11} = 0\;,
  \phantom{xxx}
 &
\\
 \label{7VI.39}
 &
 \overline{g}_{1A} =0\;,\
 \partial_{r} \overline{g}_{1A} = 0\;,\
 \overline{\partial_{u} g_{1A}} = O(r^{2})\;,\
 \partial_{A} \overline{g}_{1A} = 0\;.
  \phantom{xxx}
 &
\end{eqnarray}
One also has associated second-derivatives estimates,
\begin{equation}
 \label{10VI.1}
 \overline{\partial_{u} \partial_{r} g_{AB}} = O(r^{2})\;,\
 \partial^{2}_{r} (\overline{g}_{AB} - r^{2} s_{AB}) = O(r^{2})\;,\
 \partial_{A} \partial_{r} (\overline{g}_{AB} - r^{2} s_{AB}) = O(r^{3})\;,
\end{equation}
etc. From (\ref{7VI.36}) and (\ref{10VI.1}) we obtain
\begin{eqnarray}
 \label{7VI.40}
 &
 \chi_{A}{}^{B} = \frac{1}{r} \delta_{A}^{B} + O(r)\;,\
 \text{hence}\
 \tau = \frac{n-1}{r} + O(r)\;,\
 \sigma_{A}{}^{B} = O(r)\;,
 &
\\
 \label{7VI.41}
 &
 \text{as well as} \
 \partial_{r}(\tau - \frac{n-1}{r}) = O(1)\;,\
 \partial_{A} \tau = O(r) \;,
 &
\\
 \label{7VI.42}
 &
 \partial_{r} \sigma_{A}{}^{B} =  O(1)\;,\
 \partial_{C} \sigma_{A}{}^{B} =  O(r)\;.
 &
\end{eqnarray}

Note that (\ref{7VI.40})-(\ref{7VI.42}) will hold in any
coordinate system which \emph{coincides with the normal
coordinates $z^{\mu}$ on the light-cone}.
This is due to the fact
that the vectors $\partial_{r}$ and $\partial_{A}$ are tangent to
the light-cone, which implies that the quadratic form
$\tilde{g}_{\Sigma} = \overline{g}_{AB}dx^{A} dx^{B}$ is intrinsically
defined on the light-cone, independently of how the
coordinates are extended away from the light-cone, and from the
fact that the matrix $\overline{g}^{AB}$ in (\ref{29VIII.1}) is the
inverse of $\overline{g}_{AB}$.

\subsection{The light-cone theorem}
 \label{ss7VI.1}

A result closely related to our analysis here is the
\emph{light-cone theorem}, proved in~\cite{CCG}, which reads as
follows: Let $s$ be an affine parameter as defined at the
beginning of Section~\ref{ssBcv}. Let $\Sigma_s$ denote the
$(n-1)$--dimensional surface reached by these geodesics after
affine time $s$:
\begin{equation}\label{20II.5}
 \Sigma_s=\{\gamma_{\ell}(s)\}\subset C_{O}
 \;,
\end{equation}
where the vectors $\ell$ run over all null future vectors at
$O$ normalized as above; see Figure~\ref{Fcones}.
Note that in this subsection~\ref{ss7VI.1}, and only here, we
shall slice the null cone $C_{O}$ and its interior using an
affine parameter $s$, and not coordinate time $y^{0}$ as in the
rest of the article.

\begin{figure}[t]
\begin{center}
{
 \psfrag{gamltwo}{\Huge $\gamma_{\ell_{2}} $}
 \psfrag{gamlone}{\Huge $\gamma_{\ell_{1}} $}
 \psfrag{Tv}{\Huge $T$}
 \psfrag{point}{\Huge $O$}
 \psfrag{Cdes}{\Huge $C_{O}^{s}$}
 \psfrag{Ades}{\Huge $\Sigma_s$}
 \resizebox{3in}{!}{\includegraphics{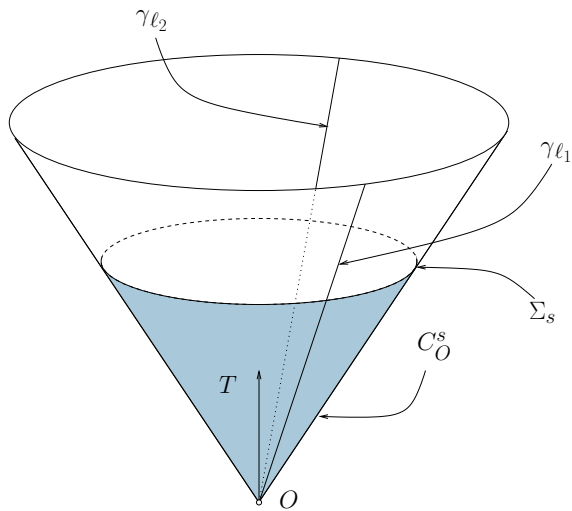}}
}
\caption{\label{Fcones}
The cross-section $\Sigma_s$ of the light-cone $C_{O}$; $C_{O}^{s}$ is the
shaded blue region. Two generators $\gamma_{\ell_{1}}$
and $\gamma_{\ell_{2}}$ are also shown.}
\end{center}
\end{figure}

We denote by $C_{O}^{t}$ the subset
of the light-cone covered by all the geodesics up to affine time $t$:
\begin{equation}\label{20II.6}
 C_{O}^{t} = \cup_{0\le s\le t} \Sigma_s
 \;.
\end{equation}
Note that $\gamma_{\ell}(s)$ might not be defined for all $s$.
Further, some of the $\Sigma_{s}$'s might not be  smooth.
However, there exists a maximal $s_{0}>0$ such that $\Sigma_{s}$ is
defined and smooth for all $0<s<s_{0}$. Our considerations  only
apply to that last region.

It is proved in~\cite{CCG} that, assuming the Einstein
equations with a cosmological constant and with sources
satisfying the dominant energy conditions, the areas of the
$\Sigma_{s}$'s are less than or equal to the corresponding areas
in Minkowski, de Sitter, or anti-de Sitter space-time.
Furthermore, if equality holds at some $s_{2}$, then on
$C_{O}^{s_{2}}$ we have
\begin{equation*}
 \sigma_{AB}= 0 = \overline{T}_{11}\;,\
 \tau = \frac{n-1}{r}
 \;.
\end{equation*}
(This situation will be referred to as that of the Null-Cone
Theorem (NCT).) It is further shown in~\cite{CCG} that, under
suitably stronger energy conditions, equality implies that the
metric is that of the model space on the domain of dependence
of $C_{O}^{s_{2}}$. The proofs of those facts provide a
non-trivial illustration of the formalism developed here, as
specialized to the simpler problem treated in~\cite{CCG}.

\section{Constraints and gauge preservation}
 \label{ConstraintsAndGauge}

The obvious analogue on a null submanifold $M_{0}$ of the
spacelike constraints operator is $\overline{S}_{\alpha\beta}
\ell^{\beta}$, where $\ell$ denotes the field of null normals to
$M_{0}$ normalized in some arbitrary way. Derivatives of the
metric in $\overline{S}_{\alpha\beta}\ell^{\beta}$ transversal
to the light-cone appear only at first order. Some of them
\footnote{Compare~\cite{RendallCIVP,DamourSchmidt} in
space-dimension three.} cancel magically between the various
terms contributing to $\overline{S}_{\alpha\beta}\ell^{\beta}$,
and those that remain can be expressed in terms of
$\overline{H}$. So, in the explicit form of
$\overline{S}_{\alpha\beta} \ell^{\beta} $, one can replace
every occurrence of $\overline{\partial_{0} g_{01}}$,
$\overline{\partial_{0} g_{0A}}$ and
$\overline{g}^{AB}\overline{\partial_{0} g_{AB}}$ by
$\overline{H}_{\alpha}$, $\overline{W}_{\alpha}$, and terms
containing only derivatives along $M_{0}$. We define $n+1$
operators $\mathcal{L}_{\alpha}(\overline{H})$, $\alpha=0,\ldots
,n$, by adding all the terms involving $\overline{H}$ in
$\overline{S}_{\alpha\beta}\ell^{\beta}$.
One can then define $n+1$ operators $\mathcal{C}_{\alpha}$
by whatever remains; thus the $\mathcal{C}_{\alpha}$'s coincide
with $\overline{S}_{\alpha\beta}\ell^{\beta}  $ when
$\overline{H}_{\alpha}$ vanishes. Explicit formulae for
$\mathcal{C}_{\alpha}$ are given in (\ref{C1final}),
(\ref{CAfinal}) and in (\ref{C0_2}) below, while
$\mathcal{L}_{\alpha}$ can be found in (\ref{L1final}),
(\ref{LAfinal}) and (\ref{L0final}).

We will prove the following theorem, which is the key element
of our analysis of the Cauchy problem for the Einstein
equations on a light-cone:

\begin{theorem}
\label{Existence3}
\begin{enumerate}
\item The operator $\overline{S}_{\alpha\beta}\ell^{\beta}$
    on a null submanifold $M_{0}$ can be written as a sum,
\begin{equation*}
 \overline{S}_{\alpha\beta} \ell^{\beta} \equiv
 \mathcal{L}_{\alpha} +\mathcal{C}_{\alpha}
 \;,
\end{equation*}
where $\mathcal{L}_{\alpha} $ vanishes when
$\overline{H}=0$, while the operator $\mathcal{C}_{\alpha}$
depends only on the values $\overline{g}$ on $M_{0}$ of the
spacetime metric, on the choice of the null vector $\ell$,
and on  $\overline{W}$, which depends on the
chosen target space of the wave-map gauge.
The operators $\mathcal{C}_{\alpha}$ will be called
\emph{Einstein-wave-map gauge constraint operators}.

\item In adapted null coordinates:
\begin{enumerate}
\item[a)] The operators $\mathcal{C}_{\alpha}$ lead to
    a hierarchy of ordinary differential operators for
    the coefficients of $\overline{g}$ along the
    generators, all linear when the first constraint
    $\overline{S}_{\alpha\beta}\ell^{\alpha}\ell^{\beta}=
    \overline{T}_{\alpha\beta}\ell^{\alpha}\ell^{\beta}$
    has been solved.

\item[b)] The operators $\mathcal{L}_{\alpha}$ together
    with the wave-gauge reduced Einstein equations lead
    to a hierarchy of homogeneous first order {ordinary
    linear differential operators along the generators}
    for the components $\overline{H}_{\alpha}$ if the
    spacetime metric $g$ satisfies on $M_{0}$ the reduced
    Einstein equations.
\end{enumerate}
\end{enumerate}
\end{theorem}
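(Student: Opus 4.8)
The plan is to prove part~1 by computing $\overline{S}_{\alpha\beta}\ell^{\beta}=\overline{S}_{\alpha 1}$ directly in the adapted null coordinates, and then to extract parts~2(a) and~2(b) from the resulting explicit formulae. The structural fact that drives part~1 is that on a characteristic hypersurface $\overline{g}^{00}\equiv\overline{g}^{0A}\equiv 0$ (see \eqref{null5}): writing $R_{\alpha\beta}=-\tfrac12 g^{\lambda\mu}\partial_{\lambda}\partial_{\mu}g_{\alpha\beta}+\tfrac12(g_{\alpha\lambda}\partial_{\beta}\Gamma^{\lambda}+g_{\beta\lambda}\partial_{\alpha}\Gamma^{\lambda})+f[g,\partial g]_{\alpha\beta}$ as in \eqref{RiccihIdentity}, the only second-order derivative of $g_{\alpha\beta}$ transversal to $M_{0}$ that can survive on $M_{0}$ is the mixed one $\overline{\partial_{0}\partial_{1}g_{\alpha\beta}}$. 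Consequently $\overline{S}_{\alpha\beta}\ell^{\beta}$ contains the transversal derivatives of $g$ only through the first-order fields $\overline{\partial_{0}g_{11}}$, $\overline{\partial_{0}g_{1A}}$, $\overline{\partial_{0}g_{01}}$, $\overline{\partial_{0}g_{0A}}$, $\overline{\partial_{0}g_{AB}}$ and their tangential ($\partial_{1}$-)derivatives.

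Concretely, I would expand \eqref{Ricci}--\eqref{Christoffel}, use $\ell_{\alpha}\equiv\nu_{0}\delta_{\alpha}^{0}$ to write $\overline{S}_{\alpha 1}=\overline{R}_{\alpha 1}-\tfrac12\nu_{0}\delta_{\alpha}^{0}\,\overline{R}$ (so in particular $\overline{S}_{11}=\overline{R}_{11}$, since $\overline{g}_{11}=0$), insert the restricted Christoffel symbols together with $\overline{g}_{11}=\overline{g}_{1A}=0$ (hence $\partial_{1}\overline{g}_{11}=\partial_{A}\overline{g}_{11}=\partial_{1}\overline{g}_{1A}=\partial_{A}\overline{g}_{1A}=0$), and reorganize. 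The computation must then verify the two ``magical'' facts: that the transversal fields $\overline{\partial_{0}g_{11}}$, $\overline{\partial_{0}g_{1A}}$ and the trace-free part of $\overline{\partial_{0}g_{AB}}$ cancel identically among the various contributions, and that the surviving data $\overline{\partial_{0}g_{01}}$, $\overline{\partial_{0}g_{0A}}$, $\overline{g}^{AB}\overline{\partial_{0}g_{AB}}$ enter only through the combinations fixed by \eqref{Gamma2}--\eqref{27V.1}, i.e. through $\overline{\Gamma}^{\alpha}=\overline{H}^{\alpha}+\overline{W}^{\alpha}$ and their $\partial_{1}$-derivatives. Granting this, one defines $\mathcal{L}_{\alpha}$ as the sum of all terms in $\overline{S}_{\alpha 1}$ carrying a factor of a component of $\overline{H}$ — a first-order linear operator in $\overline{H}$ along the generators, vanishing when $\overline{H}=0$ — and $\mathcal{C}_{\alpha}$ as everything else, which by construction involves only $\overline{g}|_{M_{0}}$, its tangential derivatives, the choice of $\ell$, and $\overline{W}$ (hence the target $\hat{g}$); these are the expressions \eqref{C1final}, \eqref{CAfinal}, \eqref{C0_2} and \eqref{L1final}, \eqref{LAfinal}, \eqref{L0final}. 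This explicit computation — pushing \eqref{Ricci}--\eqref{Christoffel} through the adapted-coordinate Christoffel symbols and checking those cancellations — is the real work and the place where sign and index errors hide; everything afterwards is organizational.

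For part~2(a) one reads off the structure of the $\mathcal{C}_{\alpha}$. The first constraint $\mathcal{C}_{1}=\overline{T}_{\alpha\beta}\ell^{\alpha}\ell^{\beta}$ is $\overline{R}_{11}$ at $\overline{H}=0$; eliminating $\overline{\partial_{0}g_{11}}$ via \eqref{Gamma2} turns it into a generically nonlinear first-order ordinary differential equation along the generators relating $\Omega$ (equivalently $\tau$, after setting $\overline{g}_{AB}=\Omega^{2}\gamma_{AB}$) and $\nu_{0}$; its general solution involves a free function $\kappa$ (with $\kappa=0$ giving the Raychaudhuri equation for $\tau$), and fixing $\kappa$ determines $\overline{g}_{AB}$ and $\nu_{0}$. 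With these in hand, $\mathcal{C}_{A}=\overline{T}_{A\beta}\ell^{\beta}$ contains $\partial_{1}\nu_{A}$ with coefficients built only from already-known quantities, hence is a linear transport equation for $\nu_{A}$; and finally $\mathcal{C}_{0}=\overline{T}_{0\beta}\ell^{\beta}$ — in which the scalar-curvature contribution is re-expressed through $\overline{R}_{11}$, $\overline{R}_{1A}$, $\overline{R}_{AB}$, all already controlled — contains $\partial_{1}\overline{g}_{00}$ linearly, a linear ODE for $\overline{g}_{00}$. This is the announced hierarchy $\mathcal{C}_{1}\to(\overline{g}_{AB},\nu_{0})$, then $\mathcal{C}_{A}\to\nu_{A}$, then $\mathcal{C}_{0}\to\overline{g}_{00}$, all steps linear once the first is solved.

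For part~2(b) I would contract the identity \eqref{EinsteinHIdentity} with $\ell^{\beta}=\delta_{1}^{\beta}$ and restrict to $M_{0}$; using $\overline{g}_{1\mu}\equiv\nu_{0}\delta_{\mu}^{0}$, its right-hand side becomes $\tfrac12\bigl(\overline{g}_{\alpha\mu}\overline{\hat{D}_{1}H^{\mu}}+\nu_{0}\overline{\hat{D}_{\alpha}H^{0}}-\nu_{0}\delta_{\alpha}^{0}\,\overline{\hat{D}_{\mu}H^{\mu}}\bigr)$. For $\alpha=1$ and $\alpha=A$ only $M_{0}$-tangential derivatives of $\overline{H}$ occur (for $\alpha=A$, $\partial_{1}$-derivatives together with $\partial_{A}\overline{H}^{0}$), while for $\alpha=0$ the potentially transversal term $\overline{\partial_{0}H^{0}}$ cancels between $\hat{D}_{0}H^{0}$ and $\hat{D}_{\mu}H^{\mu}$. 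Equating this with the part-1 decomposition $\overline{S}_{\alpha 1}\equiv\mathcal{L}_{\alpha}+\mathcal{C}_{\alpha}$ and using the reduced Einstein equations $\overline{S_{\alpha\beta}^{(H)}}\ell^{\beta}=\overline{T}_{\alpha\beta}\ell^{\beta}$ on $M_{0}$, one obtains for each $\alpha$ a relation of the form $\mathcal{N}_{\alpha}(\overline{H})=\mathcal{C}_{\alpha}-\overline{T}_{\alpha\beta}\ell^{\beta}$, where $\mathcal{N}_{\alpha}$ is a homogeneous first-order linear operator in $\overline{H}$ whose derivative term is $\partial_{1}$ of exactly one of $\overline{H}^{0}$ ($\alpha=1$), $\overline{H}^{A}$ ($\alpha=A$), $\overline{H}^{1}$ ($\alpha=0$), the remaining terms involving only known quantities and $\overline{H}$-components earlier in the hierarchy, the precise form being dictated by \eqref{L1final}, \eqref{LAfinal}, \eqref{L0final}. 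In particular, on initial data solving the constraints the right-hand sides vanish, the hierarchy of ordinary linear equations along the generators is homogeneous, and the limit conditions at the vertex $O$ force $\overline{H}\equiv 0$, closing the circle with Theorem~\ref{Existence2}.
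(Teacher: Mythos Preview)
Your overall strategy---direct computation of $\overline{S}_{\alpha 1}$ in adapted null coordinates, identification of the transversal-derivative cancellations, and replacement of what survives by the harmonicity combinations $\overline{\Gamma}^{\alpha}=\overline{H}^{\alpha}+\overline{W}^{\alpha}$---is exactly the paper's approach (Sections~\ref{constraintC1}, \ref{constraintCA}, \ref{constraintC0}). The organization of parts~2(a) and~2(b), including the cancellation of $\overline{\hat{D}_{0}H^{0}}$ in the $\alpha=0$ component, also matches.

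However, your description of the ``magical'' cancellation pattern is inverted, and this matters because it reflects a misreading of \eqref{Gamma2}--\eqref{27V.1}. Those formulae show that the harmonicity functions $\overline{\Gamma}^{0}$, $\overline{\Gamma}^{A}$, $\overline{\Gamma}^{1}$ encode precisely $\overline{\partial_{0}g_{11}}$, $\overline{\partial_{0}g_{1A}}$, and the trace $\overline{g}^{AB}\overline{\partial_{0}g_{AB}}$; they contain \emph{no} information about $\overline{\partial_{0}g_{01}}$, $\overline{\partial_{0}g_{0A}}$, or the trace-free part of $\overline{\partial_{0}g_{AB}}$. The actual pattern is therefore the reverse of what you wrote: the derivatives $\overline{\partial_{0}g_{11}}$, $\overline{\partial_{0}g_{1A}}$, $\overline{g}^{AB}\overline{\partial_{0}g_{AB}}$ \emph{survive} in $\overline{S}_{\alpha 1}$ and are eliminated via $\overline{\Gamma}^{\alpha}$, while $\overline{\partial_{0}g_{01}}$, $\overline{\partial_{0}g_{0A}}$, and $\chi^{AB}\overline{\partial_{0}g_{AB}}$ are the ones that cancel or simply do not occur (see the two lemmata following \eqref{gABRABsimpler}). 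If you attempt the computation expecting the pattern you described, you will not find it.

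Two smaller inaccuracies in 2(a): after the harmonicity substitutions, $\mathcal{C}_{A}$ is a \emph{second}-order linear ODE in $\nu_{A}$ (first-order in the auxiliary field $\xi_{A}$ of \eqref{xiA}, see \eqref{CAfinal2}--\eqref{CAfinal3}), and $\mathcal{C}_{0}$ is a \emph{second}-order linear ODE in $\overline{g}^{11}$ (see \eqref{C0_2} and the Remark after \eqref{gABRABsimpler}); the second derivative enters precisely because eliminating $\overline{g}^{AB}\overline{\partial_{0}g_{AB}}$ via $\overline{\Gamma}^{1}$ introduces an extra $\partial_{1}$. The hierarchy and linearity claims remain correct.
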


Theorem~\ref{Existence3} will be proved in
Sections~\ref{constraintC1}-\ref{constraintCA}
and~\ref{constraintC0}.

A consequence of Theorems~\ref{Existence2} and~\ref{Existence3}
is the following:

\begin{theorem}
 \label{Existence4}
A $C^{3}$ Lorentzian metric $g^{(H)}$, solution of the Einstein
equations in wave-map gauge
$S_{\alpha\beta}^{(H)}=T_{\alpha\beta}$ in $Y_{O}^{T}$ with
$\nabla_{\alpha} T^{\alpha\beta}=0$,
and taking an initial value $\overline{g}$
on $C_{O}^{T}$, is a solution of the full Einstein equations
$S_{\alpha\beta} =T_{\alpha\beta}$ if and only if
$\overline{g}$ satisfies the constraints
$\mathcal{C}_{\alpha}=\overline{T}_{\alpha\beta}\ell^{\beta}$.
\end{theorem}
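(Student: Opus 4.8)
The plan is to obtain Theorem~\ref{Existence4} by combining the gauge-propagation Theorem~\ref{Existence2} with the structural Theorem~\ref{Existence3}, the common mechanism in both implications being that, when $g^{(H)}$ satisfies the reduced equations, the identity~\eqref{EinsteinHIdentity} together with the decomposition $\overline{S}_{\alpha\beta}\ell^{\beta}\equiv\mathcal{L}_{\alpha}(\overline{H})+\mathcal{C}_{\alpha}$ forces $\overline{H}$ to obey a \emph{homogeneous} hierarchy of first-order linear ordinary differential equations along the generators of $C_{O}^{T}$, which, integrated from the vertex under the limit conditions provided by the $C^{3}$ regularity of $g^{(H)}$, admits only the trivial solution $\overline{H}=0$. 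A preliminary observation used throughout: since $g^{(H)}$ is $C^{3}$ on $Y_{O}^{T}$, the reduced equations and the identity~\eqref{EinsteinHIdentity} extend by continuity to the closure, so in particular $\overline{S}^{(H)}_{\alpha\beta}\ell^{\beta}=\overline{T}_{\alpha\beta}\ell^{\beta}$ on $C_{O}^{T}$, which makes all the restrictions $\overline{(\cdot)}$ below legitimate.

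$(\Leftarrow)$ Assume $\mathcal{C}_{\alpha}=\overline{T}_{\alpha\beta}\ell^{\beta}$ on $C_{O}^{T}$. By Theorem~\ref{Existence3}(1), $\overline{S}_{\alpha\beta}\ell^{\beta}-\overline{T}_{\alpha\beta}\ell^{\beta}=\mathcal{L}_{\alpha}(\overline{H})+(\mathcal{C}_{\alpha}-\overline{T}_{\alpha\beta}\ell^{\beta})=\mathcal{L}_{\alpha}(\overline{H})$, while~\eqref{EinsteinHIdentity} contracted with the null, cone-tangent vector $\ell$, so that the transversal derivatives of $H$ drop out, rewrites the same quantity as a homogeneous first-order linear expression in $\overline{H}$ along the rays. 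Equating the two and invoking Theorem~\ref{Existence3}(2b) (the reduced equations together with the $\mathcal{L}_{\alpha}$) produces the announced homogeneous hierarchy: the algebraic equation for $\overline{H}_{1}$, then the transport equations in $x^{1}$ for $\overline{H}_{0}$ and $\overline{H}_{A}$. Integrating these successively from $O$, with the vanishing limits at the vertex supplied by Section~\ref{ssBcv} and Lemma~\ref{lemma4.2}, gives $\overline{H}=0$ on $C_{O}^{T}$. Since $\nabla_{\alpha}T^{\alpha\beta}=0$, Theorem~\ref{Existence2} then yields $S_{\alpha\beta}=T_{\alpha\beta}$ on $Y_{O}^{T}$.

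$(\Rightarrow)$ Assume $S_{\alpha\beta}=T_{\alpha\beta}$ on $Y_{O}^{T}$. Then $S_{\alpha\beta}\equiv S^{(H)}_{\alpha\beta}$ there, so~\eqref{EinsteinHIdentity} gives the first-order homogeneous system $\hat{D}^{\alpha}H^{\beta}+\hat{D}^{\beta}H^{\alpha}-g^{\alpha\beta}\hat{D}_{\lambda}H^{\lambda}=0$ (whose trace already forces $\hat{D}_{\lambda}H^{\lambda}=0$ for $n\geq2$). Restricting this system to $C_{O}^{T}$ and contracting with $\ell$ and with the vectors $\partial_{A}$ tangent to the sections of the cone yields, again because $\ell$ is null and tangent to $C_{O}$ so that $\ell^{\gamma}\hat{D}_{\gamma}$ is intrinsic, the same type of homogeneous first-order linear hierarchy for $\overline{H}_{\alpha}$ along the generators, hence $\overline{H}=0$ on $C_{O}^{T}$ by the vertex limits. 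Restricting $S_{\alpha\beta}=T_{\alpha\beta}$ to $C_{O}^{T}$ gives $\overline{S}_{\alpha\beta}\ell^{\beta}=\overline{T}_{\alpha\beta}\ell^{\beta}$, and since $\mathcal{L}_{\alpha}$ vanishes when $\overline{H}=0$, Theorem~\ref{Existence3}(1) gives $\mathcal{C}_{\alpha}=\overline{S}_{\alpha\beta}\ell^{\beta}=\overline{T}_{\alpha\beta}\ell^{\beta}$.

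The step I expect to be the real work is the integration of the $\overline{H}$-hierarchy through the tip of the cone: the transport operators along the generators are singular at $r=x^{1}=0$, so one must check that the vertex estimates of Section~\ref{ssBcv} (and Lemma~\ref{lemma4.2}) are exactly of the strength needed to annihilate the homogeneous solution at $O$, and moreover for a \emph{general} target $\hat{g}$, not only for $\hat{g}=\eta$. A subsidiary but necessary point is to justify carefully that ``$C^{3}$ solution of the reduced equations on the open set $Y_{O}^{T}$'' really does allow one to evaluate those equations, and the identity~\eqref{EinsteinHIdentity}, on the characteristic cone $C_{O}^{T}$ itself.
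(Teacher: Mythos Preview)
Your $(\Leftarrow)$ direction is exactly the paper's ``if'' argument: the identity
$\overline{S}_{\alpha\beta}\ell^{\beta}\equiv\mathcal{C}_{\alpha}+\mathcal{L}_{\alpha}$
together with $\overline{S}^{(H)}_{\alpha\beta}=\overline{T}_{\alpha\beta}$ and the
constraint $\mathcal{C}_{\alpha}=\overline{T}_{\alpha\beta}\ell^{\beta}$ yields
\eqref{Lalpha}, which is the homogeneous hierarchy of Theorem~\ref{Existence3}(2b);
the paper then defers the actual integration of that hierarchy through the vertex
to Sections~\ref{ss5V.5}, \ref{solutionCA} and \ref{solutionC0}
(equations \eqref{19V.1}, \eqref{HApreserve5}, \eqref{H0preserve_10b}), exactly the
step you flag as ``the real work''. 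So on this half you and the paper agree.

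For $(\Rightarrow)$ the two approaches diverge. The paper disposes of ``only if''
in one sentence: once $g$ is in wave gauge (so $\overline{H}=0$ and hence
$\mathcal{L}_{\alpha}=0$) and solves the full Einstein equations, the identity
\eqref{ConstraintsDecomposition} leaves $\mathcal{C}_{\alpha}
=\overline{S}_{\alpha\beta}\ell^{\beta}=\overline{T}_{\alpha\beta}\ell^{\beta}$.
You instead derive from $S=S^{(H)}$ the first-order bulk system
$g_{\alpha\lambda}\hat{D}_{\beta}H^{\lambda}
+g_{\beta\lambda}\hat{D}_{\alpha}H^{\lambda}=0$, restrict it to $C_{O}$, and try to
run a second vertex-hierarchy argument to get $\overline{H}=0$. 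This is more
careful in spirit, but there is a concrete gap. Contracting your system with
$\ell^{\alpha}\ell^{\beta}$ on $C_{O}$ gives, for the Minkowski target in adapted
coordinates, simply $\partial_{1}\overline{H}^{0}=0$ (since
$\hat{\Gamma}^{0}_{1\mu}=0$), which is \emph{regular} at $r=0$: it only says
$\overline{H}^{0}$ is constant along each generator. Boundedness of $\overline{H}$
near the vertex --- which is all the $C^{3}$ hypothesis and Section~\ref{ssBcv}
give you --- does \emph{not} force this constant to vanish. Compare with the
``if'' direction, where the relevant equation is \eqref{19V.1},
$\partial_{1}\overline{H}^{0}+\tfrac{1}{2}\tau\,\overline{H}^{0}=0$, whose
non-trivial solutions behave like $r^{-(n-1)/2}$ and are therefore killed by mere
boundedness. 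So the ``same type of homogeneous first-order linear hierarchy'' you
invoke is in fact structurally different from the one in
Theorem~\ref{Existence3}(2b), and the vertex limits of Section~\ref{ssBcv} are not
of the required strength for it. To close your argument you would need to bring
in further components of the Killing-type system (and possibly the bulk equation
itself) to manufacture the missing singular coefficient --- or else to pin down
$H(O)$ directly from the coordinate normalisations.
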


\noindent\begin{proof} Theorem~\ref{Existence3} gives the
following identities, with $\mathcal{L}_{\alpha} $ a linear and
homogeneous first order differential operator along the null
vector $\ell$ for the vector $\overline{H}$,
\begin{equation}
\overline{S}_{\alpha\beta} \ell^{\beta}
\equiv
\overline{S}_{\alpha\beta}^{(H)}\ell^{\beta}
+\frac{1}{2}(\overline{g}_{\alpha\lambda} \overline{\hat{D}_{\beta} H^{\lambda} }
+\overline{g}_{\beta\lambda} \overline{\hat{D}_{\alpha}H^{\lambda} }
-\overline{g}_{\alpha\beta} \overline{\hat{D}_{\lambda} H^{\lambda} }) \ell^{\beta}
\equiv
\mathcal{C}_{\alpha} +\mathcal{L}_{\alpha} .
\label{ConstraintsDecomposition}
\end{equation}

The ``only if" part of the theorem results inmediately from the
identity (\ref{ConstraintsDecomposition}) when the metric $g$
is a solution of the full Einstein equations and is in wave
gauge, since then only $C_{\alpha}$ remains in that identity.

The ``if" part will be proved later by showing that
$\overline{H}^{\alpha}=0$ is the only solution,  for metrics
which are uniformly $C^{1}$ near the tip of the cone, of the
equations
\begin{equation}
\frac{1}{2}(\overline{g}_{\alpha\lambda}\overline{\hat{D}_{\beta}H^{\lambda}}+
\overline{g}_{\beta\lambda}\overline{\hat{D}_{\alpha}H^{\lambda}}-
\overline{g}_{\alpha\beta}\overline{\hat{D}_{\lambda}H^{\lambda}})\ell^{\beta}
=\mathcal{L}_{\alpha}
 \label{Lalpha}%
\end{equation}
which result from the identity (\ref{ConstraintsDecomposition})
when
$\mathcal{C}_{\alpha}=\overline{T}_{\alpha\beta}\ell^{\beta}$ and
$\overline{S}_{\alpha\beta}^{(H)}=\overline{T}_{\alpha\beta}$.
\end{proof}
\bigskip

The question of local geometric uniqueness of solutions is
addressed in Section~\ref{slgu}.

\section{The first constraint}
 \label{constraintC1}

\subsection{Computation of
$\overline{R}_{11}\equiv \overline{S}_{11}\equiv
\overline{S}_{\protect\alpha \protect\beta}
\ell^{\protect\alpha} \ell^{\protect\beta}$}

The component $R_{11}$  can be separated as
\begin{equation*}
R_{11}\equiv R_{11}^{(1)}+R_{11}^{(2)}
\; ,
\label{R11_1}
\end{equation*}
where $R_{11}^{(1)}$ is linear in first derivatives of the Christoffel
symbols and $R_{11}^{(2)}$ is quadratic in them. They are given by,
after a trivial simplification,
\begin{align}
R_{11}^{(1)}\equiv &\
\partial_{0}\Gamma_{11}^{0}+\partial_{A}\Gamma_{11}^{A}-%
\partial_{1}\Gamma_{10}^{0}-\partial_{1}\Gamma_{1A}^{A}
 \;,
\label{R11_2}
\\
R_{11}^{(2)}\equiv &\
\Gamma_{11}^{0}(\Gamma_{00}^{0}+\Gamma_{01}^{1}+\Gamma_{0A}^{A})+
\Gamma_{11}^{1}(\Gamma_{10}^{0}+\Gamma_{11}^{1}+\Gamma_{1A}^{A})+
\Gamma_{11}^{B}(\Gamma_{B0}^{0}+\Gamma_{B1}^{1}+\Gamma_{BA}^{A})
\notag \\
&\
-\Gamma_{10}^{0}\Gamma_{10}^{0}-2\Gamma_{10}^{1}\Gamma_{11}^{0}-
\Gamma_{11}^{1}\Gamma_{11}^{1}-2\Gamma_{1A}^{1}\Gamma_{11}^{A}-
2\Gamma_{1A}^{0}\Gamma_{10}^{A}-\Gamma_{1B}^{A}\Gamma_{1A}^{B}.
\label{R11_3}
\end{align}

We must take care when taking derivatives transversal to the
cone, i.e. $\partial_{0}$, that our coordinates are valid only
on the cone. We will then use the trivial identity
\begin{equation}
\overline{\partial_{\lambda}\Gamma_{\beta\gamma}^{\alpha}}\equiv
\overline{g}^{\alpha\mu}(\overline{\partial_{\lambda}[\mu,\beta\gamma]})
+\overline{(\partial_{\lambda}g^{\alpha\mu})}\overline{[\mu,\beta\gamma]}.
\label{R11_4}
\end{equation}
In $\overline{R}_{11}^{(1)}$ only $\overline{\Gamma}_{11}^{0}$
is differentiated transversally to $C_{a}$. We have, since
$\overline{g}_{11}=\overline{g}_{1A}=0$,
\begin{equation}
\overline{\partial_{0}\Gamma_{11}^{0}}=\frac{1}{2}\nu^{0}\partial_{1}%
\overline{\partial_{0}g_{11}}+(\partial_{1}\nu_{0}-\frac{1}{2}\overline{%
\partial_{0}g_{11}})\overline{\partial_{0}g^{00}},\text{ with \ } \overline{%
\partial_{0}g^{00}}=-(\nu^{0})^{2}\overline{\partial_{0}g_{11}},
\label{R11_5}
\end{equation}
\begin{equation}
-\partial_{1}\overline{\Gamma}_{10}^{0}\equiv -\partial_{1}(\frac{1}{2}\nu^{0}%
\overline{\partial_{0}g_{11}}) .
\label{R11_6}
\end{equation}
By using also
\begin{equation}
\overline{\Gamma}_{11}^{A}=0\;,\quad
\overline{\Gamma}_{1A}^{A}=\frac{1}{2}\overline{g}^{AB}\partial_{1}\overline{g}_{AB}=:\tau
\label{R11_7}
\end{equation}
and the harmonicity function (\ref{Gamma2}), we get
\begin{align}
\overline{R}_{11}^{(1)}\equiv &\
  (\nu^{0})^{2}\frac{1}{2}\overline{\partial_{0}g_{11}}
  \,\overline{\partial_{0}g_{11}}
+ \frac{1}{2}\partial_{1}\nu^{0}\overline{\partial_{0}g_{11}}
-\partial_{1}\tau
\label{R11_8}
\\
\equiv &\ \frac{1}{2}(\overline{\Gamma}_{1}+\tau )^{2}-\frac{1}{2}%
\nu^{0}\partial_{1}\nu_{0}(\overline{\Gamma}_{1}+\tau )-\partial_{1}\tau .
\label{R11_9}
\end{align}

The part $\overline{R}_{11}^{(2)}$ depends only on the values
of the Christoffel symbols on $C_{O}$. Using
$\overline{\Gamma}^{0}_{11}=
\overline{\Gamma}^{A}_{11}=\overline{\Gamma}^{0}_{1A}=0$ and
trivial simplifications we find that
\begin{equation*}
\overline{R}_{11}^{(2)}\equiv \overline{\Gamma}_{11}^{1}(\overline{\Gamma}_{10}^{0}+\overline{%
\Gamma}_{1A}^{A})-\overline{\Gamma}_{10}^{0}\overline{\Gamma}_{10}^{0}-\overline{\Gamma}%
_{1B}^{A}\overline{\Gamma}_{1A}^{B} \;.
\label{R11_10}
\end{equation*}
In the chosen coordinates, $\overline{R}_{11}^{(2)}$ reduces to
\begin{equation}
\overline{R}_{11}^{(2)}\equiv -\frac{1}{2}(\overline{\Gamma}_{1}+\tau )^{2}
+ \frac{1}{2} \nu^{0}\partial_{1}\nu _{0}(\overline{\Gamma}_{1}+\tau )+\nu
^{0}\partial_{1}\nu _{0}\tau -\frac{1}{2}\tau (\overline{\Gamma}_{1}+\tau )
-\chi _{A}{}^{B}\chi _{B}{}^{A} \;.
\label{R11_11}
\end{equation}
Adding (\ref{R11_9}) and (\ref{R11_11}) we obtain
\begin{eqnarray}
\overline{R}_{11}&\equiv& -\partial_{1}\tau +\nu^{0}\partial_{1}\nu_{0}\tau -
\frac{1}{2}\tau (\overline{\Gamma}_{1}+\tau )-\chi_{A}{}^{B}\chi_{B}{}^{A}
\label{R11_13}
\\
 & \equiv &  -\partial_{1}\tau
+ \overline{\Gamma}^{1}_{11}\tau -\chi_{A}{}^{B}\chi_{B}{}^{A}
 \;,
 \label{R11_13x}
\end{eqnarray}

\subsection{The $\protect\mathcal{C}_{1}$ constraint operator}
\label{ss5V.1}

By definition of the wave-gauge vector $H$ we have
$\overline{\Gamma}_{1} \equiv \overline{W}_{1} + \overline{H}_{1}$,
and hence (\ref{R11_13}) decomposes as
\begin{equation}
\overline{R}_{11} \equiv \mathcal{C}_{1} + \mathcal{L}_{1} \; ,
\end{equation}
with
\begin{equation}
\mathcal{C}_{1} := -\partial_{1}\tau
+\left(\nu^{0}\partial_{1}\nu_{0}-\frac{1}{2}(\overline{W}_{1}+\tau)\right)\tau
- |\sigma|^{2} - \frac{\tau^{2}}{n-1} \; ,
\label{C1final}
\end{equation}
where we have separated $\chi_{A}{}^{B}$ in trace-free and pure
trace parts (see (\ref{sigma})), and
\begin{equation}
\mathcal{L}_{1} := -\frac{1}{2}\overline{H}_{1}\tau \; .
\label{L1final}
\end{equation}

As announced (\ref{C1final}) involves only the values of the
metric coefficients on the light-cone; equivalently, no
derivatives of the metric transverse to the light-cone occur
there:
\begin{equation}
\overline{W}_{1}
=
2\hat{\Gamma}^{0}_{01}+\nu_{0}\overline{g}^{AB}\hat{\Gamma}^{0}_{AB}
\stackrel{\eta}=
- \nu_{0} x^{1} \overline{g}^{AB}s_{AB}
\;,
\label{W1Minkowski}
\end{equation}
where we have assumed that the target metric takes the adapted
form \emph{in the same coordinate system}, so that
$\hat{\Gamma}^{0}_{11}=0$ and $\hat{\Gamma}^{0}_{1A}=0$. The
Einstein equation $R_{11}=T_{11}$ in wave-gauge provides,
in this sense, a constraint equation $\mathcal{C}_{1}=\overline{T}_{11}$
for the metric components $\overline{g}_{\mu\nu}$.

The constraint equation $\mathcal{C}_{1}=\overline{T}_{11}$
contains as unknowns only the components $\overline{g}_{AB}$
and $\nu_{0}$ if it is so of $\overline{T}_{11}$. A simple
strategy is then to prescribe $\overline{g}_{AB}$ (compare
Bondi et al.~\cite{BondiEtal62}) and use the definition
(\ref{chi2}) to compute $\chi_{AB}$, hence also $\sigma_{A}{}^{B}$
and $\tau$. The first constraint reads then as a differential
first order equation for $\nu_{0}$, linear if
$\overline{T}_{11}$ is independent of $\nu_{0}$ since
$\overline{W}_{1}$ is linear in $\nu_{0}$. (Recall that we are
assuming $\hat W^{\mu}=0$ unless explicitly indicated
otherwise.) The solution will lead to a Lorentzian metric as
long as $\nu_{0}$ is positive.

However, the equation is singular wherever $\tau$ vanishes, as
the resulting ODE for $\nu_{0}$ involves inverse powers of
$\tau$. For this reason it is of interest  to look for
alternatives, where $\tau$ is computed from the constraint,
rather than prescribed in advance. Following \cite{RendallCIVP}
we will prescribe only the conformal class of $\tilde{g}$. The
wave-map gauge constraint deduced from (\ref{C1final}) is then
an equation for $\nu_{0}$ and the conformal factor
$\Omega^{2}$. We can prescribe arbitrarily $\nu_{0}$ and then
determine $\Omega$. We can also,
generalizing an idea of Damour and Schmidt, impose to $\nu_{0}$
to satisfy a well chosen differential equation containing an
arbitrarily given function $\kappa$. We treat in detail the
case $\kappa=0$, which implies that for the obtained solution
$\nu_{0}$ the vector $\ell$ will be parallelly transported, in
other words $r$ will be an affine parameter, in the resulting
space-time.

\section{Solution of the $\protect \mathcal{C}_{1}$ constraint for
given $\protect \sigma$}
\label{solutionC1}

The operator $\mathcal{C}_{1}$ relates the three functions
$\tau$ (which, via equation (\ref{tau}), essentially describes
the evolution of the volume element of the sections $\Sigma$),
$\nu_{0}$ and $|\sigma|^{2}:=\sigma_{A}{}^{B}\sigma_{B}{}^{A}$. We
recall the following well known fact:

\begin{lemma}
The tensor $\sigma$ is determined by the conformal class of the
induced quadratic form $\tilde{g}$.
\end{lemma}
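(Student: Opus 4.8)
The plan is to prove the stronger statement that $\sigma_A{}^B$ is \emph{conformally invariant}, i.e.\ unchanged if one replaces $\overline{g}_{AB}$ by $\overline{g}'_{AB}:=\phi^{2}\overline{g}_{AB}$ for an arbitrary positive function $\phi$ of the coordinates $(x^{1},x^{A})$ on $C_{O}$; since a representative of the conformal class of $\tilde g$ is determined precisely up to such a rescaling, this yields the lemma. The key point that makes the argument work is that the only derivative of $\overline{g}_{AB}$ entering $\chi_{AB}=\frac12\partial_{1}\overline{g}_{AB}$ is $\partial_{1}$, which is tangent to the generators, so everything is algebraic in $\phi$ and $\partial_{1}\phi$ and no non-tensorial (Christoffel-type) correction ever enters.

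First I would record the transformation of $\chi_{AB}$: differentiating $\overline{g}'_{AB}=\phi^{2}\overline{g}_{AB}$ gives, via (\ref{chi2}), $\chi'_{AB}=\phi(\partial_{1}\phi)\,\overline{g}_{AB}+\phi^{2}\chi_{AB}$. Raising an index with $\overline{g}'^{BC}=\phi^{-2}\overline{g}^{BC}$ then produces the clean relation
\[
\chi'_A{}^B = \chi_A{}^B + \phi^{-1}(\partial_{1}\phi)\,\delta_A^B \;,
\]
so a conformal rescaling shifts the mixed tensor $\chi_A{}^B$ of (\ref{chi3}) purely by a trace term. Taking the trace, (\ref{tau}) gives $\tau'=\tau+(n-1)\phi^{-1}\partial_{1}\phi$, and then subtracting $\frac{1}{n-1}\tau'\delta_A^B$ as in (\ref{sigma}) produces
\begin{align*}
\sigma'_A{}^B
&= \chi'_A{}^B - \tfrac{1}{n-1}\tau'\delta_A^B \\
&= \chi_A{}^B + \phi^{-1}(\partial_{1}\phi)\delta_A^B - \tfrac{1}{n-1}\bigl(\tau + (n-1)\phi^{-1}\partial_{1}\phi\bigr)\delta_A^B \\
&= \chi_A{}^B - \tfrac{1}{n-1}\tau\,\delta_A^B = \sigma_A{}^B \;.
\end{align*}

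There is no real obstacle here: the exact cancellation of the pure-trace shift is the entire content of the lemma, and it is just the familiar fact that the shear of a null congruence carries a fixed conformal weight, so that its $(1,1)$ version is conformally invariant. As a by-product, writing $\overline{g}_{AB}=\Omega^{2}\gamma_{AB}$ with $\gamma$ any chosen representative of the conformal class, the same computation gives the explicit formula $\sigma_A{}^B=\tfrac12\gamma^{BC}\partial_{1}\gamma_{AC}-\tfrac{1}{2(n-1)}\delta_A^B\,\gamma^{CD}\partial_{1}\gamma_{CD}$, manifestly independent of $\Omega$, which is exactly what is needed to justify prescribing only the conformal class of $\tilde g$ in Section~\ref{solutionC1}.
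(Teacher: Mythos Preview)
Your proof is correct and is essentially the same argument as the paper's: the paper writes $\tilde g=\Omega^{2}\gamma$, computes $\chi_A{}^C=\tfrac12\gamma^{BC}\partial_1\gamma_{AB}+\delta_A^C\,\partial_1\log\Omega$, and observes that the trace-free part (\ref{5VI.5bnew}) is independent of $\Omega$. Your version phrases this as invariance under $\overline g_{AB}\mapsto\phi^2\overline g_{AB}$ and then recovers exactly the same explicit formula at the end, so the two proofs differ only in presentation.
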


\begin{proof}
To see this, let us write
\begin{equation*}
\tilde{g}=\Omega^{2}\gamma
\;,
\end{equation*}
with $\gamma$ a degenerate quadratic form on $C_{O}$ such that
$\gamma_{11}\equiv\gamma_{1A}\equiv0$. Then
\begin{equation*}
\chi_{AB}\equiv
\frac{1}{2}\Omega^{2}\partial_{1}\gamma_{AB}
+\gamma_{AB} \Omega\,\partial_{1}\Omega
\;,
\end{equation*}
and thus
\begin{equation}
\chi_{A}{}^{C}\equiv
\frac{1}{2}\gamma^{BC}\partial_{1}\gamma_{AB}
+\delta_{A}^{C}\,\partial_{1}\log\Omega
\;,
\end{equation}
hence the trace-free part of $\chi_{A}{}^{B}$ is
\begin{equation}
\sigma_{A}{}^{C}\equiv
\frac{1}{2}\gamma^{BC}\partial_{1}\gamma_{AB}
-\delta_{A}^{C}\frac{\partial_{1}(\log\sqrt{\det\gamma_{\Sigma}})}{n-1}
\; ,
\label{5VI.5bnew}
\end{equation}
where $\gamma_{\Sigma}$ denotes the positive definite $x^{1}$
dependent quadratic form on $\Sigma$ with components
$\gamma_{AB}$. We see that the traceless tensor $\sigma$ is
independent of the conformal factor, hence depends only on the
conformal class of $\tilde{g}$. In particular $\sigma$ vanishes
if $\tilde{g}$ is conformal to a quadratic form independent of
$r:=x^{1}$.
\end{proof}

If $\gamma$ and its first derivatives satisfy the vertex limits
spelled out for $\tilde{g}$ in section~\ref{ssBcv}, then
$\underset{r\rightarrow 0}{\lim}\,r|\sigma|=0;$ we say that a
degenerate quadratic form $\gamma$ on $C_{O}$, with
$\gamma_{11}\equiv\gamma_{1A}\equiv0$, is admissible if it is
$C^{1}$ on $C_{O}-O$, i.e. for $r>0$, and such that
$|\sigma|^{2}$ is $C^{0}$ for $r\geq0$, hence bounded for
finite $r\geq0$. Given $\sigma$ the constraint
$C_{11}=\overline{T}_{11}$ appears as a relation between the
functions $\tau$ and $\nu_{0}$. Since it involves radial
derivatives of both $\tau$ and $\nu_{0}$ (which can actually be
grouped as $\partial_{1}(\nu^{0}\tau)$) we could prescribe one
of them and integrate for the remaining field, or else provide
an additional differential equation for either of $\tau$ or
$\nu_{0}$ and integrate simultaneously the coupled system of
the constraint and this new equation.  In the remainder of this
section we show how to solve the constraint by prescribing
$\nu_{0}$, either explicitly (section~\ref{prescribenu0}) or
through a differential condition (rest of section
\ref{solutionC1}).

\subsection{Prescribed $\protect \nu_{0}$}
\label{prescribenu0}

Suppose the function $\nu_{0}$ is  arbitrarily  prescribed,
then the constraint equation becomes a differential equation
for $\tau$. It is convenient to introduce the scalar  function
(recall that $\tilde{g}_{\Sigma}$ denotes the restriction of
$\tilde{g}$ to $\Sigma$)
\begin{equation}
 \label{10II.1}
\varphi :=
\left(\frac{\det \tilde{g}_{\Sigma}}{\det s_{n-1}}\right)^{1/(2n-2)} =
\Omega \, \left(\frac{\det\gamma_{\Sigma}}{\det s_{n-1}}\right)^{1/(2n-2)}
\; ,
\end{equation}
so that
\begin{equation}
\tau = (n-1) \, \partial_{1}\log\varphi ,
\quad \text{ or } \quad
\partial_{1} \varphi = \frac{\tau}{n-1}\varphi \; .
\label{tauvarphi}
\end{equation}
The normalization of $\varphi$ has been chosen to have $\varphi=r$
for Minkowski. Using this variable the constraint reads
\begin{equation}
- \partial_{11}^{2}\varphi
+ \left(\nu^{0}\partial_{1}\nu_{0}-\frac{1}{2}\overline{W}_{1}
       -\frac{n-1}{2} \frac{\partial_{1}\varphi}{\varphi}\right)
 \partial_{1}\varphi
=
\frac{|\sigma|^{2}+\overline{T}_{11}}{n-1} \, \varphi
 \;,
\label{R11_13dnew}
\end{equation}
to be integrated outwards with initial data $\varphi(O)=0$ and
$\partial_1\varphi(O)=1$. As already mentioned,
$\overline{W}_{1}$ contains $\varphi$ nonlinearly, and in
principle $\overline{T}_{11}$ could also depend on $\varphi$.
In general this scheme could be considered for a larger class
of gauge
\begin{equation}
 \label{6VI.21new}
 \overline{W}_{1} = \overline{W}_{1}(\gamma_{AB},\varphi,\nu_{0}, r,x^{A})
 \;,
\end{equation}
and sources of the form
\begin{equation}
 \label{6VI.22new}
 \overline{T}_{11} = \overline{T}_{11}(
 \text{source data},
  \gamma_{AB},\partial_{i}\gamma_{AB}, \varphi,\partial_{1}\varphi,\nu_{0}, \partial_{i}\nu_{0},r,x^{A})
 \;, \phantom{xxx}
\end{equation}
where $\partial_{i}$ denotes derivatives tangential to the
light-cone,  and by ``source data" we mean non gravitational
data, for example fields determined from characteristic initial
data for scalar, or Maxwell, fields. The wave-map gauge
condition (\ref{Minkowski6}) is clearly of the form
(\ref{6VI.21new}). In Section~\ref{sss5V.3} we show that both
scalar and Maxwell fields lead to a coefficient
$\overline{T}_{11}$ of the energy-momentum tensor compatible
with (\ref{6VI.22new}).

\subsection{Differential equation for $\protect\nu_{0}$}

The choice made by Rendall is to assume that $x^{1}$ is an
affine parameter along the null rays, in other words that the
vector $\ell = \frac{\partial}{\partial x^{1}}$ is parallelly
transported along the null rays by the connection of the
spacetime he constructs, i.e.
$\overline{\Gamma}^{1}_{11}=0$; equivalently
$\partial_{1}\nu_{0} = (\overline{\Gamma}_{1}+\tau)\nu^{0}/2$.
Now, this last equation contains a derivative transversal to
the light-cone, which is not part of the characteristic initial
data.
Extending
to the cone an idea of Damour and Schmidt\cite{DamourSchmidt}
concerning two intersecting surfaces, in anticipation of the
fact that our solution will satisfy $\overline{H}_{1}=0$, we
could impose to $\nu_{0}$ to satisfy the equation
\begin{equation}
\partial_{1}\nu_{0}=\frac{1}{2}(\overline{W}_{1}+\tau)\nu_{0}
\; ,
\label{nu0equation_kappa0}
\end{equation}
which implies, modulo $\overline{H}_{1}=0$, that
$\overline{\Gamma}_{11}^{1}=0$. When $\nu_{0}$ satisfies
(\ref{nu0equation_kappa0}) the constraint
$\mathcal{C}_{1}=\overline{T}_{11}$
reduces to a Raychaudhuri type equation for the only unknown
$\tau$
\begin{equation}
\partial_{1}\tau +\frac{\tau^{2}}{n-1}+|\sigma|^{2}+\overline{T}_{11}=0
\;.
 \label{9IV10.1}
\end{equation}
More generally all solutions of (\ref{9IV10.1}) can be
obtained by introducing  an arbitrary function $\kappa$ and
solving the pair of equations
\begin{equation}
\partial_{1}\tau -\kappa \tau
+ \frac{\tau^{2}}{n-1}
+ |\sigma|^{2} + \overline{T}_{11}
=0
\;,
 \label{19V10.3}
\end{equation}
whose only unknown is $\tau$ when $|\sigma|^{2}$
and $\overline{T}_{11}$ are known, and
\begin{equation}
\partial_{1}\nu_{0}=\frac{1}{2}(\overline{W}_{1}+\tau)\nu_{0}
+\kappa \nu_{0}
\; .
\label{nu0equation}
\end{equation}
When $\nu_{0}$ satisfies this equation and
$\overline{H}_{1}=0$, then $\overline{\Gamma}_{11}^{1}=\kappa$.

Once $\tau$ is determined we can use (\ref{tauvarphi}) to
obtain $\varphi$ and finally (\ref{nu0equation}) to compute
$\nu_{0}$.

\begin{remark} {\rm
The equations derived here would be dramatically
simplified if one simultaneously imposes $\nu_{0}=1$ and
$\kappa=0$; see, e.g.,~\cite{CJK}. However, these two
conditions together with the wave-gauge condition, which is the
cornerstone of our analysis, would impose undesirable
geometrical restrictions on the initial data.
} \end{remark}

Equation (\ref{19V10.3}) is, if $\overline{T}_{11}$ does not
depend on $\tau$ or $\varphi$, a Riccati differential equation
on each null ray and hence can be rewritten, precisely using
the variable $\varphi$, as a linear second-order equation
\begin{equation}
- \partial_{11}^{2} \varphi + \kappa \, \partial_{1} \varphi =
\frac{|\sigma|^{2}+\overline{T}_{11}}{n-1} \, \varphi \;,
 \label{31III10.2}
\end{equation}
to be integrated outwards with initial values $\varphi(O)$ and
$\partial_{1}\varphi(O)$. We have assumed chosen an admissible
$\gamma$, hence $|\sigma|^{2}$ continuous for $r\ge 0$. We
assume the same holds for $\overline{T}_{11}$. Remark that for
a continuous tensor $T$, i.e. with continuous components
$\underline{T_{\mu\nu}}$ in the coordinates $y^{\alpha}$, we
will have $\overline{T}_{11}$ continuous, but
$\lim_{r\rightarrow 0} \overline{T}_{11}$ a function of angles
in general since it holds that
\begin{equation*}
\lim_{r\to 0} \overline{T}_{11}
= \underline{T_{00}}(0) + 2 \underline{T_{i0}}(0) \frac{y^{i}}{r}
 + \underline{T_{ij}}(0) \frac{y^{i} y^{j}}{r^{2}}
\;.
\end{equation*}

When $|\sigma|^{2}+\overline{T}_{11}$ is continuous for $r\ge
0$ standard ODE theory guarantees that a solution with given
initial values exists globally. However, a positive definite
metric $\overline{g}_{AB}$ is only obtained from the positive
part of the solution. The relevant initial conditions are
$\varphi(O)=0$ and $\partial_{1}\varphi(O)=1$, so $\varphi$ is
initially tangent to $\varphi=r$.

We consider the case $\kappa=0$, that is, $x^{1}$ is an affine
parameter. Assuming $\overline{T}_{11}\ge 0$, the equation
satisfied by $\varphi$ shows that it is a concave function of
$x^{1}$ on each null ray wherever positive, and hence there are
two possibilities: either $\varphi$ is a monotone increasing
function for all real $r$, with $0\le\varphi\le r$ and $0< \tau
< (n-1)/ r$, or else there is a first local maximum, at which
$\partial_{1}\varphi=0$ and hence the expansion $\tau$ also
vanishes there. This is related to the formation of
outer-trapped surfaces on the cone $C_{O}$. Once a maximum has
been reached, $\varphi$ will necessarily vanish for some larger
value of $r$.

We now turn to a direct analysis in terms of $\tau$,
which allows stating results in a more geometric way.

\subsection{Solution of the Raychaudhuri equation}

We continue to use a Minkowski target and we make the choice
$\kappa=0$, so that $x^{1}$ will be an affine parameter along the
null rays.
Equation (\ref{19V10.3}) then reads as a Raychauduri equation
\begin{equation}
 \label{8.1}
\partial_{1}\tau +\frac{1}{n-1}\tau^{2}+|\sigma|^{2}+\overline{T}_{11} = 0
\;.
\end{equation}
This is a first order ODE for $\tau$ when
$|\sigma|^{2}:=\sigma_{A}{}^{B}\sigma_{B}{}^{A}$ and
$\overline{T}_{11}$ are known.

\subsubsection{NCT case}

When $|\sigma|^{2}+\overline{T}_{11}=0$, the equation admits the
solution corresponding to the Minkowskian cone\footnote{It is the only
solution such that ${\tau^{-1}}$ tends to zero at the vertex
of the cone.}:
\begin{equation}
\tau_{\eta}=\frac{n-1}{x^{1}}.\quad
\end{equation}
The value $|\sigma|=0$ further imposes
\begin{equation}
\chi_{A}{}^{B}=\frac{1}{2x^{1}}\delta_{A}^{B},
\text{ \ i.e. \ \ }
\partial_{1}\overline{g}_{AB}=\frac{2}{x^{1}}\overline{g}_{AB}
 \;.
\end{equation}
With our choice of  frame of  coordinates at the vertex, the
solution is the Minkowskian solution
\begin{equation}
\overline{g}_{AB}=(x^{1})^{2}s_{AB}
 \;,
\end{equation}
as used in the null-cone theorem of~\cite{CCG}.

\subsubsection{General case, a global existence theorem}
\label{globalexistenceC1}

We denote $x^{1}$ by $r$ and $\partial_{1}$ by a prime.
In all that follows we write, and solve,
differential equations in $r$, with constant initial values
(mostly zero) for $r=0$. We do not write explicitly the
dependence on the other coordinates $x^{A}$, though it occurs in
the solutions and in the coefficients.

1. In a neighbourhood of $r=0$ we define a new function $y$ by
\begin{equation*}
y:=\frac{n-1}{\tau}
 \;.
\end{equation*}
Equation~(\ref{8.1}) becomes
\begin{equation}
 \label{3V.1}
y^{\prime}=1+\frac{1}{n-1}f^{2}y^{2},
\qquad
f^{2}:=|\sigma|^{2}+\overline{T}_{11}
 \;.
\end{equation}
In agreement with Section~\ref{sssvl2}, we seek
a solution such that $y(0)=0$. The equation implies that $y$ is
increasing and $y\ge r$.

We assume as before that $\frac{1}{n-1}f^{2}$ is continuous and
bounded by a number $A^{2}$. Then $y$ exists, is of class $C^{1}$,
is unique, and is bounded by the solution of the problem
\begin{equation*}
z^{\prime} =1+A^{2}z^{2}, \quad z(0)=0
 \;,
\end{equation*}
as long as that solution exists. The solution is
\begin{equation}
 \label{zsolution}
z=A^{-1}\tan(Ar)
\;.
\end{equation}
Hence $z$ is defined, $C^{\infty} $, and bounded, as well as
all its derivatives, for $0\leq r\leq a$, for any $a<A^{-1}
\frac{\pi} {2}$.

For $0\le r\le a \le A^{-1}$, $z$ is such that
\begin{equation*}
r\le z \le r+A^{2}r^{3} \quad \;.
\end{equation*}
We have defined $\psi$ as
\begin{equation}
 \label{taupsidef}
 \psi:=\tau_{\eta}-\tau , \qquad
 \tau_{\eta}\equiv\frac{n-1}{r},
\end{equation}
and hence we have, since $r<y\leq z$,
\begin{equation}
 \label{zineq}
 0\le\frac{1}{n-1}\psi \equiv \frac{1}{r}-\frac{1}{y}
 \le \frac{1}{r}-\frac{1}{z}
 \le \frac{1}{r}-\frac{1}{r+A^{2}r^{3}}
 = \frac{A^{2} r}{1+A^{2}r^{2}}
 \le A^{2} r \; .
\end{equation}
That is
\begin{equation}
 \label{22XI.41}
0 \leq \psi \leq (n-1)A^{2} r
 \;.
\end{equation}

2. For large $r$ we use the decay of $f^{2}$. Using the
definition (\ref{taupsidef}) we obtain
\begin{equation}
\psi^{\prime} +\frac{2}{r}\psi =\frac{1}{n-1}\psi ^{2}+ f^{2}
 \;.
\end{equation}
This gives
\begin{equation}
u^{\prime} =\frac{1}{n-1}\frac{u^{2}}{r^{2}}+r^{2}f^{2} \ge 0\;,
\qquad u:=r^{2}\psi \;.
\label{1V.1}
\end{equation}
Hence $u$ is an increasing function of $r$, and both $u$ and
$\psi$ are positive as $u(0)=0$.

For $r\geq a$ we replace the problem to solve by the integral equation
\begin{equation}
 \label{30IV.1}
u(r)=\frac{1}{n-1}\int_{a}^{r} \frac{u^{2}(\rho)}{\rho^{2}} d\rho
+\int_{a}^{r}\rho ^{2}f^{2}(\rho )d\rho +u_{a}
 \;,
\end{equation}
with $u_{a}:=u(a)\equiv a^{2}\psi (a)$. By (\ref{zsolution})
and the inequality $y\le z$ for $r\leq a$ we have
\begin{equation*}
\psi(r)\leq \tau_{\eta}(r)-\frac{n-1}{z(r)}=(n-1)\left( \frac{1}{r}-\frac{A}{\tan (Ar)}%
\right)
 \;,
\end{equation*}
hence
\begin{equation}
 \label{30IV.2}
u_{a}\leq (n-1)a\left(1-\frac{Aa}{\tan (Aa)}\right)
 \;.
\end{equation}

We assume that $r^{2}f^{2}$ is integrable for $r\in \lbrack
a,\infty )$, and we set
\begin{equation*}
C_{a}:=u_{a}+B_{a},\qquad B_{a}:=\int_{a}^{\infty} r^{2}f^{2}dr
 \;.
\end{equation*}
The solution $u$ of the integral equation (\ref{30IV.1}) exists
and is bounded by a solution $v$ of the equation
\begin{equation*}
v(r)=\frac{1}{n-1}\int_{a}^{r} \frac{v^{2}(\rho )}{\rho ^{2}} d\rho +C_{a}
 \;,
\end{equation*}
as long as such a solution $v$ exists; equivalently, as long as
the differential equation
\begin{equation*}
v^{\prime} =\frac{1}{n-1}\frac{v^{2}}{r^{2}}
\end{equation*}
admits a solution $v$ with $v(a)=C_{a}$. The general solution of the
above equation is
\begin{equation*}
\frac{1}{v}=\frac{1}{(n-1)r}+c\;,\text{\ i.e. \ } v=\frac{(n-1)r}{1+(n-1)rc}
 \;.
\end{equation*}
It takes the value $C_{a}$ for $r=a$ if and only if
\begin{equation}
 \label{7V.1}
\frac{a(n-1)}{1+(n-1)ac}=C_{a},\text{ \ \ i.e. \ } c=c_{a}:=\frac{1}{C_{a}}-\frac{1}{(n-1)a}.
\end{equation}
The function $v$ remains positive and bounded if
\begin{equation}
1+(n-1)rc_{a}>0
 \;,
\end{equation}
hence $v$ is defined and bounded for all $r$ if $c_{a}\geq 0$, \
i.e. $C_{a}\leq (n-1)a;$ that is, when
\begin{equation}
 \label{30IV.3}
u_{a}+B_{a}\leq (n-1)a \;.
\end{equation}
It follows from (\ref{30IV.2}) that this last inequality will
hold when
\begin{equation*}
B_{a}\leq (n-1)\frac{Aa^{2}}{\tan (Aa)}.
\end{equation*}

In the case where $f^{2}\equiv 0$ for $0\leq r\leq a$ it holds also that
$u(r)\equiv 0$ in that interval we have $u_{a}=0$ and $C_{a}=B_{a}$.
Condition (\ref{30IV.3}) reduces to
\begin{equation}
B_{a}\leq (n-1)a
 \;.
\end{equation}

3. Assume that $\tau\geq0$ exists in the interval $r\in(0,b]$
and denote $\tau_{b}:=\tau(b)$. If for $r\geq b$, $f^{2}=0$,
the equation for $\tau$ reduces, for $r\geq b$, to
\begin{equation}
\tau^{\prime}+\frac{1}{n-1}\tau^{2}=0\;,
\end{equation}
with initial value
\begin{equation}
\tau(b)=\tau_{b},\ \ 0<\tau_{b}\leq\tau_{0}(b)\equiv\frac{n-1}{b}.
\end{equation}
The solution is
\begin{equation}
\frac{1}{\tau}=\frac{r}{n-1}+\frac{1}{\tau_{b}}-\frac{b}{n-1} .
\end{equation}
Therefore
\begin{equation}
\tau=\frac{(n-1)\tau_{b}}{(n-1)+(r-b)\tau_{b}}=\frac{(n-1)}{r+d_{b}},
\end{equation}
with
\begin{equation}
d_{b}:=(n-1)\tau_{b}^{-1}-b\geq 0 .
\end{equation}
Hence, for large $r$,
\begin{equation}
\tau=\frac{n-1}{r}(1-\frac{d_{b}}{r}+\ldots  )
 \;.
\end{equation}

We have proved the following theorem.

\begin{theorem}
 \label{T11V.1}  The equation for $\tau$ deduced from the first
constraint,
\begin{equation*}
\tau^{\prime}+\frac{1}{n-1}\tau^{2}+f^{2}=0
 \;,
\end{equation*}
with $f^{2}:=|\sigma^{2}|+\overline{T}_{11}$ continuous and
$r^{2}f^{2}$ integrable in $r$ for $r\in \lbrack 0,\infty )$,
has a global solution $\tau (r)>0$, and the function
\begin{equation*}
\psi := \frac{n-1}{r} - \tau
\end{equation*}
is of class $C^{1}$ if:

1. We assume that there exists $a\in (0,\infty)$ such that
it holds
\begin{equation}
 \label{Aacondition}
A<\frac{\pi} {2a},\text{ \ \ with \ \ } A^{2}:=\sup_{0\leq r\leq a}\frac{1}{n-1}f^{2}
 \;.
\end{equation}
In the interval $0\leq r\leq a$ it then follows that
\begin{equation*}
\tau \ge \frac{n-1}{r}z(Ar)
 \;,
\end{equation*}
with
\begin{equation}
z(x):=\frac{x}{\tan x} \le 1.
\end{equation}

2. In the interval $a\leq r<\infty $ we assume that
\begin{equation}
 \label{Bacondition}\int_{a}^{\infty
}r^{2}f^{2} \, dr
\leq (n-1) a\, z(Aa)\; .
\end{equation}
In this interval we then have
\begin{equation*}
\tau \ge \frac{n-1}{r + k_a},
\end{equation*}
with
\begin{equation}
k_a=(n-1)(\tau_a-a^{-2}B_a)^{-1}-a, \quad
B_a=\int_a^{\infty} r^{2}f^{2}\,dr .
\end{equation}

3. Regardless of point 1., if $\sigma=0=\overline{T}_{11}$ for $r\geq b$,
and if $\tau_{b}:=\tau(b)>0$, then the solution exists for all
$r\ge b$ and it holds that
\begin{equation*}
\tau=\frac{n-1}{r+k_{b}^{(0)}}, \qquad
k_{b}^{(0)}:=(n-1)\tau_{b}^{-1}-b\ge 0.
\end{equation*}
\end{theorem}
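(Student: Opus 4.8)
The plan is to assemble the three estimates derived in the paragraphs preceding the statement into one argument, adding a short bootstrap for the $C^{1}$ claim.

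\emph{Near the vertex.} I would first pass to the variable $y:=(n-1)/\tau$, which turns the Raychaudhuri equation into the regular first-order equation~(\ref{3V.1}), $y'=1+\frac{1}{n-1}f^{2}y^{2}$, with the initial datum $y(0)=0$ forced by Section~\ref{sssvl2} (the requirement that $\tau^{-1}\to0$ at $O$). Since $\frac{1}{n-1}f^{2}$ is continuous and, by condition~(\ref{Aacondition}), bounded by $A^{2}$ on $[0,a]$, standard ODE theory produces a unique $C^{1}$ solution on a maximal interval, and a Riccati comparison with $z'=1+A^{2}z^{2}$, $z(0)=0$ --- whose solution~(\ref{zsolution}) exists on $[0,a]$ precisely because $Aa<\pi/2$ --- gives $r\le y\le z$ there, the lower bound coming from $y'\ge1$. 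Dividing, $\tau=(n-1)/y\ge(n-1)/z=\frac{n-1}{r}\,z(Ar)$ with $z(x)=x/\tan x\le1$, which is exactly point~1; in particular $0\le\psi\le(n-1)A^{2}r$, so $\psi$ extends continuously to $r=0$ by setting $\psi(0)=0$.

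\emph{Regularity of $\psi$.} Using $\tau+\tau_{\eta}=\frac{2(n-1)}{r}-\psi$, the equation for $\tau$ rewrites as $\psi'+\frac{2}{r}\psi=g$, where $g:=\frac{1}{n-1}\psi^{2}+f^{2}$ is continuous on $[0,a]$ by the previous step. Integrating this linear first-order ODE with the factor $r^{2}$ and $\psi(0)=0$ gives $r^{2}\psi(r)=\int_{0}^{r}s^{2}g(s)\,ds$, whence $\psi(r)\to0$ and $\psi'(r)=g(r)-2r^{-3}\int_{0}^{r}s^{2}g(s)\,ds\to\frac{1}{3}g(0)$ as $r\to0^{+}$; thus $\psi$ is of class $C^{1}$ across $r=0$.

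\emph{Global continuation.} For $r\ge a$ I would set $u:=r^{2}\psi$, so that the $\psi$-equation becomes~(\ref{1V.1}), $u'=\frac{1}{n-1}u^{2}/r^{2}+r^{2}f^{2}\ge0$; hence $u$ is positive and nondecreasing and satisfies the integral equation~(\ref{30IV.1}). I would then compare with $v'=\frac{1}{n-1}v^{2}/r^{2}$, $v(a)=C_{a}:=u_{a}+B_{a}$ where $B_{a}:=\int_{a}^{\infty}r^{2}f^{2}<\infty$: its explicit solution $v=\frac{(n-1)r}{1+(n-1)rc_{a}}$, with $c_{a}$ given by~(\ref{7V.1}), remains finite for all $r$ if and only if $c_{a}\ge0$, i.e.\ $C_{a}\le(n-1)a$ (condition~(\ref{30IV.3})), and by the bound~(\ref{30IV.2}) on $u_{a}$ this reduces exactly to hypothesis~2, inequality~(\ref{Bacondition}). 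Therefore $u$, hence $\psi=u/r^{2}$, stays bounded, $\tau=\tau_{\eta}-\psi$ stays positive, and unwinding $\psi\le v/r^{2}$ yields $\tau\ge\frac{n-1}{r+k_{a}}$ with $k_{a}=\frac{1}{(n-1)c_{a}}=(n-1)(\tau_{a}-a^{-2}B_{a})^{-1}-a$. Point~3 is then the elementary special case $f^{2}\equiv0$ on $[b,\infty)$, where $\tau'=-\tau^{2}/(n-1)$ integrates to $1/\tau=1/\tau_{b}+(r-b)/(n-1)$, i.e.\ $\tau=\frac{n-1}{r+k_{b}^{(0)}}$ with $k_{b}^{(0)}=(n-1)\tau_{b}^{-1}-b\ge0$, the nonnegativity being $\psi(b)\ge0$.

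\emph{Main obstacle.} Everything apart from the global step is either already displayed in Section~\ref{globalexistenceC1} or the one-line bootstrap above; the point that requires care is the global continuation, where one must run the Riccati comparison cleanly enough that the threshold $C_{a}\le(n-1)a$ is recognized as precisely what hypothesis~2 provides, and must treat the borderline case $C_{a}=(n-1)a$ (for which $k_{a}=+\infty$ and the explicit lower bound degenerates) by a short separate argument that still keeps $\tau>0$.
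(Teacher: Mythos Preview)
Your proposal is correct and follows essentially the same route as the paper's Section~\ref{globalexistenceC1}: the substitution $y=(n-1)/\tau$ with comparison against $z=A^{-1}\tan(Ar)$ near the vertex, the substitution $u=r^{2}\psi$ with comparison against the explicit Riccati supersolution $v$ for $r\ge a$, and the direct integration for point~3. Your short bootstrap verifying $\psi\in C^{1}$ at $r=0$ and your flag about the borderline $C_{a}=(n-1)a$ are welcome additions that the paper leaves implicit.
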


\begin{remark} {\rm
If $f^{2}:=|\sigma|^{2}+\overline{T}_{11}$ has compact support
$\{a\leq r\leq b\}$ with $a>0$, it follows from (\ref{30IV.3})
that (\ref{Bacondition}) can be replaced by
\begin{equation*}
\int_{a}^{b}r^{2}f^{2}dr\leq (n-1)a
 \;,
\end{equation*}
which will be satisfied if, e.g.,
\begin{equation*}
\sup_{a\leq r\leq b}r^{2}f^{2}\leq \frac{(n-1)a}{b-a}.
\end{equation*}
} \end{remark}

\begin{remark} {\rm
It follows from the equations above
(compare~\cite[Proposition~2.2]{CCG}) that if there exists
$r_{2}>0$ such that
\begin{equation}
\label{Y9XII8.1}
  \int_{0}^{r_{2}} \rho^{2} f^{2}(\rho,x^{A})
  d\rho \ge (n-1)r_{2}
 \;,
\end{equation}
then the expansion $\tau(r,x^{A})$ will become negative at some
value of $r$ strictly smaller than $r_{2}$. If this happens for
all $x^{A}$, then one expects existence of an outer trapped
surface in the associated space-time.
(See~\cite{ChrBHF,ReitererTrubowitz,RodnianskiKlainerman:scarred}
for recent important results concerning formation
of trapped surfaces.) }
\end{remark}

\subsection{Determination of $\overline{g}_{AB}$}
 \label{ss5V.1a}

Recall that we have set
\begin{equation}
\overline{g}_{AB}:= \Omega^{2} \gamma_{AB} \equiv
\varphi^{2} \left(\frac{\det s_{n-1}}{\det\gamma_{\Sigma}}\right)^{1/(n-1)}
\gamma_{AB}
\; ,
\label{conformal}
\end{equation}
and that $\varphi$ satisfies the equation
\begin{equation}
\partial_{1}\log\varphi = \frac{\tau}{n-1}
=\frac{1}{r} - \frac{\psi}{n-1}
\; ,
\end{equation}
with the initial condition $\varphi(0)=0$. Its integration gives
\begin{equation}
\varphi(r) = r \exp\left(- \int_{0}^{r} \frac{\psi(\rho)}{n-1}d\rho\right) \;.
\end{equation}

We assume that the free data $\overline{T}_{11}$ and
$\gamma_{AB}$ are such that $\tau$\ exists and satisfies the
conclusions of Theorem~\ref{T11V.1}, with some $a\in (0,\infty)$.
We have then

1. For small $r$, using the inequality (\ref{22XI.41}),
valid for $r<a\le A^{-1}$,
\begin{equation*}
0 \le \psi \le (n-1) A^{2} r
\end{equation*}
we conclude that in such interval we have
\begin{equation*}
\exp\left(-\int_{0}^{r} \frac{\psi(\rho)}{n-1} d\rho \right) \ge
\exp\left(-\frac{1}{2} A^{2} r^{2}\right),
\end{equation*}
and therefore
\begin{equation}
0 \le r -\varphi(r)
  \le r \left(1 - \exp\left(-\frac{1}{2}A^{2}r^{2}\right)\right)
  \le \frac{1}{2} A^{2} r^{3}
\; .
\end{equation}

2. For $r\ge a$, let $\psi$ be as in (\ref{taupsidef}), we use
\begin{equation}
\psi\equiv\frac{u}{r^{2}}\leq\frac{v}{r^{2}}\equiv\frac{(n-1)}%
{r\{1+(n-1)rc_{a}\}}
\end{equation}
to obtain
\begin{equation}
\partial_{1}\log\varphi=\frac{1}{r}-\frac{\psi}{n-1}\geq
\frac{(n-1)c_{a}}{1+(n-1)rc_{a}}
 \;.
\end{equation}
This shows that $\varphi$ is an increasing $C^{1}$ function bounded below by
\begin{equation}
\varphi(a)\frac{1+(n-1)rc_{a}}{1+(n-1)ac_{a}} .
\end{equation}

3. In the case where one assumes that $f^{2}=0$ for $r>b$
it holds exactly
\begin{equation}
\partial_{1}\log\varphi=\frac{1}{r+d_{b}}>0\;,\text{ \ \ \ } %
d_{b}=(n-1)\tau_{b}^{-1}-b\geq0\text{ \ .}%
\end{equation}
Therefore, using the notation $\varphi_{b}:=\varphi(b)$,
\begin{equation}
\varphi(r)=\varphi_{b}\frac{r+d_{b}}{b+d_{b}}>\varphi_{b},
\text{ \ if \ } r>b\; .
\end{equation}

In conclusion if $\gamma_{AB}$ is admissible, and
$\overline{T}_{11}$ is known and continuous we can solve
(\ref{9IV10.1}) for $\tau$ on some maximal
(possibly angle-dependent) interval of $r$'s. Subsequently,
(\ref{tauvarphi}) can be solved with initial value
$\varphi(0)=0$. This provides $\overline{g}_{AB}$.  The
quantity $\varphi$, hence also $\Omega$, depend only on the
conformal class of $\gamma$; the same is true of $\tilde{g}$,
defined by (\ref{conformal}).

\subsection{Determination of $\protect\nu_{0}$}
 \label{sss5VI.4a}

Once $\overline{g}_{AB}$ is known we can integrate equation
(\ref{nu0equation}) for $\nu_{0}$,
with the initial condition $\nu_{0}|_{r=0}=1$,
\begin{equation*}
\partial_{1}\nu_{0} = \frac{1}{2}(\overline{W}_{1}+\tau)\nu_{0} + \kappa\nu_{0}
\; .
\end{equation*}
(Note that at this stage any ``wave-gauge source"
$\overline{W}_{1}$ of the form
\begin{equation}
 \overline{W}_{1} (\tau, \overline{g}_{AB},\nu_{0},r,x^{A})
 \label{5VI.9}
\end{equation}
with an appropriate behaviour near $r=0$ could be used, though
as said before, in this section we assume $\kappa=0$, and a
Minkowski target). The equation for $\nu_{0}$ reads
\begin{equation}
\frac{\partial_{1}\nu_{0}} {\nu_{0}^{2}}=\frac{1}{2}\{-\overline{g}^{AB}rs_{AB}%
+\frac{\tau}{\nu_{0}}\}
 \;,
\end{equation}
i.e., since $\nu^{0}=\frac{1}{\nu_{0}}$,
\begin{equation}
\partial_{1}\nu^{0}=-\frac{1}{2}\tau\nu^{0}+\frac{1}{2}\overline{g}^{AB}rs_{AB}
 \;.
 \label{5VI.13}
\end{equation}
This is a linear equation for $\nu^{0}$,
with coefficients singular for $r=0$, but continuous
for $r>0$. Its solution taking a given initial value for $r_{0}>0$
exists, is $C^{1}$ and unique for $r\ge r_{0}$ as long as $\tau$
and $\Omega^{-1}$ exist and are continuous.
Note, however, that $\nu^{0}$ could go to zero in finite affine time,
which would lead to a (geometric or coordinate) singularity.

\subsubsection{NCT case}

To study solutions with initial data at $r=0$, we
start with the NCT case. We have then
$\tau\equiv\tau_{\eta}= \frac{n-1}{r}$, and (\ref{5VI.13}) reduces to
\begin{equation}
\partial_{1} \nu^{0}=-\frac{n-1}{2r}(\nu^{0}-1)
 \;.
\end{equation}
The general solution is, for some constant $k$,
\begin{equation}
\nu^{0}-1=kr^{-\frac{n-1}{2}}
 \;.
\end{equation}
The solution tending to one as $r$ tends to zero corresponds to
$k=0$, and is $\nu^{0}=\nu_{0}=1$.

\subsubsection{General case}

To construct a solution tending to 1 as $r$ tends to zero we set
\begin{equation}
 \label{30VI.11}
Y:=1-\nu^{0}
 \;.
\end{equation}
The equation (\ref{5VI.13}) for $\nu^{0}$ becomes the linear
non homogeneous equation
\begin{equation}
Y^{\prime}=-\frac{1}{2}\tau Y+F
 \;,
  \label{6VI.41}
\end{equation}
with $F$ a continuous function (recall
the notations $\tau\equiv\tau_{\eta}-\psi$,
$\tau_{\eta}=\frac{n-1}{r}=\eta^{AB}rs_{AB}$
and the assumed boundary conditions
(\ref{7VI.36}) and (\ref{7VI.40}))
\begin{equation}
 \label{23VII.1x}
 F :=\frac{1}{2}(\tau-\overline{g}^{AB}rs_{AB})\equiv
\frac{1}{2}\{(\eta^{AB} -\overline{g}^{AB})rs_{AB}-\psi\} =
 O(r)
 \;,
\end{equation}
where $\psi\geq0$ and
\begin{equation}
(\eta^{AB}-\overline{g}^{AB})rs_{AB}=
\frac{n-1}{r}- r \Omega^{-2} \gamma^{AB}s_{AB}.
\end{equation}
Incidentally, this implies that $F\leq0$
for initial data such that
\begin{equation}  \label{10II.2}
r^{2} \Omega^{-2}
\gamma^{AB}s_{AB} \geq n-1
 \;.
\end{equation}
Now, in the notation of (\ref{10II.1}), this
can be rewritten in the form
\begin{equation}
r^{2} \Omega^{-2} \gamma^{AB}s_{AB} =
r^{2} \varphi^{-2}
\left[(\det\gamma_{\Sigma})^{1/(n-1)} \gamma^{AB}\right]
\left[(\det s_{n-1})^{-1/(n-1)} s_{AB}\right]
 \;,
\end{equation}
such that the two expressions in square brackets have unit
determinants. Using $\varphi\le r$, hence
$r^{2}\varphi^{-2}\geq 1$, the last equation allows one to
deduce (\ref{10II.2}) from a condition involving only the
conformal metric $\gamma_{AB}$.

We want to find a solution $Y$ which tends to zero with $r$,
but this solution will lead to data for a Lorentzian metric
only if $\nu^{0}$ remains bounded and non zero; that
is, if $Y<1$.

The homogeneous equation associated to (\ref{6VI.41}) is
\begin{equation}
Y^{\prime}=(-\frac{n-1}{2r}+\frac{1}{2}\psi)Y
 \;.
\end{equation}
Setting $Y=\exp Z$, this equation reads
\begin{equation}
Z^{\prime}=-\frac{n-1}{2r}+\frac{1}{2}\psi.
\end{equation}
The general solution of (\ref{6VI.41}) is of the form
\begin{equation}
Y=w\exp Z\;,\text{ \ \ with \ \ } w^{\prime}=\exp(-Z)F\;.
\end{equation}

1. Case $0\leq r\leq a$.

Without loss of generality we can choose
\begin{equation}
Z=-\frac{n-1}{2}\log r+\frac{1}{2}\int_{0}^{r}\psi(\rho)d\rho
 \;,
\end{equation}
hence
\begin{equation}
Y(r)=wr^{-\frac{n-1}{2}}
\exp\Big(\frac{1}{2}\int_{0}^{r}\psi(\rho)d\rho\Big),
\text{ \ with \ \ }
w^{\prime}=r^{\frac{n-1}{2}}
\exp\Big(-\frac{1}{2}\int_{0}^{r}\psi(\rho)d\rho\Big)F(r).
\end{equation}
We find the solution $Y:=1-\nu^{0}$ tending to zero with $r$
{(compare~(\ref{7VI.34}))} by integrating $w^{\prime}$ between
$0$ and $r$, it gives
\begin{equation}
 Y(r)
  =
  r^{-\frac{n-1}{2}}
  \exp\Big(\frac{1}{2}\int_{0}^{r}\psi(\rho)d\rho\Big)
  \int_{0}^{r}\rho^{\frac{n-1}{2}}
     \exp\Big(-\frac{1}{2}\int_{0}^{\rho}\psi(\chi)d\chi\Big)
     F(\rho)d\rho
\; .
 \label{30VI.12}
\end{equation}

Keeping in mind  that there exist numbers $C_a$, a=1,2, such
that
\begin{equation*}
0\leq\psi\leq C_{1}\,r\;, \ |F(r)|\leq C_{2}\,r
 \;,
\end{equation*}
we see that for $0\leq r\leq a$ we have
\begin{equation*}
\exp\Big(\frac{1}{2}\int_{0}^{r}\psi(\rho)d\rho\Big)\leq
\exp\Big(\frac{1}{4}C_{1}r^{2}\Big),
\end{equation*}
\begin{equation*}
\int_{0}^{r}\rho^{\frac{n-1}{2}}
\exp\Big(-\frac{1}{2}\int_{0}^{\rho}\psi(\chi)d\chi\Big)|F(\rho)|d\rho
\leq C_{2}\int_{0}^{r}\rho^{\frac{n+1}{2}}d\rho
 =\frac{2}{n+3}C_{2}\,r^{\frac{n+3}{2}}\;,
\end{equation*}
leading to the bound, still for $r\leq a$,
\begin{equation*}
|Y|\leq \frac{2C_{2}r^{2}}{n+3}\exp\Big(\frac{1}{4}C_{1}r^{2}\Big)
 \;.
\end{equation*}
Since $Y:=1-\nu^{0}$ the function $\nu^{0}$ is bounded. From
(\ref{11VI.A7}) the metric will have Lorentzian signature,
$\overline{g}_{AB}$ being Riemannian, if and only if $\nu^{0}$
remains bounded and non zero (hence positive since equal to 1
for $r=0$). This will hold if $Y<1$, which will be true
for any $C_{2}$ if $a$ is small enough. In vacuum $C_{2}$ is
determined by $|\sigma|^{2}$, so for any $a$ it will hold that
$Y<1$ for $r\in[0,a]$ if $\sigma$ is small enough.

Note that if $F\leq0$, then $Y\leq0$, hence $\nu^{0}\ge 1$ without
restriction on the size of $a$ or $|\sigma|$.

2. $a\le r < \infty$.

By the same reasoning as for $r\leq a$, the solution $Y$ taking
the value $Y(a)$ for $r=a$ is
\begin{equation*}
Y(r)=Y(a) +
r^{-\frac{n-1}{2}}
\exp\Big(\frac{1}{2}\int_{a}^{r}\psi(\rho)d\rho\Big)
\int_{a}^{r}\rho^{\frac{n-1}{2}}
  \exp\Big(-\frac{1}{2}\int_{a}^{\rho}
  \psi(\chi)d\chi\Big) F(\rho)d\rho
\; .
\end{equation*}

3. Suppose that for $r\geq b$ we have
$\overline{g}^{AB}=\varphi^{-2} s^{AB}$, hence
\begin{equation*}
F(r)\equiv\frac{1}{2}\{(s^{AB}(r^{-2}-\varphi^{-2})rs_{AB}-\psi\}\equiv
\frac{1}{2}\{(1-r^{2}\varphi^{-2})\frac{n-1}{r}-\psi\}
 \;.
\end{equation*}
We have seen that $\varphi\le r$, that is $r^{2}\varphi^{-2}\ge 1$,
hence $F(r)\leq0$ and $\nu^{0}(r)\geq\nu^{0}(b)$.

\bigskip

\subsection{Vanishing of $\protect \overline{H}_{1}$}
 \label{ss5V.5}

Consider a solution of the wave-map-reduced Einstein equations
$\overline{R}_{11}^{(H)}=\overline{T}_{11}$ with initial data
on $C_{O}$, and with Minkowski target. Suppose that the
data there satisfy the constraint
$\mathcal{C}_{1}=\overline{T}_{11}$. The identity  (see
(\ref{RicciHIdentity}))
\begin{equation*}
\overline{R}_{11}
\stackrel{\eta}{\equiv}
\overline{R}_{11}^{(H)}+\nu_{0}D_{1}\overline{H}^{0}
\end{equation*}
shows that $\overline{H}^{0}$ satisfies a linear homogeneous
differential equation on $C_{O}$, namely,
\begin{equation}
 \label{19V.1}
D_{1}\overline{H}^{0}+\frac{1}{2}\overline{H}^{0}\tau
\stackrel{\eta}{=}
0\;.
\end{equation}
Keeping in mind that $D$ is the covariant derivative of the
Minkowski metric, in our adapted coordinate system we have
\begin{equation*}
D_{1}\overline{H}^{0}\stackrel{\eta}{\equiv} \partial_{1}\overline{H}^{0}.
\end{equation*}

For all solutions which satisfy uniform $C^{1}$
bounds near the vertex in the $(y^{\mu})$ coordinate system,
the $y^{\mu}$-components of the wave-gauge vector are bounded
near the vertex. It follows that $\overline{H}^{0}$ is bounded
near the vertex. But every solution of (\ref{19V.1}) which is
not identically zero behaves, for small $r$, as $r^{-(n-1)/2}$
along some generators. So, in the uniformly $C^{1}$ case, we can
deduce from (\ref{19V.1}) that
\begin{equation*}
\overline{H}^{0}=0\;, \quad \text{hence also} \quad
\overline{H}_{1}\equiv \nu _{0}\overline{H}^{0}=0
 \;.
\end{equation*}

\begin{remark} {\rm
If we add constraint damping terms as in
(\ref{DampingRicci}),  we obtain instead
\begin{equation}
\mathcal{L}_{1} = \left(-\frac{1}{2} \tau + \epsilon n_{1}\right)  \overline{H}_{1}
\; .
\label{damping1}
\end{equation}
No term proportional to $\overline{H}^{A}$ or
$\overline{H}^{1}$ appears, and hence the damping term is
compatible with this first step of the  wave-map-gauge
constraint hierarchy. The new term does not change the terms
which are singular in $r$ in (\ref{19V.1}), and hence
$\overline{H}_{1}=0$ is still the only solution with the
required behavior. } \end{remark}

\subsection{Scalar and Maxwell fields}
 \label{sss5V.3}

We wish to check that scalar fields lead to equations
compatible with the required hierarchical structure of the
equations. For this, consider a scalar field $\phi$ coupled
with the gravitational field through an energy-momentum tensor
of the form
$$
 T_{\mu\nu} = \partial_{\mu} \phi\partial_{\nu} \phi
  -  \left(\frac{1}{2} g^{\alpha\beta} \partial_{\alpha}\phi
     \partial_{\beta}\phi + V(\phi)\right) g_{\alpha\beta}
   \;.
$$
In the adapted coordinate system the components relevant for
our argument are
\begin{eqnarray}
\overline{T}_{11}  & =   & (\partial_{1} \overline{\phi} )^{2}
\label{Tscalar11}
\;, \\
\overline{T}_{A1} &=   & \partial_{A}\overline{\phi}\, \partial_{1}\overline{\phi}
\label{TscalarA1}
 \;, \\
\overline{T}_{01}  & =   &
- \nu_{0} \frac{1}{2} \overline{g}^{AB}\partial_{A}\overline{\phi}\, \partial_{B}\overline{\phi}
+ \nu^{A} \partial_{A}\overline{\phi}\, \partial_{1}\overline{\phi}
- \nu_{0} \frac{1}{2} \overline{g}^{11} (\partial_{1}\overline{\phi})^{2}
 \nonumber
\\
 && - V(\phi)\nu_{0}
  \;.
\label{Tscalar01}
\end{eqnarray}
Keeping in mind that the initial data for the scalar field on
$C_{O}$ are provided by $\overline{\phi}:= \phi|_{C_{O}}$, we see that
prescribing  $\overline{\phi}$ provides a $\overline{T}_{11}$ which can be
used in (\ref{R11_13dnew}) or in (\ref{19V10.3}) (compare
(\ref{6VI.22new})).

Next, the relevant components of the stress-energy tensor for
the Maxwell field $F_{\mu\nu}$ are:
\begin{align}
\overline{T}_{11} \ = \ & \ \overline{g}^{AB} \overline{F}_{A1} \overline{F}_{B1}
\label{TMaxwell11}
\;,
\\
 \overline{T}_{A1} \ = \ & -\nu^{0} \overline{F}_{A1} \overline{F}_{01} -
 \overline{g}^{BC}\overline{F}_{AB}\overline{F}_{C1} + \nu^{0}\nu^{B}
 \overline{F}_{A1}\overline{F}_{B1}
 \label{TMaxwellA1}\;,
\\
\overline{T}_{01} \ = \ &
- \frac{1}{4}\nu_{0} \overline{g}^{AC}\overline{g}^{BD}\overline{F}_{AB} \overline{F}_{CD}
- \overline{g}^{BC}\nu^{A}\overline{F}_{AB}\overline{F}_{C1}
+ \frac{1}{2}\nu^{0} \nu^{A} \nu^{B} \overline{F}_{A1} \overline{F}_{B1}
\notag \\ &
- \frac{1}{2}\nu_{0} \overline{g}^{AB} \overline{g}^{11} \overline{F}_{A1} \overline{F}_{B1}
- \frac{1}{2}\nu^{0} (\overline{F}_{01})^{2} \;.
\label{TMaxwell01}
\end{align}
We defer a complete discussion of the Cauchy problem for the
Einstein-Maxwell equations to separate work. Here we note that
if $F_{1A}$ is given on the null cone, then (\ref{TMaxwell11})
is \emph{not} of the right form for viewing (\ref{19V10.3}) as
a first order equation for $\tau$: Instead (\ref{19V10.3})
should be considered as a second order equation for $\varphi$,
using (\ref{tauvarphi}). On the other hand, (\ref{TMaxwell11}) is
of the form (\ref{6VI.22new}), needed for the analysis of the
problem when $\nu_{0}$ has been given. The remainder of our
analysis of the $\mathcal{C}_{1}$ constraint goes through as
before.

For further reference, we note that the combination of
stress-energy components appearing in the final  constraint
$\mathcal{C}_{0}$ is
\begin{equation}
\overline{g}^{11}\overline{T}_{11}+2\overline{g}^{A1}\overline{T}_{A1}+2\overline{g}^{01}\overline{T}_{01} =
-\frac{1}{2}\overline{g}^{AC}\overline{g}^{BD}\overline{F}_{AB}\overline{F}_{CD}
-(\overline{g}^{01}\overline{F}_{01}+\overline{g}^{A1}\overline{F}_{A1})^{2}
\end{equation}
for the Maxwell field, and
\begin{equation}
\overline{g}^{11}\overline{T}_{11}+2\overline{g}^{A1}\overline{T}_{A1}+2\overline{g}^{01}\overline{T}_{01} =
-\overline{g}^{AB}\partial_{A}\overline{\phi}\,\partial_{B}\overline{\phi}
\end{equation}
for the scalar field.

\section{The $\mathcal{C}_{A}$  constraint}
 \label{constraintCA}

The $\mathcal{C}_{A}$  wave-map-gauge constraint  operator will
be obtained from an analysis of
\begin{equation}
\overline{S}_{1A}\equiv\overline{R}_{1A}\equiv
\overline{R}_{1A}^{(1)}+\overline{R}_{1A}^{(2)}
 \;,
 \label{CAdef}
\end{equation}
where we have again separated terms including derivatives of Christoffels
in $\overline{R}_{1A}^{(1)}$ from the rest in $\overline{R}_{1A}^{(2)}$.
Trivial simplification gives
\begin{equation}
\overline{R}_{1A}^{(1)}\equiv
   \overline{\partial_{0}\Gamma_{1A}^{0}}
  +\partial_{B}\overline{\Gamma}_{1A}^{B}
  -\partial_{1}\overline{\Gamma}_{A0}^{0}
  -\partial_{1}\overline{\Gamma}_{AB}^{B}
 \;.
\label{R1A_1}
\end{equation}
We have by the choice of coordinates
$\overline{\Gamma}_{1A}^{0}\equiv
\overline{\Gamma}_{11}^{B}\equiv
\overline{\Gamma}_{11}^{0}\equiv 0$, and therefore
\begin{align}
\overline{R}_{1A}^{(2)}\equiv &\
\overline{\Gamma}_{1A}^{1}(\overline{\Gamma}_{10}^{0}+\overline{
\Gamma}_{1B}^{B})+\overline{\Gamma}_{1A}^{B}(\overline{\Gamma}_{B0}^{0}+\overline{\Gamma}
_{B1}^{1}+\overline{\Gamma}_{BC}^{C})
\notag \\
& -\overline{\Gamma}_{10}^{0}\overline{\Gamma}_{A0}^{0}-\overline{\Gamma}_{10}^{B}\overline{\Gamma}
_{AB}^{0}-\overline{\Gamma}_{1B}^{1}\overline{\Gamma}_{A1}^{B}-\overline{\Gamma}_{1C}^{B}
\overline{\Gamma}_{AB}^{C} .
\label{R1A_2}
\end{align}

We find, for the terms in $\overline{R}_{1A}^{(1)}$,
\begin{align}
\overline{\partial_{0}\Gamma_{1A}^{0}} = &\
\overline{\partial_{0}g^{00}[0,1A]}+\overline{\partial_{0}g^{0B}[B,1A]}
+\frac{1}{2}\nu ^{0}\overline{\partial_{0}\partial_{A}g_{11}}
\label{R1A_3}
\\
= &\ \frac{1}{2} \partial_{A}(\nu^{0}\overline{\partial_{0}g_{11}})
 -\nu^{0}\chi_{A}{}^{B}\overline{\partial_{0}g_{1B}}
\notag \\
&
+\frac{1}{2}(\nu^{0})^{2}
(\overline{\partial_{0}g_{1A}}-\partial_{1}\nu_{A}+2\nu_{B}\chi_{A}{}^{B})
\overline{\partial_{0}g_{11}}
\; ,
\label{R1A_4} \\
-\partial_{1}\overline{\Gamma}_{A0}^{0}\equiv&\ -\frac{1}{2}\partial_{1}\{\nu
^{0}(\partial_{A}\nu_{0}+\overline{\partial_{0}g_{1A}}-\partial_{1}\nu _{A})\} ,
\label{R1A_5} \\
\partial_{B}\overline{\Gamma}_{1A}^{B}-\partial_{1}\overline{\Gamma}_{AB}^{B}\equiv
&\
\partial_{B}\chi _{A}{}^{B}-\partial_{1}(\nu ^{0}\nu _{B}\chi _{A}{}^{B})-\partial_{A}\tau .
\label{R1A_6}
\end{align}
And for the terms in $\overline{R}_{1A}^{(2)}$ we find
\begin{equation}
\overline{\Gamma}_{1A}^{1}\overline{\Gamma}_{1B}^{B}\equiv \{\frac{1}{2}\nu
^{0}(\partial_{1}\nu _{A}+\partial_{A}\nu _{0}-\overline{\partial
_{0}g_{1A}})-\nu ^{0}\nu _{B}\chi _{A}{}^{B}\}\tau
\; ,
\label{R1A_7}
\end{equation}
\begin{equation}
\overline{\Gamma}_{1A}^{B}
(\overline{\Gamma}_{B0}^{0}+\overline{\Gamma}_{B1}^{1}+\overline{\Gamma}_{BC}^{C})
\equiv
\chi_{A}{}^{B}(\nu^{0}\partial_{B}\nu _{0}+\tilde{\Gamma}_{BC}^{C})
\; ,
\label{R1A_8}
\end{equation}
\begin{equation}
-\overline{\Gamma}_{10}^{0}\overline{\Gamma}_{A0}^{0}\equiv
\frac{1}{4} (\nu^{0})^{2}
(\partial_{1}\nu_{A}-\partial_{A}\nu_{0}-\overline{\partial_{0}g_{1A}})
\overline{\partial_{0}g_{11}}
\; ,
\label{R1A_9}
\end{equation}
\begin{equation}
\overline{\Gamma}_{A1}^{1}\overline{\Gamma}_{10}^{0}
\equiv
\frac{1}{4}(\nu^{0})^{2}
(\partial_{1}\nu_{A}+\partial_{A}\nu_{0}-\overline{\partial_{0}g_{1A}}
-2\nu_{B}\chi_{A}{}^{B}) \overline{\partial_{0}g_{11}}
\; ,
\label{R1A_10}
\end{equation}
\begin{equation}
-\overline{\Gamma}_{10}^{B}\overline{\Gamma}_{AB}^{0}
\equiv
\frac{1}{2}\nu^{0}\chi_{A}{}^{B}
(\overline{\partial_{0}g_{1B}}+\partial_{1}\nu_{B}-\partial_{B}\nu_{0}
 - \nu^{0}\nu_{B}\overline{\partial_{0}g_{11}})
\; ,
\label{R1A_11}
\end{equation}
\begin{equation}
 -\overline{\Gamma}_{1B}^{1}\overline{\Gamma}_{A1}^{B}
 -\overline{\Gamma}_{1C}^{B}\overline{\Gamma}_{AB}^{C}
\equiv
 \frac{1}{2}\nu^{0}\chi_{A}{}^{B}
  (\overline{\partial_{0}g_{1B}}-\partial_{1}\nu_{B}-\partial_{B}\nu_{0})
 -\chi_{B}{}^{C}\tilde{\Gamma}_{AC}^{B}
\; .
\label{R1A_12}
\end{equation}
All terms in these formulae can be computed on $C_{O}$, except
for those that contain $\overline{\partial_{0}g_{1B}}$ or
$\overline{\partial_{0}g_{11}}$, and whose sum simplifies to
\begin{equation}
\overline{R}_{1A,\partial_{0}}=
-\frac{1}{2}\partial_{1}(\nu^{0}\overline{\partial_{0}g_{1A}})
-\frac{1}{2}\tau\,\nu^{0}\overline{\partial_{0}g_{1A}}
+\frac{1}{2}\partial_{A}(\nu^{0}\overline{\partial_{0}g_{11}})
\; .
\label{R1A_14}
\end{equation}
(We see that all terms quadratic in $\partial_{0}$
derivatives cancel out.) The rest is given by
\begin{align}
\overline{R}_{1A}-\overline{R}_{1A,\partial_{0}}
\equiv &\
\frac{1}{2}\nu^{0}\partial_{1}(
\partial_{1}\nu_{A}-\partial_{A}\nu_{0}-2\nu_{B}\chi_{A}{}^{B})
+\tilde{\nabla}_{B}\chi _{A}{}^{B}
-\nu_{0}\partial_{A}(\tau\nu^{0})
\notag \\
& +\frac{1}{2}(\partial_{1}\nu^{0}+\tau\nu^{0})
(\partial_{1}\nu_{A}-\partial_{A}\nu_{0}-2\nu_{B}\chi_{A}{}^{B}) .
\label{R1A_15}
\end{align}

\subsection{Use of harmonicity functions}
 \label{ss6VI.1}

{}From the identities (\ref{Gamma2}) and (\ref{Gamma4}) we get
\begin{align}
\overline{\Gamma}_{A}:= \overline{g}_{AB}\overline{\Gamma}^{B}\equiv &\
-\nu_{A}\overline{\Gamma}^{0}+\nu^{0}
(\overline{\partial_{0}g_{1A}}+\partial_{1}\nu_{A}-\partial_{A}\nu_{0})
-2\nu^{0}\nu_{B}\chi_{A}{}^{B}+\tilde{\Gamma}_{A}.
\label{GammaA_3}
\end{align}
Hence
\begin{equation}
\overline{\partial_{0}g_{1A}}\equiv -\partial_{1}\nu_{A}+\partial_{A}\nu
_{0}+2\nu_{B}\chi_{A}{}^{B}+\nu_{A}\overline{\Gamma}_{1}+\nu_{0}(\overline{\Gamma}_{A}-
\tilde{\Gamma}_{A}) .
\label{GammaA_5}
\end{equation}
On the other hand
\begin{equation}
\overline{\Gamma}_{A}\equiv \overline{H}_{A}+\overline{W}_{A} , \text{ \ with \ \ \ }
\overline{H}_{A}:= \overline{g}_{AB}\overline{H}^{B} ,
\label{GammabarA}
\end{equation}
similarly for $\overline{\Gamma}_{A} $ and $\overline{W}_{A}$. Therefore we
have
\begin{equation}
\overline{\partial_{0}g_{1A}}\equiv
-\partial_{1}\nu_{A}+\partial_{A}\nu_{0}+2\nu_{B}\chi_{A}{}^{B}
+\nu_{0}f_{A}+\nu_{0}\overline{H}_{A} +\nu_{A} \overline{H}_{1}
\; ,
\label{Harmonicity2}
\end{equation}
with
\begin{equation}
f_{A}:= \nu^{0}\nu_{A}\overline{W}_{1}+\overline{W}_{A}-\tilde{\Gamma}_{A} ,
\label{fVector}
\end{equation}
For a Minkowski target, using (\ref{Minkowski6}) and
(\ref{Minkowski7}), this is
\begin{equation}
f_{A}\stackrel{\eta}{=}
-\left(x^{1}\overline{g}^{CD}s_{CD}+\frac{2\nu^{0}}{x^{1}}\right)\nu_{A}
+ \overline{g}_{AB}\overline{g}^{CD}(S_{CD}^{B}-\tilde{\Gamma}_{CD}^{B})
\; .
\end{equation}

\subsection{Computation of $\protect\mathcal{L}_{A}$ and
$\protect\mathcal{C}_{A}$}

We see from the identities obtained that $\overline{R}_{1A}$ is
the sum of a linear homogeneous operator $\mathcal{L}_{A}$ on
$\overline{H}_{A}:=\overline{g}_{AB}\overline{H}^{B}$ and a
second order linear operator $\mathcal{C}_{A}$ on $ \nu_{A}$,
both with coefficients depending only on the $x^{1}$-dependent
metric $\tilde{g}$ and scalar $\nu_{0}$ previously determined.
(Strictly speaking, $\nu_{A}$ also appears in $\mathcal{L}_{A}$,
but multiplied by $\overline{H}^{0}$ which,  with appropriate
boundary conditions and a Minkowski target, can be  shown to be
zero at this stage of the argument, as explained above.)
\begin{equation}
\mathcal{L}_{A}\equiv
-\frac{1}{2}\partial_{1}(\overline{H}_{A}+\nu_{A}\overline{H}^{0})
-\frac{1}{2}\tau(\overline{H}_{A}+\nu_{A}\overline{H}^{0})
+\frac{1}{2}\partial_{A}(\nu_{0}\overline{H}^{0})
\; .
\label{LAfinal}
\end{equation}

{}From (\ref{R1A_14}),
\begin{equation*}
\overline{R}_{1A,\partial_{0}}=
-\frac{1}{2}\nu^{0}\partial_{1}\overline{\partial_{0}g_{1A}}
-\frac{1}{2}(\partial_{1}\nu^{0}+\tau\nu^{0})\overline{\partial_{0}g_{1A}}
+\frac{1}{2}\partial_{A}(\nu^{0}\overline{\partial_{0}g_{11}})
\; ,
\end{equation*}
and using the formula (\ref{Harmonicity2}) we find that
\begin{align}
\overline{R}_{1A,\partial_{0}}-\mathcal{L}_{A}\equiv
-\frac{1}{2}\nu^{0}\partial_{1}
\{-\partial_{1}\nu_{A}+\partial_{A}\nu_{0}+2\nu_{B}\chi_{A}{}^{B}
+\nu_{0}f_{A}\}
\notag \\
-\frac{1}{2}(\partial_{1}\nu^{0}+\tau\nu^{0})
\{-\partial_{1}\nu_{A}+\partial_{A}\nu_{0}+2\nu_{B}\chi_{A}{}^{B}
+\nu_{0}f_{A}\}
\notag \\
+\frac{1}{2}\partial_{A}(\overline{W}_{1}+\tau)
\;  .
\label{CA_14}
\end{align}
Finally, assembling results (\ref{R1A_15}) and (\ref{CA_14}) gives
\begin{align}
\mathcal{C}_{A}\equiv &\ \overline{R}_{1A}-\mathcal{L}_{A}\equiv
\ \frac{1}{2}\nu^{0} \partial_{1} \{
   2\partial_{1}\nu_{A} -4\nu_{B}\chi_{A}{}^{B}
   -\nu_{0}f_{A} \}
\notag\\ &
+\frac{1}{2}(\partial_{1}\nu^{0}+\tau\nu^{0})\{
   2\partial_{1}\nu_{A} -4\nu_{B}\chi_{A}{}^{B}
   -\nu_{0}f_{A} \}
\notag \\ &
+ \tilde{\nabla}_{B}\chi_{A}{}^{B}
-\frac{1}{2}\partial_{A}(\tau-\overline{W}_{1}+2\nu^{0}\partial_{1}\nu_{0}) .
\label{CAfinal}
\end{align}
It turns out that there is a simple way of rewriting
(\ref{CAfinal}) in terms of
\begin{equation}
\xi_{A} :=
- 2 \nu^{0}\partial_{1}\nu_{A} + 4 \nu^{0}\nu_{B}\chi_{A}{}^{B}
+ \underbrace{\nu^{0}\nu_{A}\overline{W}_{1}+\overline{W}_{A}-\tilde{\Gamma}_{A}}_{\equiv f_{A}}
\; ,
\label{xiA}
\end{equation}
where $\tilde{\Gamma}_{A}:= \overline{g}_{AD} \overline{g}^{BC}
\tilde\Gamma^D_{BC}
$ (compare (\ref{22XII.1})). The vector
$\xi_{A}$ equals $-2\overline{\Gamma}^{1}_{1A}$ after using the
harmonicity conditions $\overline{\Gamma} = \overline{W}$. Note that
$\xi_{A}$ vanishes when $\nu_{A}=0$ and
$\overline{g}_{AB}=\hat{g}_{AB}$. The  wave-map-gauge
constraint operator $\mathcal{C}_{A}$ can be expressed in terms
of $\xi_{A}$ as
\begin{equation}
\mathcal{C}_{A} = -\frac{1}{2}(\partial_{1}\xi_{A}+\tau\xi_{A})
+ \tilde{\nabla}_{B}\chi_{A}{}^{B}
-\frac{1}{2}\partial_{A}(\tau-\overline{W}_{1}+2\nu^{0}\partial_{1}\nu_{0})
\; .
\label{CAfinal2}
\end{equation}

Separating different orders of $\partial_{1}$ derivatives we
get, for a Minkowski target,
\begin{align}
\mathcal{C}_{A}\stackrel{\eta}{=}&\
\nu^{0} \partial^{2}_{11} \nu_{A}
-2\nu^{0} \chi_{A}{}^{B}\partial_{1}\nu_{B}
+\nu^{0}\left(\tau+\frac{1}{x^{1}}- \frac{1}{2}\overline{W}_{1}
-\nu^{0}\partial_{1}\nu_{0}\right) \partial_{1}\nu_{A}
\notag \\
& -\nu^{0}\left(2\partial_{1}\chi_{A}{}^{B}
+2(\nu_{0}\partial_{1}\nu^{0}+\tau)\chi_{A}{}^{B}
\right)\nu_{B}
+\tilde{\nabla}_{B}\chi_{A}{}^{B}
\notag \\
& -\left(\frac{1}{2}\partial_{1}\overline{W}_{1}
+\frac{1}{2}(\nu_{0}\partial_{1}\nu^{0}+\tau)(\overline{W}_{1}-\frac{2}{x^{1}})
+\frac{1}{(x^{1})^{2}}\right)
\nu^{0}\nu_{A}
\notag \\
&
-\partial_{A}\left(\frac{1}{2}\tau+\nu^{0}\partial_{1}\nu_{0}
-\frac{1}{2}\overline{W}_{1}\right)
-\frac{1}{2} (\partial_{1}+\tau)
(\overline{g}_{AB}\overline{g}^{CD}S^{B}_{CD}-\tilde{\Gamma}_{A})
\; .
\label{CAfinal3}
\end{align}

In the general case, in addition to
(\ref{6VI.21new})-(\ref{6VI.22new}) one can assume that
\begin{eqnarray}
 \label{8VI.21}
 \overline{W}_{A} & = & \overline{W}_{A}(\gamma_{AB},  \varphi,  \nu_{0}, \nu_{A}, r,x^{A})
 \;,
\\
 \label{8VI.22}
 \overline{T}_{1A} & = &
 \overline{T}_{1A}( \text{source data},\gamma_{AB},\partial_{i} \gamma_{AB}, \varphi, \partial_{i} \varphi,
 \nu_{0}, \partial_{i}\nu_{0},
 \nu_{A}, \partial_{1} \nu_{A}, \overline{\partial_{0} g_{11}}
  , r,x^{A})
 \;, \phantom{xxxxx}
\end{eqnarray}
where as before $\partial_{i}$ denotes derivatives tangential to
the light-cone. This is clearly compatible with the wave-map
gauge (\ref{Minkowski7}), and with scalar fields or Maxwell
fields as sources (compare Section~\ref{sss5V.3}).

\section{Solution of the $\protect\mathcal{C}_{A}$ constraint}
 \label{solutionCA}

\subsection{NCT case}

In the vacuum case with Minkowski target and when
$\sigma_{A}{}^{B}=0$ we have
$\chi_{A}{}^{B}=r^{-1}\delta_{A}^{B}$,
$\overline{g}_{AB}=r^{2}s_{AB}$ (therefore
$\tilde{\Gamma}_{BC}^{A}=S_{BC}^{A}$), $\nu_{0}=1$ and
$\tau=-\overline{W}_{1}=\frac{n-1}{r}$. It has been shown in
\cite{CCG} that the $\mathcal{C}_{A}$  wave-map-gauge
constraint reduces then to
\begin{equation}
\mathcal{C}_{A}\equiv\overline{R}_{1A}-\mathcal{L}_{A}\equiv
\partial_{11}^{2}\nu_{A}+\frac{3n-5}{2r}\partial_{1}\nu_{A}
+\frac{1}{2}\frac{(n-2)(n-3)}{r^{2}} \nu_{A}=0\;.
\label{CA_NCT}
\end{equation}
This is a Fuchsian type linear equation, with Fuchsian
exponents $p=\frac {3-n}{2}$ and $2-n$. Thus the only solution
satisfying (\ref{gvertexlimits}), i.e. $\lim_{r\to
0}(r^{-1}\nu_{A})=0$, is $\nu_{A}\equiv 0$. (In fact, the only
solution $\nu_{A}=o(r^{\frac{3-n}{2}})$ is zero.)

\subsection{General case}

From the identity (\ref{CAfinal2}) we then find

\begin{lemma}
 \label{lCA}
Assuming (\ref{6VI.21new})-(\ref{6VI.22new}) and
(\ref{8VI.21})-(\ref{8VI.22}), the  wave-map-gauge constraint
operator $\mathcal{C}_{A} \equiv \overline{S}_{1A}-{\cal L}_{A}$
is a first order linear ordinary differential operator for the
field $\xi_{A}$, with $\kappa$ as in (\ref{nu0equation})
\begin{equation}
\mathcal{C}_{A}\equiv-\frac{1}{2}(\partial_{1}\xi_{A}+\tau\xi_{A}
)+\tilde{\nabla}_{B}\sigma_{A}{}^{B}-\frac{n-2}{n-1}\partial_{A}\tau
-\partial_{A}\kappa
 \;,
 \label{5VI.32}
\end{equation}
where $\xi_{A}$ is defined as (\ref{xiA}), which particularizes for a
Minkowski target as
\begin{equation}
\xi_{A} \stackrel{\eta}{:=}
 -2\nu^{0}\partial_{1}\nu_{A}+4\nu^{0}\nu_{B}\chi_{A}{}^{B}
+\left(\overline{W}^{0}-\frac{2}{r}\nu^{0}\right)\nu_{A}+
\gamma_{AB}\gamma^{CD}(S_{CD}^{B}-\tilde{\Gamma}_{CD}^{B})
 \;.
 \label{xiAbis}
\end{equation}
\end{lemma}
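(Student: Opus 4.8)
The plan is to take the closed form \eqref{CAfinal2} as the starting point, since the substantive work---the elimination of the transversal derivative $\overline{\partial_{0}g_{1A}}$ via the harmonicity identity \eqref{Harmonicity2}, and the cancellation of all terms quadratic in $\partial_{0}$-derivatives already apparent in \eqref{R1A_14}---was carried out in the preceding section. From \eqref{CAfinal2}, namely $\mathcal{C}_{A}= -\tfrac{1}{2}(\partial_{1}\xi_{A}+\tau\xi_{A}) + \tilde{\nabla}_{B}\chi_{A}{}^{B} - \tfrac{1}{2}\partial_{A}(\tau-\overline{W}_{1}+2\nu^{0}\partial_{1}\nu_{0})$, only three elementary reductions remain.

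First I would decompose $\chi_{A}{}^{B}$ into trace and trace-free parts as in \eqref{sigma}: since the Kronecker symbol $\delta_{A}^{B}$ is covariantly constant for the connection $\tilde{\nabla}$ of $\overline{g}_{AB}$, this gives $\tilde{\nabla}_{B}\chi_{A}{}^{B}=\tilde{\nabla}_{B}\sigma_{A}{}^{B}+\tfrac{1}{n-1}\partial_{A}\tau$. Second, because $\nu_{0}$ was produced precisely by integrating \eqref{nu0equation}, that equation holds on $C_{O}$, so $\nu^{0}\partial_{1}\nu_{0}=\tfrac{1}{2}(\overline{W}_{1}+\tau)+\kappa$; substituting into the last term of \eqref{CAfinal2} yields $\tau-\overline{W}_{1}+2\nu^{0}\partial_{1}\nu_{0}=2(\tau+\kappa)$, hence $-\tfrac{1}{2}\partial_{A}(\tau-\overline{W}_{1}+2\nu^{0}\partial_{1}\nu_{0})=-\partial_{A}\tau-\partial_{A}\kappa$. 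Adding the pieces and combining $\tfrac{1}{n-1}\partial_{A}\tau-\partial_{A}\tau=-\tfrac{n-2}{n-1}\partial_{A}\tau$ gives exactly \eqref{5VI.32}. That $\mathcal{C}_{A}$ is then a first-order linear ODE operator for $\xi_{A}$ is immediate from \eqref{5VI.32}: the field $\xi_{A}$ enters only through $-\tfrac{1}{2}(\partial_{1}\xi_{A}+\tau\xi_{A})$, while $\tau$, $\kappa$, $\sigma_{A}{}^{B}$ and $\overline{g}_{AB}$ (hence $\tilde{\nabla}$) are all fixed at earlier stages of the hierarchy; the structural assumptions \eqref{6VI.21new}--\eqref{6VI.22new} and \eqref{8VI.21}--\eqref{8VI.22} are what guarantee that no second-order or transversal derivative of $\nu_{A}$ is concealed inside $\xi_{A}$, nor inside the source that replaces $\mathcal{C}_{A}$ once the constraint $\mathcal{C}_{A}=\overline{T}_{1A}$ is imposed.

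For the Minkowski-target expression \eqref{xiAbis} I would specialise the vector $f_{A}$ of \eqref{fVector}. From $\overline{W}_{1}=\overline{g}_{1\mu}\overline{W}^{\mu}=\nu_{0}\overline{W}^{0}$ and \eqref{Minkowski6} one gets $\nu^{0}\nu_{A}\overline{W}_{1}=\nu_{A}\overline{W}^{0}=-x^{1}\overline{g}^{CD}s_{CD}\,\nu_{A}$; from \eqref{Minkowski7}, together with $\overline{W}_{A}=\overline{g}_{AB}\overline{W}^{B}$ (note, as flagged in \eqref{GammaA}, that this differs from $\overline{g}_{A\mu}\overline{W}^{\mu}$) and $\overline{g}_{AB}\nu^{B}=\nu_{A}$, one obtains $\overline{W}_{A}=-\tfrac{2\nu^{0}}{x^{1}}\nu_{A}+\overline{g}_{AB}\overline{g}^{CD}S_{CD}^{B}$; and since $\overline{g}_{AB}\overline{g}^{CD}=\gamma_{AB}\gamma^{CD}$ by \eqref{conformal}, the definition of $\tilde{\Gamma}_{A}$ turns $\overline{W}_{A}-\tilde{\Gamma}_{A}$ into $\gamma_{AB}\gamma^{CD}(S_{CD}^{B}-\tilde{\Gamma}_{CD}^{B})$. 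Writing $\overline{W}^{0}=-x^{1}\overline{g}^{CD}s_{CD}$ and $r=x^{1}$ then assembles these pieces into \eqref{xiAbis}.

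I expect the only real difficulty to be bookkeeping rather than conceptual: keeping the index raising and lowering straight in the last step (in particular the two distinct objects flagged in \eqref{GammaA}), and checking that \eqref{CAfinal2}, whose derivation in the preceding section does the heavy lifting, is recorded with the sign conventions used here. No step introduces an idea not already contained in \eqref{CAfinal2}.
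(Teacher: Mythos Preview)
Your proposal is correct and follows exactly the route the paper intends: the paper itself merely states ``From the identity \eqref{CAfinal2} we then find'' the lemma, leaving implicit precisely the two reductions you carry out (the trace/trace-free split of $\chi_A{}^B$ and the substitution of \eqref{nu0equation}), together with the Minkowski specialization of $f_A$. Your bookkeeping, including the careful distinction around \eqref{GammaA} and the conformal cancellation $\overline{g}_{AB}\overline{g}^{CD}=\gamma_{AB}\gamma^{CD}$, is accurate.
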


Anticipating, we note that $\nu_{A}$ will also appear in the
last  wave-map-gauge constraint $\mathcal{C}_{0}$ through
$\xi_{A}$ only.

If one assumes that $\overline{T}_{1A}$ is known (e.g., in vacuum, or
for scalar fields, compare (\ref{TscalarA1})), the homogeneous
part of the equation $\mathcal{C}_{A}=\overline{T}_{1A}$ reads
$$-\frac{1}{2}(\partial_{1}\xi_{A}+\tau\xi_{A})=0
 \;,
$$
and admits as general solution, keeping in mind that
$\tau\equiv(n-1)\partial_{1}\log\varphi$,
\begin{equation}
\xi_{A}={\check\xi}_{A}\varphi^{-({n-1})}\; ,
 \label{11VI.9}
\end{equation}
for some vector field on the sphere $ {\check\xi}_{A}$. The
final solution $\xi_{A}$ is of the form (\ref{11VI.9}), with
$\check\xi_{A} $ obtained by integrating the following equation
for ${\check\xi}_{A}$
\begin{equation}
 \partial_{1} {\check\xi}_{A}=
2\varphi^{n-1}\{
 \tilde{\nabla}_{B}\sigma_{A}{}^{B}
-\frac{n-2}{n-1}\partial_{A}\tau
-\partial_{A}\kappa
-\overline{T}_{1A}
\}
 \;,
 \label{7VI.1}
\end{equation}
with the boundary condition ${\check\xi}_{A}=0$,  deduced from
the finiteness of
$$ \lim_{r\to 0}\xi_{A} = \lim_{r\to 0}
 r^{-n+1}\check{\xi}_{A}
$$
(compare (\ref{7VI.34})-(\ref{7VI.37}), and(\ref{7VI.40})). The
field $\nu_{A}$ is then obtained by integrating (\ref{xiAbis}),
with the boundary condition $\nu_{A}=0$ at $r=0$.
These equations are a first order linear system
of ODEs with coefficients singular for $r=0$.

In the NCT case we have $\xi_{A}\equiv0$ and
$\overline{W}^{0}\overset{\eta}{\equiv}-x^{1}\overline{g}^{AB}s_{AB}$,
and hence
\begin{equation}
2\partial_{1}\nu_{A}+\frac{n-3}{r}\nu_{A}=0
 \;,
\end{equation}
whose general solution is
\begin{equation*}
\nu_{A}=k_{A}r^{-\frac{n-3}{2}}
\end{equation*}
with $k_{A}$ independent of $x^{1}\equiv r$. The solution
tending to zero with $r$, and compatible with the
boundary condition (\ref{7VI.35}) for $n\ge 2$, is
$\nu_{A}=0$. In the general case, but with Minkowski target,
the equation (\ref{xiAbis}) reads
\begin{equation*}
2\partial_{1}\nu_{A}+\frac{n-3}{r}\nu_{A}-4\nu_{C}\sigma_{A}{}^{C}+
\{r\nu_{0}\overline{g}^{AB}s_{AB}-\frac{n-1}{r}+\frac{4\psi}{n-1}\}\nu_{A}
\overset{\eta}{=}
\end{equation*}
\begin{equation}
\nu_{0}\gamma_{AB}\gamma^{CD}(S_{CD}^{B}
-\tilde{\Gamma}_{CD}^{B})-\nu_{0} \xi_{A}
\; .
\end{equation}
Setting
\begin{equation*}
\nu_{A}=k_{A}r^{-\frac{n-3}{2}}
\end{equation*}
gives for $k_{A}$, with $k_{A}$ which must tend to zero with
$r$, a differential system with coefficients continuous and
right hand side tending to zero like $r^{\frac{n-3}{2}}$,
\begin{equation*}
\partial_{1}k_{A}
\overset{\eta}{=}
2k_{C}\sigma_{A}{}^{C}+\lambda k_{A}+\mu_{A}
\end{equation*}
with
\begin{eqnarray}
\lambda &:=& -\frac{1}{2}\{r\nu_{0}\overline{g}^{AB}s_{AB}
-\frac{n-1}{r}+\frac{4\psi}{n-1}\} ,
 \nonumber \\
\mu_{A} &:=& \frac{1}{2}r^{\frac{n-3}{2}}\{\nu_{0}\gamma_{AB}\gamma^{CD}
(S_{CD}^{B}-\tilde{\Gamma}_{CD}^{B})-\nu_{0}\xi_{A}\} .
 \nonumber
\end{eqnarray}
Such a system can be solved by iterated integration starting from
$k_{A}^{(0)}=0$, %
\begin{equation*}
k_{A}^{(p)}=\int_{0}^{r}\{2k_{C}^{(p-1)}\sigma_{A}{}^{C}+\lambda k_{A}%
^{(p-1)}+\mu_{A}\}(\rho)d\rho
\end{equation*}
The convergence and bound $|k_{A}|\leq Cr^{\frac{n+1}{2}}$
results from the bounds of $\sigma$, $\lambda$ and $\mu$. In
conclusion, in vacuum, and in the wave-map gauge, the solution
of (\ref{7VI.1}) exists as long as $\tau$ does, with
$\nu_{A}\in C^{1}$ and $|\nu_{A}|\leq Cr^{2}$.

\subsection{Vanishing of $\overline{H}_{A}$}

The general identity
(\ref{RicciHIdentity}) gives in our coordinates
\begin{equation}
\mathcal{C}_{A}+\mathcal{L}_{A}\equiv
\overline{R}_{1A}\equiv
\overline{R}_{1A}^{(H)}+
\frac{1}{2}(\nu_{0}\overline{\hat{D}_{A}H^{0}}+
        \nu_{A}\overline{\hat{D}_{1}H^{0}}+
        \overline{g}_{AB}\overline{\hat{D}_{1}H^{B}}) ,
\label{HApreserve1}
\end{equation}
with $\hat{D}$ the covariant derivative of the target metric,
which in this subsection will be chosen to be the Minkowski
metric, and hence $\hat{D}=D$. Therefore, if a metric solves
$R_{1A}^{(H)}=T_{1A}$ and $\mathcal{C}_{A}=\overline{T}_{1A}$,
we will have, {taking $\overline{H}^{0}=0$ (which, for
sufficiently regular solutions, and for a Minkowski target, has
been justified in Section~\ref{ss5V.5}) on the left hand side},
\begin{equation}
\mathcal{L}_{A}
= -\frac{1}{2}\{\partial_{1}\overline{H}_{A}+\tau \overline{H}_{A}\}
\stackrel{\eta}{=}
  \frac{1}{2}(\nu_{0}\overline{D_{A}H^{0}}
             +\nu_{A}\overline{D_{1}H^{0}}
             +\overline{g}_{AB}\overline{D_{1}H^{B}}),
\label{HApreserve2}
\end{equation}
For Minkowski target these derivatives are, still for
$\overline{H}^{0}=0$,
\begin{align}
\overline{D_{A}H^{0}}
\stackrel{\eta}{=} &\ \hat{\Gamma}_{AB}^{0}\overline{H}^{B}
\stackrel{\eta}{=} -rs_{AB}\overline{H}^{B}, \qquad
\overline{D_{1}H^{0}} \stackrel{\eta}{=} 0\;,
\label{HApreserve3}
\\
\overline{D_{1}H^{B}} \stackrel{\eta}{=} &\
\partial_{1}\overline{H}^{B}+\hat{\Gamma}_{1C}^{B}\overline{H}^{C}
\stackrel{\eta}{=}
\overline{g}^{BC}\partial_{1}\overline{H}_{C}+\overline{H}_{C}\overline{g}^{CD}
(\frac{1}{r}\delta_{D}^{B}-2\chi_{D}{}^{B}) .
\label{HApreserve4}
\end{align}

Therefore (\ref{HApreserve2}) reads
\begin{equation}
2\partial_{1}\overline{H}_{A}+\tau \overline{H}_{A}
\stackrel{\eta}{=}
\{\nu_{0}rs_{AB}\overline{g}^{BC}-\frac{1}{r}\delta_{A}^{C}
   +2 \chi_{A}{}^{C}\}\overline{H}_{C}
 \;.
\label{HApreserve5}
\end{equation}

Taking leading orders in $r$ near the vertex, as given in
Section~\ref{ssBcv}, we find
\begin{equation}
\partial_{1}\overline{H}_{A} +
\left(\frac{n-3}{2r} \delta_{A}^{B} + O_{A}^{B}\right)\overline{H}_{B}
\stackrel{\eta}{=} 0
\;,
\label{HApreserve6}
\end{equation}
where the $O_{A}^{B}$ are $O(r)$ functions. Hence
\begin{equation}
\overline{H}_{A}=r^{-\frac{n-3}{2}}k_{A}\;,
\qquad \text{with} \qquad
\partial_{1}k_{A}+O_{A}^{B}k_{B}=0 \; .
\end{equation}
Standard ODE arguments show that $\overline{H}_{A}= 0$ is the
only solution of (\ref{HApreserve5}) such that
$\overline{H}_{A}=O(r)$, which is the case for metrics having
uniform $C^{1}$ estimates at the vertex.

\begin{remark} {\rm
If we add constraint damping terms as in
(\ref{DampingRicci}) we obtain instead, using again $\overline{H}^{0}=0$,
\begin{equation}
\mathcal{L}_{A} = -\frac{1}{2}\partial_{1}\overline{H}_{A}
+\frac{1}{2}(-\tau+\epsilon n_{1}) \overline{H}_{A}
\; .
\label{dampingA}
\end{equation}
No term proportional to $\overline{H}^{1}$ appears, and hence the constraint
damping term is compatible with this second step of the constraint
hierarchy.
The new term does not change the leading orders in $r$ of equation
(\ref{HApreserve6}) and hence $\overline{H}_{A}=0$ is still the only
regular solution.
} \end{remark}

\section{The $\mathcal{C}_{0}$  constraint}
 \label{constraintC0}

We compute $S_{01}\equiv S_{0\alpha} \ell^{\alpha}$ on
$C_{O}$. We have
\begin{equation}
S_{01}:=R_{01}-\frac{1}{2}g_{01}R
 \;,
\label{S01_1}
\end{equation}
hence in our coordinates
\begin{equation}
\overline{S}_{01}\equiv -\frac{1}{2}\nu _{0}\overline{g}^{AB}\overline{R}_{AB}+\overline{R}%
_{1A}\nu ^{A}-\frac{1}{2}\nu _{0}\overline{g}^{11}\overline{R}_{11} \;.
\label{S01_2}
\end{equation}
We write
\begin{equation}
\overline{R}_{AB}:=\overline{R}_{AB}^{(1)}+\overline{R}_{AB}^{(2)} \;,
\label{RAB}
\end{equation}
with
\begin{equation}
\overline{R}_{AB}^{(1)}:=\overline{\partial_{\alpha} \Gamma_{AB}^{\alpha}}
- \partial_{A}\overline{\Gamma}_{B\alpha}^{\alpha} , \qquad
\overline{R}_{AB}^{(2)}:=
\overline{\Gamma}_{AB}^{\alpha} \overline{\Gamma}_{\alpha\lambda}^{\lambda}
-\overline{\Gamma}_{A\beta}^{\alpha} \overline{\Gamma}_{B\alpha}^{\beta} .
\label{RAB_12}
\end{equation}

We will see  that the $\mathcal{C}_{0}$ wave-map-gauge
constraint is obtained, like the other constraints, by
decomposing the term in $\overline{S}_{01}$ which has not
already been computed, $\overline{g}^{AB}\overline{R}_{AB}$,
into terms defined by data of the degenerate metric on the cone
and terms which vanish when harmonicity conditions are
satisfied on the cone.

Equations (\ref{Gamma2}) and (\ref{Gamma6}) allow us to express
the transversal derivative
$\overline{g}^{AB}\overline{\partial_{0}g_{AB}}$ in terms of
harmonicity functions,
\begin{align}
\frac{1}{2}\nu^{0}\overline{g}^{AB}\overline{\partial_{0}g_{AB}}
\equiv &\ - \overline{\Gamma}^{1}
+(\nu^{0})^{2}\partial_{1}\overline{g}_{00}
+\overline{g}^{11}(\nu^{0}\partial_{1}\nu_{0}-\frac{1}{2}\tau
+\frac{1}{2}\overline{\Gamma}_{1} )
\notag \\ &
+2(\nu^{0})^{2}\nu^{A}(-\partial_{1}\nu_{A}+\nu_{B}\chi_{A}{}^{B})
+\nu^{0}\overline{g}^{AB}\tilde{\nabla}_{B}\nu_{A}
\label{Harmonicity3} \\
\equiv & - \overline{\Gamma}^{1}
-\partial_{1}\overline{g}^{11} + \overline{g}^{11}
( -\nu^{0}\partial_{1}\nu_{0} -\frac{1}{2}\tau +
   \frac{1}{2}\overline{\Gamma}_{1})
+\nu^{0}\overline{g}^{AB}\tilde{\nabla}_{B}\nu_{A}.
\label{27V.2}
\end{align}

\subsection{Computation of $\overline{g}^{AB}\overline{R}_{AB}^{(1)}$}
\label{gABRAB1}

We have
\begin{equation}
\overline{g}^{AB}\overline{R}_{AB}^{(1)}:=\overline{g}^{AB}\{\overline{\partial_{0}%
\Gamma_{AB}^{0}}+\partial_{1}\overline{\Gamma}_{AB}^{1}+\partial_{C}\overline{\Gamma} %
_{AB}^{C}-\partial_{A}\overline{\Gamma}_{B\alpha}^{\alpha}\}
\label{gR1_1}
\end{equation}

To compute we proceed in a straightforward way, using the
values of the Christoffel symbols of the first kind and
elementary algebraic relations in our coordinates on the cone.
Equations (\ref{null9}) and (\ref{null10}) of
Appendix~\ref{A7VI.1} are useful for the calculations that
follow.

We set
\begin{equation}
\overline{g}^{AB}\overline{\partial_{0}\Gamma_{AB}^{0}}\equiv I_{1}+II_{1},
\label{gR1_4}
\end{equation}
with
\begin{align}
I_{1}:=&\
\overline{g}^{AB}\nu^{0}\overline{\partial_{0}[1,AB]}\equiv -\frac{1}{2}
\overline{g}^{AB}\nu^{0}\partial_{1}\overline{\partial_{0}g_{AB}}
+\overline{g}^{AB}\nu^{0}\partial_{A}\overline{\partial_{0}g_{1B}},
\label{gR1_5}
\\
II_{1}:=&\
\overline{g}^{AB}\overline{\partial_{0}g^{0\alpha}}
\overline{[\alpha,AB]}\equiv
(\nu^{0})^{2}\overline{\partial_{0}g_{11}}\overline{g}^{AB}(\frac{1}{2}
\overline{\partial_{0}g_{AB}}-\partial_{A}\nu _{B})-\overline{\partial
_{0}g^{01}}\tau +\overline{\partial_{0}g^{0C}}\tilde{\Gamma}_{C},
\label{gR1_6}
\end{align}
and
\begin{align}
\overline{\partial_{0}g^{01}} = &\
-(\nu^{0})^{2}\overline{\partial_{0}g_{01}}
-\nu^{0}\overline{g}^{11} \overline{\partial_{0}g_{11}}
+(\nu^{0})^{2}\nu^{C}\overline{\partial _{0}g_{1C}} , \\
\overline{\partial_{0}g^{0C}} = &\
(\nu^{0})^2\nu^{C}\overline{\partial_{0}g_{11}}
-\nu^{0}\overline{g}^{CA} \overline{\partial_{0}g_{1A}} .
\end{align}
Grouping terms gives
\begin{align}
\overline{g}^{AB}\overline{\partial_{0}\Gamma _{AB}^{0}}\equiv &\
-\frac{1}{2}\nu ^{0}\overline{g}^{AB}\partial_{1}\overline{\partial_{0}g_{AB}}%
+\nu ^{0}\overline{g}^{AB}\tilde{\nabla}_{A}\overline{\partial_{0}g_{1B}}
 \notag
 \\
 &
 +(\nu ^{0})^{2}\overline{\partial_{0}g_{11}}\{\frac{1}{2}\overline{g}^{AB}%
 \overline{\partial_{0}g_{AB}}-\tilde{\nabla}_{A}\nu ^{A}+\nu _{0}\tau \overline{g}^{11}\}
 \notag
\\
 &
 +(\nu ^{0})^{2}\overline{\partial_{0}g_{01}}\tau
 -(\nu ^{0})^{2}\nu^{A}\overline{\partial_{0}g_{1A}}\tau
 \;.
\label{gR1_8}
\end{align}

We now separate
\begin{equation}
\overline{g}^{AB}\overline{\partial_{1}\Gamma_{AB}^{1}}\equiv III_{1}+IV_{1} ,
\label{gR1_10}
\end{equation}
with
\begin{align}
III_{1}\equiv &\
\overline{g}^{AB}\{
    \nu^{0}\partial_{1}\overline{[0,AB]}
   +\overline{g}^{11} \partial_{1}\overline{[1,AB]}
   +\overline{g}^{1C}\partial_{1}\overline{[C,AB]}
\}
\nonumber
\\
= &\ -\frac{1}{2}\overline{g}^{AB}\nu^{0}\partial_{1}\overline{\partial
_{0}g_{AB}}+\overline{g}^{AB}\nu^{0}\partial_{1}\partial_{A}\nu_{B}
-\overline{g}^{11}\overline{g}^{AB}\partial_{1}\chi_{AB}-\nu^{0}\nu^{C}\overline{g}
^{AB}\partial_{1}\overline{[C,AB]} ,
\label{gR1_12}
\\
IV_{1}:= &\
\overline{g}^{AB}\partial_{1}\overline{g}^{1\alpha}\overline{[\alpha,AB]}
\notag \\
\equiv &\
\overline{g}^{AB}\partial_{1}\nu^{0}
(\partial_{A}\nu_{B}-\frac{1}{2}\overline{\partial_{0}g_{AB}})
-\tau\partial_{1}\overline{g}^{11}
-\overline{g}^{AB}\partial_{1}(\nu^{0}\nu^{C})\overline{[C,AB]}.
\label{gR1_13}
\end{align}
Grouping terms gives
\begin{align}
\overline{g}^{AB}\overline{\partial_{1}\Gamma_{AB}^{1}}\equiv &\
-\frac{1}{2}\partial_{1}(\nu^{0}\overline{g}^{AB}\overline{\partial_{0}g_{AB}}%
)-\nu^{0}\chi^{AB}\overline{\partial_{0}g_{AB}}+\overline{g}^{AB}\partial_{1}(%
\nu^{0}\tilde{\nabla}_{A}\nu_{B})
\notag \\ &
-\overline{g}^{11}\overline{g}^{AB}\partial_{1}\chi_{AB}-\tau\partial_{1}g^{11} .
\label{gR1_14}
\end{align}

Finally we have
\begin{align}
\overline{g}^{AB}\partial_{C}\overline{\Gamma}_{AB}^{C}\equiv &\
\overline{g}^{AB}\nu ^{C}\nu
^{0}\partial_{C}\chi _{AB}+\tau \partial_{C}(\nu ^{C}\nu ^{0})
+\overline{g}^{AB}\partial_{C}\tilde{\Gamma}_{AB}^{C} ,
\label{gR1_15}
\\
-\overline{g}^{AB}\partial_{A}\overline{\Gamma}_{B\alpha} ^{\alpha}  \equiv &\
-\overline{g}^{AB}\partial_{AB}^{2}(\log \sqrt{\det g})\equiv
-\overline{g}^{AB}\partial_{AB}^{2}\{\log (\nu _{0}\sqrt{\det \tilde{g}})\}
\notag \\
\equiv &\
-\overline{g}^{AB}\{\partial_{A}\nu ^{0}\partial_{B}\nu _{0}+\nu
^{0}\partial_{AB}^{2}\nu _{0}+\partial_{A}\tilde{\Gamma}_{BC}^{C}\}.
\label{gR1_16}
\end{align}

\subsection{Computation of $\overline{g}^{AB}\overline{R}_{AB}^{(2)}$}
\label{gABRAB2}

We set
\begin{equation}
g^{AB}\overline{R}_{AB}^{(2)}:=\overline{g}^{AB}\{\overline{\Gamma}_{AB}^{\alpha} \overline{%
\Gamma}_{\alpha \beta} ^{\beta} -\overline{\Gamma}_{A\beta} ^{\alpha} \overline{\Gamma}%
_{B\alpha} ^{\beta} \}\equiv (I_{2}+II_{2}+III_{2}+IV_{2}+V_{2}+VI_{2}) ,
\label{gR2_1}
\end{equation}
with
\begin{equation}
I_{2}:=\overline{g}^{AB}
    \overline{\Gamma}_{AB}^{0}\overline{\Gamma}_{0\beta}^{\beta} ,
\qquad
II_{2}:=\overline{g}^{AB}
    \overline{\Gamma}_{AB}^{1}\overline{\Gamma}_{1\beta}^{\beta} .
\label{gR2_2}
\end{equation}
We find by straightforward computation
\begin{eqnarray}
I_{2} &\equiv&
-\nu^{0}\tau \{
   \nu^{0}\overline{\partial_{0}g_{01}}
  +\frac{1}{2}\overline{g}^{11}\overline{\partial_{0}g_{11}}
  -\nu^{0}\nu^{A}\overline{\partial_{0}g_{1A}}
  +\frac{1}{2}\overline{g}^{AB}\overline{\partial_{0}g_{AB}}
\} ,
\label{gR2_3} \\
II_{2} &\equiv&
\{ \nu^{0}\tilde{\nabla}_{A}\nu^{A}
  -\frac{1}{2}\nu ^{0}\overline{g}^{AB}\overline{\partial_{0}g_{AB}}
  -\overline{g}^{11}\tau
\} (\nu^{0}\partial_{1}\nu_{0}+\tau) .
\label{gR2_4}
\end{eqnarray}
Then we have, recalling that $\tilde{\Gamma}$ denotes Christoffel
symbols of the metric $\tilde{g}$,%
\begin{equation}
III_{2}:=\overline{g}^{AB}
    \overline{\Gamma}_{AB}^{C}\overline{\Gamma}_{C\beta}^{\beta}
\equiv
(\nu^{0}\nu^{C}\tau +\overline{g}^{AB}\tilde{\Gamma}_{AB}^{C})
(\tilde{\Gamma}_{CD}^{D}+\nu^{0}\partial_{C}\nu_{0})\;,
\label{gR2_5}
\end{equation}
Next,
\begin{equation}
IV_{2}:=-\overline{g}^{AB}\{\overline{\Gamma}_{A0}^{0}\overline{\Gamma}_{B0}^{0}+\overline{\Gamma%
}_{A1}^{1}\overline{\Gamma}_{B1}^{1}\}
 \;.
\label{gR2_6}
\end{equation}
Furthermore,
\begin{equation}
\overline{\Gamma}_{1A}^{1}\equiv -\overline{\Gamma}_{0A}^{0}
-\nu^{0}\nu_{B}\chi_{A}{}^{B}+\nu^{0}\partial_{A}\nu_{0} ,
\label{gR2_7}
\end{equation}
with ($\zeta_{A}$ is sometimes called the torsion form)
\begin{equation}
\overline{\Gamma}_{0A}^{0}\equiv \frac{1}{2}\nu^{0}\zeta_{A},
\qquad
\zeta_{A}:=\overline{\partial_{0}g_{1A}}+\partial_{A}\nu_{0}-\partial_{1}\nu_{A}.
\label{gR2_8}
\end{equation}
Hence, using (\ref{Harmonicity2}),
\begin{equation}
\zeta_{A}=
2\partial_{A}\nu_{0}-2\partial_{1}\nu_{A}
+2\nu_{B}\chi_{A}{}^{B}+\nu_{0}(f_{A}+\overline{H}_{A}+\nu_{A}\overline{H}^{0}) .
\label{gR2_9}
\end{equation}
In terms of this object we have
\begin{equation}
 IV_{2}\equiv -\overline{g}^{AB}(\nu ^{0})^{2}\left\{\frac{1}{2}\zeta _{A}\zeta
_{B}+\zeta _{A}(\nu _{C}\chi _{B}{}^{C}-\partial_{B}\nu _{0})+(\nu _{C}\chi _{A}{}^{C}
-\partial_{A}\nu _{0})(\nu_{D}\chi _{B}{}^{D}-\partial_{B}\nu
_{0})\right\}
 \;.
\label{gR2_10}
\end{equation}
We set
\begin{align}
V_{2}:= &\
-2\overline{g}^{AB}(\overline{\Gamma}_{AC}^{0}\overline{\Gamma}_{B0}^{C}+\overline{\Gamma}
_{AC}^{1}\overline{\Gamma}_{B1}^{C})
\notag \\
= &\
\chi^{AB}\{-(\nu^{0})^{2}\nu_{A}\zeta _{B}+2\nu ^{0}\overline{%
\partial_{0}g_{AB}}-2\nu^{0}\tilde{\nabla}_{A}\nu_{B}+2\overline{g}^{11}\chi
_{AB}\}.
\label{gr2_11}
\end{align}
Finally
\begin{align}
VI_{2}\equiv &\
-\overline{g}^{AB}\overline{\Gamma}_{AD}^{C}\overline{\Gamma}_{BC}^{D}\equiv
-\overline{g}^{AB}(\nu^{0}\nu^{C}\chi_{AD}+\tilde{\Gamma}_{AD}^{C})(\nu^{0}\nu
^{D}\chi_{BC}+\tilde{\Gamma}_{BC}^{D})
\notag \\
= &\
-(\nu^{0})^{2}\nu^{C}\nu^{D}\chi_{C}{}^{A}\chi_{AD}-2\tilde{
\Gamma}_{AD}^{C}\nu^{0}\nu^{D}\chi_{C}{}^{A}
-\overline{g}^{AB}\tilde{\Gamma}_{AD}^{C}\tilde{\Gamma}_{BC}^{D} .
\label{gr2_12}
\end{align}

\subsection{Final computation of $\overline{g}^{AB}\overline{R}_{AB}$}

Adding the results of sections~\ref{gABRAB1} and~\ref{gABRAB2},
we get the final result
\begin{align}
\overline{g}^{AB}\overline{R}_{AB} \equiv &\
 2(\partial_{1}+\overline{\Gamma}^{1}_{11})^{2}\overline{g}^{11}
+3\tau (\partial_{1}+\overline{\Gamma}^{1}_{11})\overline{g}^{11}+
(\partial_{1}\tau +\tau ^{2})\overline{g}^{11}
\notag \\ &
+2(\partial_{1}+\overline{\Gamma}^{1}_{11}+\tau)\overline{\Gamma}^{1}
+\tilde{R}-2\overline{g}^{AB}\overline{\Gamma}^{1}_{1A}\overline{\Gamma}^{1}_{1B}
-2\overline{g}^{AB}\tilde{\nabla}_{A}\overline{\Gamma}^{1}_{1B}
\label{gABRAB} \\
\equiv &\
2(\partial_{1}+\overline{\Gamma}^{1}_{11}+\tau)
\left[(\partial_{1}+\overline{\Gamma}^{1}_{11}+\frac{\tau}{2})
\overline{g}^{11} +\overline{\Gamma}^{1}\right] \notag \\
&\
+\tilde{R}-2\overline{g}^{AB}\overline{\Gamma}^{1}_{1A}\overline{\Gamma}^{1}_{1B}
-2\overline{g}^{AB}\tilde{\nabla}_{A}\overline{\Gamma}^{1}_{1B}
\label{gABRABsimpler}
\end{align}
(an explicit expression for $\overline{\Gamma}^{1}_{11}$ can
be found in (\ref{geodesic})), where
\begin{equation}
-2 \overline{\Gamma}^{1}_{1A}
= \nu^{0}\overline{\partial_{0}g_{1A}}
-\nu^{0}\partial_{1}\nu_{A} + 2\nu^{0}\nu_{B}\chi_{A}{}^{B}
-\nu^{0}\partial_{A}\nu_{0}
\; .
\end{equation}
In this way we have isolated the transversal derivatives in
$\overline{\Gamma}^{1}_{11}$, $\overline{\Gamma}^{1}$ and the
vector $\overline{\Gamma}^{1}_{1A}$. Decomposing $\overline{\Gamma}_{A}=
\overline{W}_{A}+\overline{H}_{A}$ we find the relation
\begin{equation}
-2 \overline{\Gamma}^{1}_{1A}
= \xi_{A} + \overline{H}_{A} + \nu^{0} \nu_{A} \overline{H}_{1},
\end{equation}
with $\xi_{A}$ defined in (\ref{xiA}).

We note the interesting fact that both the
second and third constraints naturally break into two
first-order equations, with the intermediate variable being a
Christoffel, respectively $\overline{\Gamma}^{1}_{1A}$ and
$\overline{g}^{AB}\overline{\Gamma}^1_{AB}$.

\begin{remark}{\rm
The expression in square brackets in (\ref{gABRABsimpler}) can
be rewritten as
\begin{equation}
(\partial_{1}+\overline{\Gamma}^{1}_{11}+\frac{\tau}{2})
\overline{g}^{11} +\overline{\Gamma}^{1}
=
-\frac{\tau}{2}\overline{g}^{11} +
\nu^{0}\overline{g}^{AB}(
  \tilde{\nabla}_B\nu_A -
  \frac{1}{2} \overline{\partial_{0}g_{AB}})
=
\overline{g}^{AB} \overline{\Gamma}^{1}_{AB} .
\end{equation}
This shows that $\overline{g}^{AB}\overline{R}_{AB}$ originally
contains only a first-order radial derivative of
$\overline{g}^{11}$, if we keep the radial derivative of
$\overline{g}^{AB}\overline{\partial_{0}g_{AB}}$. It is
precisely the elimination of the latter object using the
harmonicity condition (\ref{27V.2}) that introduces the
second-order radial derivative of $\overline{g}^{11}$. Hence,
our $\mathcal{C}_{0}$ constraint operator below will contain
such second-order derivative.
}\end{remark}

This leads to the following lemmata:

\begin{lemma}
All terms in  $\overline{g}^{AB}\overline{R}_{AB}$ involving
the derivatives $\overline{\partial_{0}g_{01}}$ and
$\chi^{AB}\overline {\partial_{0}g_{AB}}$ cancel out. The only
new remaining transversal derivative is
$\overline{g}^{AB}\overline{\partial_{0}g_{AB}} $, which can be
eliminated by using
$\overline{\Gamma}^{1}\equiv\overline{H}^{1}+\overline{W}^{1}$.
\end{lemma}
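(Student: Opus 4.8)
The plan is to prove the lemma by carrying out explicitly the additions indicated in Sections~\ref{gABRAB1} and~\ref{gABRAB2}, i.e. by summing the pieces $I_{1},II_{1},III_{1},IV_{1}$ of $\overline{g}^{AB}\overline{R}_{AB}^{(1)}$ together with $\overline{g}^{AB}\partial_{C}\overline{\Gamma}_{AB}^{C}$ and $-\overline{g}^{AB}\partial_{A}\overline{\Gamma}_{B\alpha}^{\alpha}$, and the pieces $I_{2},\dots,VI_{2}$ of $\overline{g}^{AB}\overline{R}_{AB}^{(2)}$, while organising the result according to the transversal‑derivative monomials that can occur, namely $\overline{\partial_{0}g_{01}}$, $\nu^{A}\overline{\partial_{0}g_{1A}}$, $(\overline{\partial_{0}g_{11}})^{2}$, $\overline{\partial_{0}g_{11}}$ (times $\overline{g}^{11}$), $\chi^{AB}\overline{\partial_{0}g_{AB}}$ and finally $\overline{g}^{AB}\overline{\partial_{0}g_{AB}}$ (both bare and under a single $\partial_{1}$). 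First I would collect the coefficient of $\nu^{0}\overline{\partial_{0}g_{01}}$ coming from the last line of (\ref{gR1_8}) and from $I_{2}$ in (\ref{gR2_3}) and check it is identically zero; the same computation simultaneously kills the $\nu^{A}\overline{\partial_{0}g_{1A}}$‑terms of (\ref{gR1_8}) against the corresponding term of $I_{2}$, and the $\overline{g}^{11}\overline{\partial_{0}g_{11}}$‑terms against the relevant contributions of $II_{1}$, $II_{2}$ and $IV_{2}$; the quadratic combinations $(\overline{\partial_{0}g_{11}})^{2}$ drop out as well (as already happened for $\mathcal{C}_{1}$ and $\mathcal{C}_{A}$).

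Next I would treat the genuinely new point, the $\chi^{AB}\overline{\partial_{0}g_{AB}}$‑terms: these enter with coefficient $-\nu^{0}$ from (\ref{gR1_14}) (the algebraic remainder of $\overline{g}^{AB}\overline{\partial_{1}\Gamma_{AB}^{1}}$), with coefficient $+2\nu^{0}$ from $V_{2}$ in (\ref{gr2_11}), and, after using (\ref{null9})--(\ref{null10}) to re‑express $\overline{[C,AB]}$ and $\partial_{1}\overline{[C,AB]}$, a further compensating contribution is produced when $\partial_{1}$ is commuted past $\overline{g}^{AB}$ in $III_{1}$/$IV_{1}$ via $\partial_{1}\overline{g}^{AB}=-2\chi^{AB}$; I would show these add up to zero. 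What remains is then a single transversal derivative $\overline{g}^{AB}\overline{\partial_{0}g_{AB}}$, occurring bare and inside $-\tfrac12\partial_{1}(\nu^{0}\overline{g}^{AB}\overline{\partial_{0}g_{AB}})$, and assembling everything reproduces exactly (\ref{gABRAB})--(\ref{gABRABsimpler}). Finally, inserting the harmonicity identity (\ref{Harmonicity3})/(\ref{27V.2}), which writes $\tfrac12\nu^{0}\overline{g}^{AB}\overline{\partial_{0}g_{AB}}$ as $-\overline{\Gamma}^{1}$ plus quantities intrinsic to $\overline{g}$ on the cone, and then splitting $\overline{\Gamma}^{1}\equiv\overline{H}^{1}+\overline{W}^{1}$, removes $\overline{g}^{AB}\overline{\partial_{0}g_{AB}}$ entirely; as noted in the remark after (\ref{gABRABsimpler}) this trades the surviving $\partial_{1}\overline{\Gamma}^{1}$ for a second‑order radial derivative of $\overline{g}^{11}$, and separates $\overline{g}^{AB}\overline{R}_{AB}$ into a part depending only on $\overline{g}$ on $C_{O}$ (plus $\overline{W}$) and a part linear and homogeneous in $\overline{H}$, which is what feeds into $\mathcal{C}_{0}$ and $\mathcal{L}_{0}$ through (\ref{S01_2}).

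The main obstacle I anticipate is not conceptual but the bookkeeping of the $\partial_{1}$‑differentiated terms: $IV_{1}$, parts of $III_{1}$, and $V_{2}$ contain radial derivatives of $\nu^{0}\nu^{C}$, of $\nu^{0}\tilde\nabla_{A}\nu_{B}$ and of $\chi_{AB}$, and when these are expanded they generate new algebraic $\chi^{AB}$‑ and $\zeta_{A}$‑terms that must be matched against the $\partial_{1}$‑free $\chi$‑terms in $IV_{2}$, $V_{2}$, $VI_{2}$ (using $\zeta_{A}$ from (\ref{gR2_8})--(\ref{gR2_9})) before the cancellations become visible. Keeping careful track of which surviving terms are intrinsic to the degenerate metric $\tilde g$ — and therefore belong to $\mathcal{C}_{0}$ — as opposed to those that reorganise into $\partial_{1}$ or algebraic multiples of $\overline{\Gamma}^{1}$, $\overline{\Gamma}^{1}_{11}$ and $\overline{\Gamma}^{1}_{1A}$ (hence contribute the $\overline{W}$‑part of $\mathcal{C}_{0}$ and the $\overline{H}$‑part of $\mathcal{L}_{0}$) is the delicate part of the argument; once identity (\ref{gABRAB}) is established the statement of the lemma is immediate by inspection.
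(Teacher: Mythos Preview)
Your approach is exactly the paper's: the lemma is stated as a direct corollary of the identity (\ref{gABRAB}), whose derivation consists in adding the pieces computed in Sections~\ref{gABRAB1}--\ref{gABRAB2} and observing that the only surviving transversal data are packaged into $\overline{\Gamma}^{1}_{11}$, $\overline{\Gamma}^{1}$ and $\overline{\Gamma}^{1}_{1A}$, with $\overline{g}^{AB}\overline{\partial_{0}g_{AB}}$ the only ``new'' ingredient (sitting inside $\overline{\Gamma}^{1}$), then eliminated via $\overline{\Gamma}^{1}=\overline{H}^{1}+\overline{W}^{1}$.

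One bookkeeping correction is worth making, since as written your $\chi^{AB}\overline{\partial_{0}g_{AB}}$ accounting would not close. The explicit algebraic contributions are $-\nu^{0}$ from (\ref{gR1_14}) and $+2\nu^{0}$ from $V_{2}$ in (\ref{gr2_11}), leaving a net $+\nu^{0}\chi^{AB}\overline{\partial_{0}g_{AB}}$. This residue is \emph{not} compensated by re-commuting $\partial_{1}$ in $III_{1}/IV_{1}$ --- those pieces are already fully subsumed in (\ref{gR1_14}). What absorbs it is the second copy of $-\tfrac{1}{2}\nu^{0}\overline{g}^{AB}\partial_{1}\overline{\partial_{0}g_{AB}}$ coming from $I_{1}$ in (\ref{gR1_8}) (note both $I_{1}$ and $III_{1}$ contain such a term): using $\partial_{1}\overline{g}^{AB}=-2\chi^{AB}$ one rewrites $+\nu^{0}\chi^{AB}\overline{\partial_{0}g_{AB}}-\tfrac{1}{2}\nu^{0}\overline{g}^{AB}\partial_{1}\overline{\partial_{0}g_{AB}}=-\tfrac{1}{2}\nu^{0}\partial_{1}(\overline{g}^{AB}\overline{\partial_{0}g_{AB}})$, so the $\chi^{AB}$ piece is absorbed into the trace structure rather than summing to zero on its own --- which is exactly what the lemma asserts. (Also, (\ref{null9})--(\ref{null10}) give $\overline{\partial_{0}g^{0\alpha}}$, not $\overline{[C,AB]}$; the latter terms reorganise into the $\tilde{\nabla}_{A}\nu_{B}$ contributions and are irrelevant to the $\chi^{AB}$ mechanism.)
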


\begin{lemma}
It holds that
\begin{equation}
\overline{S}_{01}\equiv -\frac{1}{2}\nu _{0}\overline{g}^{AB}\overline{R}_{AB}+\overline{R}%
_{1A}\nu ^{A}-\frac{1}{2}\nu _{0}\overline{g}^{11}\overline{R}_{11}
\equiv
\mathcal{C}_{0}+\mathcal{L}_{0} ,
\label{Sbar01}
\end{equation}
where $\mathcal{C}_{0}$ depends only on the quadratic form
$\tilde{g}$ on the cone and the $\overline{W}^{\alpha}$
while $\mathcal{L}_{0}$ is obtained by replacing
$\overline{\Gamma}^{\alpha} $ by $\overline{H}^{\alpha}.$
\end{lemma}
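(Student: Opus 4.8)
The plan is to obtain the decomposition by assembling the three pieces of the algebraic identity~\eqref{S01_2},
\[
\overline{S}_{01}\equiv -\tfrac12\nu_0\overline{g}^{AB}\overline{R}_{AB}+\overline{R}_{1A}\nu^A-\tfrac12\nu_0\overline{g}^{11}\overline{R}_{11}\,,
\]
using the computations already carried out. For the last two summands nothing new is needed: by Section~\ref{constraintC1}, $\overline{R}_{11}\equiv\mathcal{C}_1+\mathcal{L}_1$ with $\mathcal{C}_1$ as in~\eqref{C1final} and $\mathcal{L}_1=-\tfrac12\overline{H}_1\tau$ as in~\eqref{L1final}; and by Section~\ref{constraintCA}, $\overline{R}_{1A}\equiv\mathcal{C}_A+\mathcal{L}_A$ with $\mathcal{C}_A$ as in~\eqref{CAfinal2} and $\mathcal{L}_A$ as in~\eqref{LAfinal}. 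Both $\mathcal{C}_1$ and $\mathcal{C}_A$ involve no derivative of $g$ transverse to $C_O$ — they are built from $\overline{g}$ on $C_O$ and from $\overline{W}$ — while $\mathcal{L}_1,\mathcal{L}_A$ are linear and homogeneous in $\overline{H}$. Hence $-\tfrac12\nu_0\overline{g}^{11}\overline{R}_{11}+\overline{R}_{1A}\nu^A$ already has the desired shape, its $\overline{H}$-part being $-\tfrac12\nu_0\overline{g}^{11}\mathcal{L}_1+\nu^A\mathcal{L}_A$.

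For the remaining summand I would start from the expression~\eqref{gABRABsimpler} for $\overline{g}^{AB}\overline{R}_{AB}$ produced in Sections~\ref{gABRAB1}--\ref{gABRAB2}. In it, the only transversal ($\partial_0$-) derivatives that survive are the ones packaged inside $\overline{\Gamma}^1_{11}$, $\overline{\Gamma}^1$ and $\overline{\Gamma}^1_{1A}$: this is precisely what the first Lemma above records, namely that every term containing $\overline{\partial_0 g_{01}}$ or $\chi^{AB}\overline{\partial_0 g_{AB}}$ cancels. Each of those three Christoffel objects is related to a harmonicity function by an \emph{affine} relation whose inhomogeneous part is data on the cone: $\overline{\Gamma}^1_{11}$ is given by~\eqref{geodesic} and, through~\eqref{Gamma2}, is affine in $\overline{\Gamma}^0$; the combination $\overline{g}^{AB}\overline{\partial_0 g_{AB}}$ is traded for $\overline{\Gamma}^1=\overline{H}^1+\overline{W}^1$ via~\eqref{27V.2}; and $-2\overline{\Gamma}^1_{1A}=\xi_A+\overline{H}_A+\nu^0\nu_A\overline{H}_1$, with $\xi_A$ as in~\eqref{xiA} depending only on $\overline{g}$ on $C_O$ and on $\overline{W}$. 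Substituting these, and finally $\overline{\Gamma}^0=\overline{H}^0+\overline{W}^0$, rewrites $\overline{g}^{AB}\overline{R}_{AB}$ — and therefore, with the preceding paragraph, $\overline{S}_{01}$ — as the sum of one term built entirely from $\overline{g}$-components on $C_O$, their tangential derivatives, and $\overline{W}^\alpha$, plus one term linear and homogeneous in $\overline{H}^0,\overline{H}_A,\overline{H}^1$ and their tangential derivatives.

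Setting $\mathcal{C}_0$ equal to the first term and $\mathcal{L}_0$ to the second gives $\overline{S}_{01}\equiv\mathcal{C}_0+\mathcal{L}_0$ with $\mathcal{C}_0$ free of any transversal derivative of $g$ — i.e.\ depending only on the trace $\overline{g}$ of the metric on $C_O$ and on $\overline{W}$ — as claimed; for $\hat g=\eta$ the $\overline{W}^\alpha$ are made explicit by~\eqref{Minkowski6}--\eqref{Minkowski7}. Because every $\overline{H}$-term was produced by inserting $\overline{\Gamma}^\alpha=\overline{W}^\alpha+\overline{H}^\alpha$ into an expression that is affine in the harmonicity functions, $\mathcal{L}_0$ is exactly the operator obtained from that expression by retaining its harmonicity-function dependence and performing the replacement $\overline{\Gamma}^\alpha\mapsto\overline{H}^\alpha$; equivalently it is $\overline{S}_{01}$ with the $\overline{W}^\alpha$ set to zero, so in particular $\mathcal{C}_\alpha$ agrees with $\overline{S}_{\alpha\beta}\ell^\beta$ when $\overline{H}=0$. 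As a consistency check, this $\mathcal{L}_0$ must coincide with the $\alpha=0$ case of the general identity~\eqref{RicciHIdentity}--\eqref{ConstraintsDecomposition}, i.e.
\[
\mathcal{L}_0\equiv\tfrac12(\overline{g}_{0\lambda}\,\overline{\hat{D}_{1} H^{\lambda}}+\overline{g}_{1\lambda}\,\overline{\hat{D}_{0} H^{\lambda}}-\overline{g}_{01}\,\overline{\hat{D}_{\lambda} H^{\lambda}})\,,
\]
a first-order linear homogeneous operator on $\overline{H}$, which also provides an independent verification of the cancellations.

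The step I expect to be the real obstacle is the affineness just invoked. A priori~\eqref{gABRABsimpler} contains a product of $\overline{\Gamma}^1_{11}$ with a factor that itself carries $\overline{\Gamma}^1$, so one would expect contributions bilinear in the transversal derivatives — hence bilinear in $\overline{H}$ — which would wreck the clean split into $\mathcal{C}_0$ and $\mathcal{L}_0$. The substance of the matter, and the reason the two preparatory Lemmata are stated, is the term-by-term cancellation in Sections~\ref{gABRAB1}--\ref{gABRAB2} of all quadratic and mixed products of $\overline{\partial_0 g_{11}}$, $\overline{\partial_0 g_{1A}}$ and $\overline{g}^{AB}\overline{\partial_0 g_{AB}}$; once that bookkeeping is in hand, everything above is just collecting terms. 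I would run the computation with $\hat g$ general and $\overline{W}^\alpha$ abstract throughout, specialising to the Minkowski values only at the end.
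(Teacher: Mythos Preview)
Your overall strategy---assemble $\overline{S}_{01}$ from the already-computed decompositions of $\overline{R}_{11}$ and $\overline{R}_{1A}$, then treat $\overline{g}^{AB}\overline{R}_{AB}$ via the formula~\eqref{gABRABsimpler} and the harmonicity substitutions---is exactly the paper's route, and the lemma does follow from it. But there is a genuine error in your argument about the structure of the result.

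You assert that the dependence of $\overline{g}^{AB}\overline{R}_{AB}$ on the harmonicity functions is \emph{affine}, so that $\mathcal{L}_0$ comes out linear and homogeneous in $\overline{H}$, and you invoke the cancellation of ``all quadratic and mixed products'' of transversal derivatives. That is not what happens. In~\eqref{gABRABsimpler} the term $-2\overline{g}^{AB}\overline{\Gamma}^1_{1A}\overline{\Gamma}^1_{1B}$ is genuinely quadratic in $\overline{\Gamma}^1_{1A}$, and since $-2\overline{\Gamma}^1_{1A}=\xi_A+\overline{H}_A+\nu^0\nu_A\overline{H}_1$ this produces a piece quadratic in $\overline{H}$; similarly the factor $\overline{\Gamma}^1_{11}$ multiplying $\overline{\Gamma}^1$ gives bilinear terms. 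The paper's explicit formula~\eqref{L0final} for $\mathcal{L}_0$ displays this: its last line is $\tfrac12\overline{g}_{\mu\nu}\overline{H}^\mu\overline{H}^\nu$, and several coefficients of the linear-in-$\overline{H}$ terms carry $\overline{W}$. So the bilinear terms do \emph{not} cancel; only the particular combinations involving $\overline{\partial_0 g_{01}}$ and $\chi^{AB}\overline{\partial_0 g_{AB}}$ do (first Lemma).

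This does not wreck the decomposition. The lemma only claims that $\mathcal{C}_0$ depends solely on $\tilde g$ and $\overline{W}^\alpha$; one simply defines $\mathcal{C}_0$ by substituting $\overline{\Gamma}^\alpha\mapsto\overline{W}^\alpha$ in the polynomial dependence and sets $\mathcal{L}_0:=\overline{S}_{01}-\mathcal{C}_0$, which vanishes when $\overline{H}=0$ regardless of degree. Your ``consistency check'' against the $\alpha=0$ case of~\eqref{ConstraintsDecomposition} is therefore misguided: the identity $\mathcal{L}_0=\tfrac12(\overline{g}_{0\lambda}\overline{\hat D_1 H^\lambda}+\overline{g}_{1\lambda}\overline{\hat D_0 H^\lambda}-\overline{g}_{01}\overline{\hat D_\lambda H^\lambda})$ holds only on solutions of the reduced equations with the constraints satisfied, not as an identity of differential operators. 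The linearity used downstream (in \S\ref{solutionC0}) comes from the hierarchy---one has already arranged $\overline{H}^0=\overline{H}^A=0$ before looking at~\eqref{H0preserve_1b}---not from $\mathcal{L}_0$ being linear as an operator.
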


The explicit formula for $\mathcal{L}_{0}$ reads
\begin{eqnarray}
2\nu^{0}\mathcal{L}_{0} &=&
-2\overline{g}^{1A}{\cal L}_{A} -\overline{g}^{11}{\cal L}_{1}
-2\partial_{1}\overline{H}^{1}
-\left(\tau+2\nu^{0}\partial_{1}\nu_{0}-\nu_{0}\overline{W}^{0}\right)\overline{H}^{1}
\notag \\ &&
-\tilde{\nabla}_{A}\overline{H}^{A}
+\xi_{A}\overline{H}^{A}
+\nu_{0}\overline{g}^{11}\partial_{1}\overline{H}^{0}
-\nu^{A}\partial_{A}\overline{H}^{0}
\notag \\ &&
+\Big(
 \overline{g}_{00}\overline{W}^{0}
+\nu_{0}\overline{W}^{1}
+\nu_{A}\overline{W}^{A}
+\frac{1}{2}\nu_{0}\tau\overline{g}^{11}
\notag \\ && \qquad
-\overline{g}^{11}\partial_{1}\nu_{0}
-2\nu^{0}\partial_{1}\overline{g}_{00}
-\overline{g}^{AB}\partial_{A}\nu_{B}
+2\nu^{0}\nu^{A}\partial_{1}\nu_{A}
\Big)\overline{H}^{0}
\notag \\ &&
+\frac{1}{2}\Big\{\overline{g}_{00}(\overline{H}^{0})^{2}
+2\nu_{0} \overline{H}^{0}\overline{H}^{1}
+2\nu_{A} \overline{H}^{0}\overline{H}^{A}
+\overline{g}_{AB}\overline{H}^{A}\overline{H}^{B}\Big\}
\; . \qquad
\label{L0final}
\end{eqnarray}
Note that the last line in the previous equation is quadratic
in the wave-gauge vector $\overline{H}$, and equals
\begin{equation}
\frac{1}{2}\overline{g}_{\mu\nu}\overline{H}^{\mu}\overline{H}^{\nu}
\; .
\end{equation}
All other terms are linear in $\overline{H}$.

\subsection{Constraint}

To write the  wave-map-gauge constraint
$\mathcal{C}_{0}-\overline{T}_{01}=0$ as an equation for
$\overline{g}^{11}$, we use the other constraints, which have
been satisfied since
$\mathcal{L}_{1}=\mathcal{L}_{A}=0=\overline{H}_{1}=\overline{H}_{A}$,
\begin{equation}
\overline{R}_{1A}=\overline{T}_{1A},\quad  \overline{R}_{11}=\overline{T}_{11}.
\label{C0_1}
\end{equation}
We find
\begin{align}
-2\nu^{0}(\mathcal{C}_{0}-\overline{T}_{01})\ \equiv &\
 2(\partial_{1}+\kappa)^{2}\overline{g}^{11}
+3\tau (\partial_{1}+\kappa)\overline{g}^{11}+
(\partial_{1}\tau +\tau^{2})\overline{g}^{11}
\notag \\ &
+2(\partial_{1}+\kappa+\tau)\overline{W}^{1}
+\tilde{R}-\frac{1}{2}\overline{g}^{AB}\xi _{A}\xi _{B}
+\overline{g}^{AB}\tilde{\nabla}_{A}\xi _{B}
\notag \\ &
+ \overline{g}^{11}\overline{T}_{11}+2\overline{g}^{1A}\overline{T}_{1A}+2\overline{g}^{01}\overline{T}
_{01}
\notag \\
=  & \ 0 ,
\label{C0_2}
\end{align}
where $\xi_{A}$ is the vector (\ref{xiA}) and recall that
\begin{equation*}
\kappa\equiv \nu^{0}\partial_{1}\nu_{0}-\frac{1}{2}(\overline{W}_{1}+\tau)
\; .
\end{equation*}
To avoid ambiguities, we emphasise that
the right-hand-side of (\ref{C0_2}) vanishes identically in
wave-map gauge when $T_{\mu\nu}$ there is replaced by the
Einstein tensor $S_{\mu\nu}$. This fact reflects
the identity, valid for any dimension with our choice of
coordinates,
\begin{equation}
  \overline{g}^{AB}\overline{R}_{AB}
+ \overline{g}^{11}\overline{S}_{11}
+ 2 \overline{g}^{1A}\overline{S}_{1A}
+ 2 \overline{g}^{01}\overline{S}_{01} = 0.
\end{equation}

A slightly simplified form of the differential part of the
constraint is, using (\ref{gABRABsimpler}),
\begin{align}
-2\nu^{0}(\mathcal{C}_{0}-\overline{T}_{01})\ \equiv &\
 2(\partial_{1}+\kappa+\tau)\left[
(\partial_{1}+\kappa+\frac{\tau}{2})\overline{g}^{11}
+ \overline{W}^{1} \right]
\notag \\ &
+\tilde{R}-\frac{1}{2}\overline{g}^{AB}\xi _{A}\xi _{B}
+\overline{g}^{AB}\tilde{\nabla}_{A}\xi _{B}
\notag \\ &
+ \overline{g}^{11}\overline{T}_{11}+2\overline{g}^{1A}\overline{T}_{1A}+2\overline{g}^{01}\overline{T}
_{01}
\notag \\
=  & \ 0 .
\label{C0_3}
\end{align}

Suppose that in addition to (\ref{6VI.21new})-(\ref{6VI.22new}) and
(\ref{8VI.21})-(\ref{8VI.22}) it holds that
\begin{eqnarray}
 \label{8VI.23}
 \overline{W}_{0} & = & \overline{W}_{0}(\gamma_{AB},\varphi,
    \nu_{0},\nu_{A},\overline{g}^{11},r,x^{A})
 \;,
\\
 \label{8VI.24}
 \overline{T}_{11} & = &
 \overline{T}_{11}(
 \ldots,\overline{\partial_{0} g_{1A}},\overline{g}^{11},
  \partial_{1}\overline{g}^{11}
 )
 \;, \phantom{xxxxx}
\end{eqnarray}
where $\ldots $ in (\ref{8VI.24}) denotes the collection of
fields already occurring in (\ref{8VI.22}). This is clearly
compatible with the wave-map gauge (\ref{Minkowski6}), and with
scalar fields or Maxwell fields as sources (compare
Section~\ref{sss5V.3}). Then (\ref{C0_2}) becomes a second
order ODE for $\overline{g}^{11}$, linear when the vacuum
Einstein equations and the wave-map gauge have been assumed.

\section{Solution of the $\mathcal{C}_{0}$ constraint}
 \label{solutionC0}

Throughout this section we assume that the target metric is
Minkowski, $\kappa=0$, and that the relevant components of the
tensor $\overline{T}$ are known (e.g., zero). Using the
$\mathcal{C}_{1}$ constraint,
\begin{equation}
\overline{T}_{11}=
 -\Big(\partial_{1}\tau+\frac{1}{n-1}\tau^{2}+|\sigma|^{2}\Big)
 \;,
\end{equation}
we find that the $\mathcal{C}_{0}$  wave-map-gauge constraint
operator can be written as
\begin{align}
\nu^{0}\mathcal{C}_{0} \equiv &\
-\partial_{11}^{2}\overline{g}^{11}
-\frac{3}{2}\tau\partial_{1}\overline{g}^{11}
-\frac{1}{2}\Big(\frac{n-2}{n-1}\tau^{2}-|\sigma|^{2}\Big)\overline{g}^{11}
\nonumber\\ &
-\partial_{1}\overline{W}^{1}
-\tau\overline{W}^{1}
-\frac{1}{2}\tilde{R}
+\frac{1}{4}\overline{g}^{AB}\xi_{A}\xi_{B}
-\frac{1}{2}\overline{g}^{AB}\tilde{\nabla}_{A}\xi_{B}
+\nu^{0}\overline{T}_{1A}\nu^{A}
 \;.
\end{align}
Hence, setting $\overline{g}^{11}\equiv1-\alpha$ and using
using previous notations, the equation for the
$\mathcal{C}_{0}$  wave-map-gauge constraint,
$\mathcal{C}_{0}-\overline{T}_{01}=0$, reads as the linear
second order ODE for $\alpha$
\begin{equation}
L(\alpha)\equiv
 \alpha^{\prime\prime}
+\frac{3}{2}\tau \alpha^{\prime}
+\frac{1}{2}\Big(\frac{n-2}{n-1}\tau^{2} -|\sigma|^{2}\Big)\alpha
=\Phi
 \;,
  \label{2IX.2}
\end{equation}
with $\Phi$ the function known from the previous sections
\begin{align}
\Phi:= &\
\frac{1}{2}\Big(\frac{n-2}{n-1}\tau^{2}-|\sigma|^{2}\Big)
+\partial_{1}\overline{W}^{1}+\tau\overline{W}^{1}
 \notag \\ &\
+\frac{1}{2}\tilde{R}-\frac{1}{4}\overline{g}^{AB}\xi_{A}\xi_{B}
+\frac{1}{2}\overline{g}^{AB}\tilde{\nabla}_{A}\xi_{B}
-\nu^{0} \overline{T}_{1A}\nu^{A}+\nu^{0}\overline{T}_{01}
 \;.
\end{align}
$L(\alpha)$ simplifies to
\begin{equation}
L(\alpha)\equiv
 \alpha^{\prime\prime}
+\frac{3}{2}\big(\frac{n-1}{r}-\psi\big)\alpha^{\prime}
+\frac{1}{2}\Big(\frac{(n-1)(n-2)}{r^{2}}
                 -2\frac{(n-2)}{r}\psi
                 +\frac{n-2}{n-1}\psi^{2}
                 -|\sigma|^{2}\Big)\alpha
 \label{2IX.1}
\end{equation}
This linear equation   has smooth coefficients for $r>0$,  it
has a global solution with initial data given for $r=a>0$.

We proceed to the study of solutions starting from $r=0$.

\subsection{NCT case}

In the NCT case it holds that $\nu^{0}=1$, $f_{A}=0$,
$\nu_{A}=0$, $\tau=\frac{n-1}{r}=-\overline{W}_{1}$,
$\partial_{1}\overline{W}^{1}+\tau\overline{W}^{1}=
-(n-1)(n-2)/r^{2}= -\tilde{R}$ and
$\overline{T}_{\alpha\beta}=0$. Hence $\Phi=0$.
The $\mathcal{C}_{0}$ wave-map-gauge constraint for $\alpha
=1-\overline{g}^{11}$ reduces to
\begin{equation}
2\alpha^{\prime\prime}+\frac{3(n-1)}{r}\alpha^{\prime}
+\frac{(n-1)(n-2)}{r^{2}}\alpha=0\;,\label{C0_NCT}%
\end{equation}
it is a Fuchsian equation with characteristic polynomial
\begin{equation*}
2p(p-1)+3(n-1)p+(n-1)(n-2)
 \;.
\end{equation*}
The zeroes of this polynomial are
\begin{equation*}
p_{+}=\frac{1-n}{2},\text{ \ \ }p_{-}=2-n;
\end{equation*}
both negative or zero for $n\ge 2$. The general
solution of (\ref{C0_NCT}) is
\begin{equation*}
\alpha:=a_{+}r^{(1-n)/2}+a_{-}r^{2-n} ,
\end{equation*}
with $a_{\pm}$ independent of $r$. The only member of this
general solution where $\alpha$ tends to
zero as $r$ tends to zero is $\alpha\equiv0$.

\subsection{General case}

We look for a solution starting from $r=0$ and such that
\begin{equation*}
\lim_{r\to 0}\alpha=\lim_{r\to 0}(r\partial_{1}\alpha)=0\;.
\end{equation*}
We set $\partial_{1}\alpha=\alpha^{\prime}$ and decompose $L$
as follows
\begin{equation}
L(\alpha)\equiv L_{0}(\alpha)+L_{1}(\alpha)
 \;,
\end{equation}
where $L_{0}$ is the Fuchsian operator appearing in the NCT
case,
\begin{equation}
L_{0}(\alpha)\equiv\alpha^{\prime\prime}+\frac{a_{0}}{r}\alpha^{\prime}%
+\frac{b_{0}}{r^{2}}\alpha:=
\alpha^{\prime\prime}+\frac{3(n-1)}{2r}\alpha^{\prime}
+\frac{(n-1)(n-2)}{2r^{2}}\alpha
\;,
\end{equation}
and
\begin{equation}
L_{1}(\alpha)\equiv a_{1}\alpha^{\prime}+b_{1}\alpha:=-\frac{3}{2}\psi
\alpha^{\prime}+\left\{-\frac{(n-2)}{r}\psi+\frac{1}{2}\left(\frac{n-2}{n-1}\psi
^{2}-|\sigma|^{2}\right)\right\}\alpha.
\end{equation}

In order to use the idea of the Fuchs theorem\footnote{{See
e.g.~\cite{YvonneBook}, Appendix V.}}, we write the second
order equation (\ref{2IX.2}) as a first order system for a pair
of unknowns $v:=(v_{1},v_{2})$ by setting $v_{1}:=\alpha$,
$v_{2}\equiv r\alpha^{\prime}$,  hence
$r\alpha^{\prime\prime}=-r^{-1}v_{2}+v_{2}^{\prime}$. The
system reads
\begin{equation*}
rv_{1}^{\prime}-v_{2}=0\;,
\end{equation*}
\begin{equation*}
rv_{2}^{\prime}+(a_{0}-1)v_{2}+b_{0}v_{1}+r(a_{1}v_{2}+b_{1}rv_{1}-r\Phi)=0\;.
\end{equation*}
It is of the form
\begin{equation}
 \label{2IX.3}
rv^{\prime}+Av=r\{F_{1}(r)v+F_{0}(r)\},
\end{equation}
with $A$ the constant linear operator
\begin{equation*}
A\equiv\left(
\begin{array}
[c]{cc}%
0 & -1\\
b_{0} & a_{0}-1
\end{array}
 \right)
\end{equation*}
whose eigenvalues $\mu_{\pm}$ are found by solving the equation
\begin{equation*}
\det\left(
\begin{array}
[c]{cc}%
0-\mu & -1\\
b_{0} & a_{0}-1-\mu
\end{array}
\right)\equiv\mu^{2}+\mu(1-a_{0})+b_{0}=0\;.
\end{equation*}
The solutions are the opposites, $-p_{\pm}$,  of the
characteristic indices computed in the NCT case, hence
nonnegative. Further
\begin{equation*}
F_{1}(r)v\equiv\left(
\begin{array}
[c]{c}%
0\\
-a_{1}v_{2}-rb_{1}v_{1}%
\end{array}
\right),\text{ \ \ }F_{0}(r)
 \equiv\left(
\begin{array}
[c]{c}%
0\\
r\Phi
\end{array}
\right)
 \;,
\end{equation*}
where
\begin{equation*}
a_{1}\equiv-\frac{3}{2}\psi, \quad
rb_{1}\equiv -(n-2)\psi
+\frac{r}{2}\left(\frac{n-2}{n-1}\psi^{2}-|\sigma|^{2}\right)
\end{equation*}
are bounded functions smooth away from $r=0$, as well as
$r\Phi$.  What has been said shows that $F_{0}$ and $F_{1}$ are
continuous at $r=0$ for admissible $\gamma_{AB}$.

\begin{lemma}
Let
\begin{equation}
 \label{2IX.3b}
rv'+Av=r\{F_{1}(r)v+F_{0}(r)\}
\; ,
\end{equation}
be a linear differential system with $A$ a constant linear
operator with non-negative
eigenvalues. Let $F_{1}$ be a continuous linear map and $F_{2}$
a continuous function, for $0\le r\le r_{0}$. The system admits
one and only one solution in $C^{1}([0,r_0])$ which vanishes at
$r=0$.
\end{lemma}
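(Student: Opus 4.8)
The plan is to convert the singular system into an integral (fixed‑point) equation by means of the Euler‑type integrating factor attached to $A$, to solve that equation on a small interval $[0,\delta]$ by a contraction argument, and then to extend the solution to $[0,r_{0}]$ using the classical theory of linear ODEs, which is perfectly regular away from $r=0$. This is essentially the Fuchs theorem as presented in \cite[Appendix V]{YvonneBook}; I would reproduce the argument in the form adapted to (\ref{2IX.3b}). First I would rewrite (\ref{2IX.3b}) as $v'+\tfrac1r Av=G(r)$ with $G(r):=F_{1}(r)v(r)+F_{0}(r)$, multiply by the matrix integrating factor $r^{A}:=\exp(A\log r)$, and integrate from $0$ to $r$. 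Since the eigenvalues of $A$ are non‑negative, $r^{A}$ stays bounded as $r\to0$, so the boundary contribution at $0$ vanishes for every continuous $v$ with $v(0)=0$, and one obtains the equivalent integral equation
\[
 v(r)=(\mathcal{T}v)(r):=r^{-A}\int_{0}^{r}s^{A}\bigl(F_{1}(s)v(s)+F_{0}(s)\bigr)\,ds .
\]

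The key estimate comes from decomposing $\mathbf{R}^{N}$ into the generalised eigenspaces of $A$: on the block with eigenvalue $\mu\ge0$, $s^{A}$ acts as $s^{\mu}$ times a unipotent (polynomial‑in‑$\log s$) factor, and the potentially dangerous logarithmic terms produced by $s^{A}$ inside the integral and by $r^{-A}$ outside it cancel block by block, leaving
$\|(\mathcal{T}v)(r)\|\le C\,r\,\bigl(\|v\|_{C^{0}}+\|F_{0}\|_{C^{0}}\bigr)$ on $[0,r_{0}]$, with $C$ depending only on $A$ and $\sup\|F_{1}\|$. Hence $\mathcal{T}$ maps $C^{0}([0,\delta];\mathbf{R}^{N})$ into itself and is a contraction there as soon as $C\delta<1$, and the Banach fixed point theorem yields a unique $v\in C^{0}([0,\delta])$ solving the integral equation, hence solving (\ref{2IX.3b}) on $(0,\delta]$ with $v(0)=0$. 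Differentiating the integral equation and writing $v'=G-\tfrac1r Av$, together with the bound $v(r)=O(r)$ (again from the eigenspace decomposition), shows that $v'$ extends continuously to $r=0$, so in fact $v\in C^{1}([0,\delta])$.

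To pass from $[0,\delta]$ to $[0,r_{0}]$ I would note that for $r\ge\delta$ the coefficients $A/r$, $F_{1}(r)$, $F_{0}(r)$ are continuous, so the linear Cauchy problem with data $v(\delta)$ has a unique $C^{1}$ solution on $[\delta,r_{0}]$, which one glues to the previous one. For uniqueness on the whole interval, any $C^{1}$ solution vanishing at $r=0$ must satisfy the integral equation above on $[0,\delta]$ (integrate the ODE from $0$, the boundary term vanishing since $v(0)=0$ and $r^{A}$ is bounded), hence it coincides with the fixed point on $[0,\delta]$ by the contraction property, and then with the glued extension on $[\delta,r_{0}]$ by uniqueness for regular linear systems.

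The main obstacle is precisely the behaviour of $\mathcal{T}$ near $r=0$: \emph{a priori} the expression $r^{-A}\int_{0}^{r}s^{A}(\cdots)\,ds$ need not be small, or even bounded, if $A$ carries a zero eigenvalue inside a non‑trivial Jordan block or if its eigenvalues are widely spread. The crucial point — the heart of the Fuchs argument, and here it applies because the indices $-p_{\pm}$ found above are non‑negative — is that within each generalised eigenspace the eigenvalue $\mu$ enters $s^{A}$ and $r^{-A}$ in a matched way, so that only the gain $r^{1}$ from the integration survives and the $\log$‑powers telescope away. Once this estimate is secured, the remainder is routine contraction‑mapping and linear‑ODE bookkeeping.
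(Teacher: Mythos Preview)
Your proposal is correct and follows essentially the same route as the paper: both reduce (\ref{2IX.3b}) to the integral equation
\[
 v(r)=\int_{0}^{r}(r^{-1}\rho)^{A}\{F_{1}(\rho)v(\rho)+F_{0}(\rho)\}\,d\rho
\]
via the integrating factor $r^{A}$, and then solve it by a fixed-point argument.

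The only difference is one of packaging. The paper observes directly that $\sup_{0\le\rho\le r}|(r^{-1}\rho)^{A}|\le 1$ (since $\rho/r\in[0,1]$ and the eigenvalues are non-negative) and runs a Picard iteration on the whole interval $[0,r_{0}]$ at once, obtaining the factorial decay $|v_{n+1}-v_n|\le C_{1}^{n}C_{0}\,r^{n+1}/(n+1)!$; this avoids your two-step contraction-then-glue procedure. Your eigenspace decomposition with its logarithmic bookkeeping is more careful than the paper's bare inequality (and indeed is what one really needs when $A$ has a nontrivial Jordan block, as happens for $n=3$ where both indices equal $1$), but the paper's streamlined estimate makes for a shorter proof in the generic case. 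Your explicit verification that $v\in C^{1}$ at $r=0$ is also a point the paper leaves implicit.
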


\begin{proof}
We set
$v=Mw$, with $M$ a $2\times2$ matrix satisfying the homogeneous
equation
\begin{equation*}
rM^{\prime}+AM=0
\; .
\end{equation*}
We choose for $M$ the matrix
\begin{equation*}
M=r^{-A}\equiv e^{-A\log r}
\; .
\end{equation*}
The equation (\ref{2IX.3}) reads then
\begin{equation}
w^{\prime}=r^{A}\{F_{1}(r)v+F_{0}(r)\}
\; .
\end{equation}
Hence the equation (\ref{2IX.3b}) together with the condition
$v|_{r=0}=0$ are equivalent to the integral equation
\begin{equation}
v(r)=\int_{0}^{r}(r^{-1}\rho)^{A}
 \{F_{1}(\rho)v(\rho)+F_{0}(\rho)\} d\rho
 \;.
 \label{2IX.4}
\end{equation}
We have
\begin{equation}
\sup_{0\leq\rho\leq r}|(r^{-1}\rho)^{A}|\le 1
\; .
 \label{supremum}
\end{equation}
We set, with $a$ an arbitrary positive number,
\begin{equation}
 \label{26I.1}
C_{1}:=\sup_{0\le r\le a}|F_{1}(r)|
 \;,
\qquad
C_{0}:=\sup_{0\le r\le a}|F_{0}(r)|
 \;.
\end{equation}
The integral equation (\ref{2IX.4}) can then be solved by
iteration, setting
\begin{equation*}
v_{0}(r)=\int_{0}^{r}(r^{-1}\rho)^{A}F_{0}(\rho)d\rho
\; .
\end{equation*}
Hence for $r\le a$
\begin{equation*}
|v_{0}(r)|\leq r C_{0}
\; ,
\end{equation*}
\begin{equation*}
v_{1}(r):=\int_{0}^{r}(r^{-1}\rho)^{A}F_{1}(\rho)v_{0}(\rho)d\rho+v_{0}(r)
\;,
\end{equation*}
and so
\begin{equation*}
|v_{1}(r)-v_{0}(r)|\leq\frac{1}{2}r^{2}\,C_{1}C_{0}
\; ,
\end{equation*}
\begin{equation*}
v_{n+1}(r):=\int_{0}^{r}(r^{-1}\rho)^{A}F_{1}(\rho)v_{n}(\rho)d\rho+v_{0}(r)
\; ,
\end{equation*}
\begin{equation*}
|v_{n+1}(r)-v_{n}(r)|\le
\int_{0}^{r} C_{1}|v_{n}(\rho)-v_{n-1}(\rho)|d\rho
\; .
\end{equation*}
Assume that, as satisfied for $n=1$,
\begin{equation*}
|v_{n}(\rho)-v_{n-1}(\rho)|\le
\frac{1}{(n+1)!}r^{n+1}C_{1}^{n}C_{0}
 \;,
\end{equation*}
Then the same inequality is satisfied when replacing $n$ by
$n+1$.  The sequence $v_{n}$ converges therefore in uniform
norm to a limit $v$, solution of the integral equation
(\ref{2IX.4});, hence of the differential equation
(\ref{2IX.3b}). This solution $v:=(\alpha,r\alpha^{\prime})$,
$\alpha(0)=(r\alpha^{\prime})(0)=0$, is defined, continuous and
bounded for any finite $r$.
\end{proof}

We deduce from this lemma the following theorem.

\begin{theorem}
In the interval of $r\ge 0$, possibly angle dependent, where
the $\mathcal{C}_{1}$ constraint has a global solution and
$\nu_{0}>0$, the $\mathcal{C}_{0}$  wave-map-gauge constraint
with coefficients deduced from the solutions of the
$\mathcal{C}_{1}$ and $\mathcal{C}_{A}$ constraints admits a
solution $\overline{g}^{11}\equiv 1-\alpha$ with
$\alpha(0)=(r\alpha')(0)=0$, $\alpha$ and $r\alpha'$ which are
$C^{1}$ in $r$. The solution is global when it is so of the
solution of the previous constraints, since the system is
linear.
\end{theorem}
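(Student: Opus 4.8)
The plan is to obtain the theorem as an essentially immediate consequence of the preceding lemma, after recognising the $\mathcal{C}_{0}$ constraint equation as an instance of the Fuchs-type first order system treated there. First I would record that, once $\overline{T}_{11}$ has been eliminated using the $\mathcal{C}_{1}$ constraint and one sets $\overline{g}^{11}\equiv 1-\alpha$, the equation $\mathcal{C}_{0}-\overline{T}_{01}=0$ is precisely the linear second order ODE $L(\alpha)=\Phi$ of (\ref{2IX.2})--(\ref{2IX.1}); introducing $v:=(\alpha,r\alpha')$ rewrites it as the system (\ref{2IX.3b}), $rv'+Av=r\{F_{1}(r)v+F_{0}(r)\}$, with $A$ the constant matrix whose eigenvalues $\mu_{\pm}$ are the negatives $-p_{\pm}$ of the Fuchs indices $p_{+}=\tfrac{1-n}{2}$, $p_{-}=2-n$ found in the NCT case; since $n\ge 2$ these eigenvalues are non-negative, which is exactly the hypothesis required in the lemma.

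The one substantive point is the continuity of $F_{0}$ and $F_{1}$ up to $r=0$ on some interval $[0,a]$. Here I would invoke the vertex asymptotics assembled earlier: under the standing hypotheses (the $\mathcal{C}_{1}$ constraint solved as in Theorem~\ref{T11V.1}, so that $\psi=\tfrac{n-1}{r}-\tau=O(r)$ and $|\sigma|^{2}$ is bounded; $\nu_{0}>0$ and bounded, with the solution of (\ref{6VI.41}) from Section~\ref{sss5VI.4a}; $\nu_{A}=O(r^{2})$ and $\xi_{A}$ controlled by the solution of the $\mathcal{C}_{A}$ constraint in Section~\ref{solutionCA}; and the estimates (\ref{7VI.33})--(\ref{7VI.42}) for admissible $\gamma_{AB}$) one checks that $a_{1}\equiv-\tfrac32\psi$ and $rb_{1}\equiv-(n-2)\psi+\tfrac{r}{2}(\tfrac{n-2}{n-1}\psi^{2}-|\sigma|^{2})$ are bounded, and that in $r\Phi$ the purely $r^{-2}$ (Fuchs-order) singularities — in particular the leading $\tau^{2}$, $\tilde R$ and $\overline{W}^{1}$ contributions — cancel exactly as they do in the NCT case, so that $r\Phi$ too is bounded; all three are moreover smooth for $r>0$. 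Hence $F_{0}$ and $F_{1}$ are continuous on $[0,a]$, and the lemma yields a unique solution $v=(\alpha,r\alpha')\in C^{1}([0,a])$ with $\alpha(0)=(r\alpha')(0)=0$.

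It then remains to propagate the solution to larger $r$. For $r\ge a$ the coefficients of $L$ in (\ref{2IX.1}) involve only $\tau$, $\psi$, $|\sigma|^{2}$ and the quantities $\nu_{0},\overline{g}_{AB},\xi_{A},\overline{W}^{\alpha}$ built from the solutions of the $\mathcal{C}_{1}$ and $\mathcal{C}_{A}$ constraints, all smooth for $r>0$ on the interval where those constraints are solved and $\nu_{0}>0$; standard linear ODE theory extends $(\alpha,r\alpha')$ uniquely over that whole interval, and linearity precludes finite-$r$ blow-up, so the solution is global exactly where claimed. The only real difficulty is the bookkeeping in the middle paragraph — verifying that every term entering $\Phi$, $b_{1}$ and $a_{1}$ genuinely has the advertised $O(1)$ or $O(r)$ behaviour near the vertex, and in particular that the leading singular contributions to $\Phi$ conspire to cancel — but the needed estimates are already in place from Sections~\ref{ssBcv}, \ref{solutionC1}, \ref{ss5V.1a}, \ref{sss5VI.4a} and \ref{solutionCA}, so the argument closes at once.
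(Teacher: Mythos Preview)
Your proposal is correct and follows essentially the same route as the paper: the theorem is deduced directly from the preceding lemma once one has checked that the system $rv'+Av=r\{F_{1}(r)v+F_{0}(r)\}$ obtained from $L(\alpha)=\Phi$ has non-negative eigenvalues for $A$ and continuous $F_{0},F_{1}$ at $r=0$, after which standard linear ODE theory for $r>0$ gives the global extension. The paper handles the continuity of $F_{0},F_{1}$ slightly more tersely (simply asserting that $a_{1}$, $rb_{1}$ and $r\Phi$ are bounded and continuous for admissible $\gamma_{AB}$), but your more explicit discussion of the vertex cancellations is in the same spirit.
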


\subsection{Vanishing of $\protect\overline{H}_{0}$}

In previous sections we have seen how to achieve
$\overline{H}^{0}= \overline{H}^{A}= 0$, and hence ${\cal
L}_{1}={\cal L}_{A}=0$. Specializing equation (\ref{L0final}) to
this case we get (with
$\overline{H}_{0}:=\nu_{0}\overline{H}^{1}$)
\begin{equation}
\mathcal{L}_{0}\equiv-\partial_{1}\overline{H}_{0}
+\frac{1}{2}(\overline{W}_{1}-\tau)\overline{H}_{0}
\; .
\label{H0preserve_1b}
\end{equation}
On the other hand, the identity
\begin{equation}
S_{01}\equiv S_{01}^{(H)}+
\frac{1}{2}(g_{0\alpha}\hat{D}_{1}H^{\alpha}
           +g_{1\alpha}\hat{D}_{0}H^{\alpha}
           -g_{01}\hat{D}_{\alpha}H^{\alpha})
\label{H0preserve_2b}
\end{equation}
reduces on $C_{O}$ to
\begin{equation}
\overline{S}_{01}\equiv \overline{S}_{01}^{(H)}+
\frac{1}{2}(\overline{g}_{00}\overline{\hat{D}_{1}H^{0}}+
            \nu_{A}\overline{\hat{D}_{1}H^{A}}-
            \nu_{0}\overline{\hat{D}_{A}H^{A}}).
\label{H0preserve_3b}
\end{equation}
Using again the conditions
$\overline{H}^{0}=\overline{H}^{A}=0$ we have, for an
arbitrary target space in adapted coordinates,
\begin{equation}
\overline{\hat{D}_{1}H^{0}} = 0 \;,\qquad
\overline{\hat{D}_{1}H^{A}} = 0 \;,\qquad
\overline{\hat{D}_{A}H^{A}} = \hat{\Gamma}^{A}_{A1}\overline{H}^{1}
\; ,
\label{H0preserve_4b}
\end{equation}
and hence (\ref{H0preserve_3b}) further reduces to
\begin{equation}
\overline{S}_{01}\equiv\overline{S}_{01}^{(H)}
-\frac{1}{2}\hat{\Gamma}^{A}_{A1}\overline{H}_{0}
\;.
\label{H0preserve_5b}
\end{equation}

For a solution of the Einstein equations in wave-map gauge it
holds then that
\begin{equation}
\overline{S}_{01}\equiv \mathcal{C}_{0}+\mathcal{L}_{0} =\overline{T}_{01}
-\frac{1}{2}\hat{\Gamma}^{A}_{A1} \overline{H}_{0}.
\label{H0preserve_6b}
\end{equation}
Therefore, when $\overline{H}^{0}=\overline{H}^{A}=0$ and the
initial data satisfy the  wave-map-gauge constraint
\begin{equation}
\mathcal{C}_{0}-\overline{T}_{01}=0 ,
\label{H0preserve_7b}
\end{equation}
then $\overline{H}_{0}$ satisfies the equation
\begin{equation}
\partial_{1}\overline{H}_{0} =
\frac{1}{2}(\overline{W}_{1}-\tau+\hat{\Gamma}^{A}_{A1})\overline{H}_{0}
\;  .
\label{H0preserve_8b}
\end{equation}
For a Minkowski target, and using the boundary conditions
(\ref{gvertexlimits}), we have
\begin{equation}
\hat{\Gamma}^{A}_{A1}\stackrel{\eta}{=}\frac{n-1}{r} \;, \quad
\lim_{r\to 0}(r\overline{W}_{1})\stackrel{\eta}{\equiv}
\lim_{r\to 0}(-r^{2}\overline{g}^{AB}s_{AB}) = -(n-1) , \quad
\lim_{r\to 0}(r\tau) = n-1
\;,
\label{H0preserve_9b}
\end{equation}
and hence equation (\ref{H0preserve_8b}) takes a Fuchsian form
\begin{equation}
r \partial_{1}\overline{H}_{0}+\frac{n-1}{2}\overline{H}_{0}
+rM\overline{H}_{0} \stackrel{\eta}{=} 0
\; .
\label{H0preserve_10b}
\end{equation}
with $M$ a continuous function up to $r=0$.

We want to prove that $H_{0}=0$ when the spacetime metric is a
$C^{2}$ solution of the Einstein equations in Minkowski-wave-ap
gauge; in this case the wave gauge vector $H$ is $C^{1}$, then
$\overline{H}_{0}$ tends to a finite limit at the vertex. The
equation (\ref{H0preserve_10b}) implies that this limit is zero
hence that the only solution is zero.

\begin{remark} {\rm
If we add constraint damping terms as in
(\ref{DampingRicci}) we obtain instead, using again
$\overline{H}^{0}=\overline{H}^{A}=0$,
\begin{equation}
\mathcal{L}_{0} = -\partial_{1}\overline{H}_{0}
+\frac{1}{2}(\overline{W}_{1}-\tau+\epsilon \rho n_{1}) \overline{H}_{0}
\; .
\label{damping0x}
\end{equation}
This new term does not change the leading orders in $r$ of
equation (\ref{H0preserve_10b}) and hence $\overline{H}_{0}=0$
is still the only regular solution. We conclude that the
addition of constraint damping terms is fully compatible with
the  wave-map-gauge constraint hierarchy. }
\end{remark}

\bigskip

\section{Wave-map gauge constraints: a summary}
\label{summary}

We have defined $C_{O}$ to be the cone represented in $\mathbf{R}^{n+1}$ by
the Minkowskian cone
\begin{equation}
y^{0}=r,\quad r^{2}:=\sum_{i=1}(y^{i})^{2},
\end{equation}
equivalently
\begin{equation}
x^{0}=0\;,\quad x^{1}=r,\quad\Theta^{i}(x^{A})=\frac{y^{i}}{r}.
\end{equation}
We have considered on $C_{O}$ a non degenerate quadratic form given in
$x^{\alpha}$ coordinates by:
\begin{equation*}
\overline{g}_{00}(dx^{0})^{2}+2\nu_{0}dx^{0}dx^{1}+2\nu_{A}dx^{0}dx^{A}+
\overline{g}_{AB}dx^{A}dx^{B}.
\end{equation*}
We have proved (recall that admissible means hypotheses on smoothness and
limits at the vertex spelled out in various sections)

\begin{theorem}
 \label{T21V10.1}
1. Let $\tilde{g}$ be a given admissible degenerate quadratic form on
$C_{O}^{T_{0}}$,
\begin{equation*}
\tilde{g}=\overline{g}_{AB}dx^{A}dx^{B}.
\end{equation*}
There exists on $C_{O}^{T}$, for some $T$ with $0<T\leq T_{0}$, coefficients
$\nu_{0}$, $\nu_{A}$, $\overline{g}_{00}$ satisfying the vacuum Einstein wave-map
gauge constraints, and unique modulo admissible vertex limits.

2. An admissible degenerate quadratic form $\tilde{g}$ together with a non
vanishing $\nu_{0}$ can be determined on $C_{O}^{T}$, for some $T$ with
$0<T\leq T_{0}$, from the first vacuum Einstein wave-map gauge constraints, an
admissible quadratic form $\gamma$ and a scalar function $\kappa$ being
arbitrarily given. Then $\tilde{g}$ is conformal to $\gamma$ and depends only
on its conformal class, $\nu_{0}$ is linked to $\kappa$ by the differential
equation (\ref{nu0equation}). They are unique under admissible vertex limits.

When $\tilde{g}$ is known, $\nu_{A}$ and $\overline{g}_{00}$ are determined as in
point 1.\ by the second and third wave-map gauge constraint and admissible
vertex limits.
\end{theorem}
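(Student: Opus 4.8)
The plan is to read Theorem~\ref{T21V10.1} as the bookkeeping statement that collects the results of Sections~\ref{constraintC1}--\ref{solutionC0}, so that the argument consists mainly in threading those results together in the right order and checking that the output of each step is admissible input for the next. The organising principle is the hierarchical structure of the three vacuum wave-map-gauge constraints $\mathcal{C}_1=0$, $\mathcal{C}_A=0$, $\mathcal{C}_0=0$ (Minkowski target, $\overline{T}_{\alpha\beta}=0$), as given by (\ref{C1final}), (\ref{CAfinal2}) and (\ref{C0_2}): $\mathcal{C}_1$ couples only $\overline{g}_{AB}$ and $\nu_0$; once these are known, Lemma~\ref{lCA} turns $\mathcal{C}_A$ into a first order linear ODE for the variable $\xi_A$ of (\ref{xiA}), hence for $\nu_A$; and once $\overline{g}_{AB},\nu_0,\nu_A$ are known, the lemmata of Section~\ref{constraintC0} present $\mathcal{C}_0$ as a second order linear Fuchsian-type ODE for $\overline{g}^{11}$, equivalently for $\overline{g}_{00}$ through (\ref{null8}). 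At every stage the free constant of integration is to be pinned down by the admissible vertex limits (\ref{gvertexlimits}); this is the sense in which uniqueness is meant.

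\emph{Point 1.} I would start from the given admissible $\tilde g=\overline{g}_{AB}dx^Adx^B$ and compute $\chi_{AB}$, $\tau$, $|\sigma|^2$ from (\ref{chi2})--(\ref{sigma}); admissibility, in the form recorded in Section~\ref{ssBcv}, gives $\tau-\frac{n-1}{r}$ bounded and $|\sigma|^2$ continuous on $[0,T_0]$, so for some $T\in(0,T_0]$ one has $\tau>0$ on $(0,T]$. On that interval the equation $\mathcal{C}_1=0$ read with $\tilde g$ fixed determines the continuous, $O(r)$ function $\kappa:=(\partial_1\tau+\frac{\tau^2}{n-1}+|\sigma|^2)/\tau$, and the identity $\kappa\equiv\nu^0\partial_1\nu_0-\frac12(\overline{W}_1+\tau)$ is then precisely equation (\ref{nu0equation}) for $\nu_0$; rewritten for $\nu^0=1/\nu_0$ it is the linear equation (\ref{5VI.13}) plus the subleading term $-\kappa\nu^0$, so the argument of Section~\ref{sss5VI.4a} (set $Y=1-\nu^0$, obtain the Fuchsian-type equation (\ref{6VI.41}) with source $O(r)$) applies essentially verbatim and yields $\nu_0\in C^1$, $\nu_0(0)=1$, and, after possibly shrinking $T$, $\nu_0>0$. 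With $\overline{g}_{AB}$ and $\nu_0$ fixed, I would then invoke Lemma~\ref{lCA}: solve the linear ODE (\ref{11VI.9})--(\ref{7VI.1}) for $\xi_A=\check{\xi}_A\varphi^{-(n-1)}$ with $\check{\xi}_A(0)=0$, and recover $\nu_A$ from (\ref{xiAbis}), which after the substitution $\nu_A=k_Ar^{-(n-3)/2}$ is the regular linear system for $k_A$ integrated by iteration in Section~\ref{solutionCA}, giving $\nu_A\in C^1$, $\nu_A(0)=0$, $|\nu_A|\le Cr^2$. Finally I would use the Fuchs-type lemma of Section~\ref{solutionC0} to solve $\mathcal{C}_0=0$, i.e. equation (\ref{2IX.2}) for $\alpha:=1-\overline{g}^{11}$, through the first order system (\ref{2IX.3b}) whose constant matrix has non-negative eigenvalues; this produces $\alpha$ and $r\alpha'$ of class $C^1$ with $\alpha(0)=(r\alpha')(0)=0$, and (\ref{null8}) then gives $\overline{g}_{00}=\nu^A\nu_A-\nu_0^2\overline{g}^{11}$, with $1+\overline{g}_{00}\to 0$. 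Uniqueness at each step is immediate once (\ref{gvertexlimits}) is imposed, because in each Fuchsian problem the homogeneous solutions blow up at the vertex.

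\emph{Point 2.} Here only the treatment of the first constraint changes: $\gamma$ (admissible, $\gamma_{11}\equiv\gamma_{1A}\equiv 0$) and the scalar $\kappa$ are prescribed. I would compute $|\sigma|^2$ from $\gamma$ by (\ref{5VI.5bnew}), noting it depends only on the conformal class, and solve the Riccati equation (\ref{19V10.3}) for $\tau$, equivalently the linear equation (\ref{31III10.2}) for $\varphi$ with $\varphi(0)=0$, $\partial_1\varphi(0)=1$; for $\kappa=0$ Theorem~\ref{T11V.1} even gives a global $\tau>0$ under the stated smallness conditions, and for general $\kappa$ a solution with $\tau>0$ on some $(0,T]$. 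Then $\varphi(r)=r\exp(-\int_0^r\frac{\psi}{n-1})$ with $\psi:=\frac{n-1}{r}-\tau$, and the quadratic form $\tilde g$ defined by (\ref{conformal}) is admissible, conformal to $\gamma$, and depends only on the conformal class of $\gamma$ (since $\tau,\psi,\Omega$ do). Next $\nu_0$ is fixed by (\ref{nu0equation}) with the given $\kappa$ and $\nu_0(0)=1$, again via Section~\ref{sss5VI.4a}; shrinking $T$ keeps $\nu_0\in C^1$, $\nu_0>0$. From here $\tilde g$ and $\nu_0$ are known, so $\nu_A$ and $\overline{g}_{00}$ are obtained exactly as in Point~1, and the admissible vertex limits again give uniqueness.

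I expect the only genuine work beyond citing the earlier sections to be two ``matching'' checks: that the $\overline{g}_{AB}$ manufactured in Point~2 is admissible in the sense of Section~\ref{ssBcv} (this is where the precise meaning of ``$\gamma$ admissible'' and the normalisation $\varphi(0)=0$, $\partial_1\varphi(0)=1$ enter), and that $\nu_A$ and $\xi_A$ decay fast enough near the vertex for the coefficients of the $\mathcal{C}_0$-equation (\ref{2IX.2}) to keep the Fuchsian structure used by the lemma of Section~\ref{solutionC0}. The other delicate, but purely technical, point is the succession of shrinkings of $T$ needed to preserve $\tau>0$ and $\nu_0>0$ along the generators; this is exactly what prevents the conclusion from holding on all of $C_O^{T_0}$ in general and forces the statement to quantify over some $T\le T_0$. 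No new analytic ingredient is required: existence and vertex regularity at each step are supplied by the Fuchsian-ODE arguments already established.
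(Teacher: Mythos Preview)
Your proposal is correct and matches the paper's approach: Theorem~\ref{T21V10.1} is presented in the paper as a summary of Sections~\ref{constraintC1}--\ref{solutionC0} with no separate proof, and your threading of the hierarchical constraints $\mathcal{C}_1,\mathcal{C}_A,\mathcal{C}_0$ together with the Fuchsian vertex analysis is exactly the intended reading. The one minor variation is your handling of $\nu_0$ in Point~1: the paper (Section~\ref{ss5V.1}) observes directly that with $\tilde g$ prescribed, $\mathcal{C}_1=0$ is a first-order ODE for $\nu_0$, linear because $\overline{W}_1$ is linear in $\nu_0$; you instead extract $\kappa$ from the Raychaudhuri relation and feed it into (\ref{nu0equation}), which is an equivalent reformulation that has the advantage of recycling the $\kappa$-dependent analysis of Section~\ref{sss5VI.4a} rather than redoing it.
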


\section{Local geometric uniqueness for the vacuum Einstein equations}
\label{slgu}

In this section only the vacuum Einstein equations will be
considered.

Recall that two spacetimes $(V_{a},g_{a})$ and $(V_{b},g_{b})$
are considered as (both geometrically and physically) the same
if there exists a diffeomorphism $\phi :V_{a}\rightarrow V_{b}$
such that on $V_{a}$ it holds that $g_{a}=\phi _{\ast }g_{b}$.
We have said before that given a $C^{1,1}$ metric $g_{a}$ on a
manifold $V_{a}$ and $O_{a}\in V_{a}$ there are in some
neighbourhood of $O_{a}$ {normal} coordinates $y_{a}^{\alpha }$ centered at $%
O_{a}$, where the characteristic cone $C_{O_{a}}$ is
represented, for $0\leq y_{a}^{0}\leq T_{a}$ by the equation of
a Minkowskian cone in $R^{n+1}$
\begin{equation*}
y_{a}^{0}=r_{a},\quad r_{a}^{2}:=\sum_{i=1}^{n}(y_{a}^{i})^{2}.
\end{equation*}
The null rays issued from $O_{a}$ are represented by the
generators of this cone. We have defined adapted null
coordinates by setting
\begin{equation}
x_{a}^{0}:=r_{a}-y_{a}^{0},\quad x_{a}^{1}=r_{a},\text{ \ with \ }%
r_{a}^{2}=\sum_{i=1}^{n}(y_{a}^{i})^{2}  \label{29III10.1}
\end{equation}
and $x_{a}^{A}$ local coordinates on the sphere $S^{n-1}$. In the
coordinates $x_{a}^{\alpha }$ the metric $g_{a}$ reads on the cone $%
C_{O_{a}} $
\begin{equation}
\overline{g}_{a,00}(dx_{a}^{0})^{2}+2\nu _{a,0}dx_{a}^{0}dx_{a}^{1}+2\nu
_{a,A}dx_{a,}^{0}dx_{a}^{A}+\overline{g}_{a,AB}dx_{a}^{A}dx_{a}^{B}.
\end{equation}
We have shown moreover (see Section~\ref{ssBcv}) that it is
possible to choose the coordinates $y_{a}^{\alpha }$ so that it
holds
\begin{equation}
\lim_{r\rightarrow 0}r_{a}^{-3}(\overline{g}_{a,AB}-r_{a}^{2}s_{AB})=0\;,
\quad
\lim_{r\rightarrow 0}r_{a}^{-2}\partial_{1}(\overline{g}_{a,AB}
-r_{a}^{2}s_{AB})=0\;,
 \label{24IV10.2}
\end{equation}
\begin{equation}
\lim_{r\rightarrow 0} r_{a}^{-1}(\nu_{a,0}-1) = 0
\quad
\lim_{r\rightarrow 0}r_{a}^{-2}\nu_{a,A}=
\lim_{r\rightarrow 0}r_{a}^{-1}\partial_{1}\nu _{a,A}=0\;,
\end{equation}
and even
\begin{equation}
\lim_{r\rightarrow 0}r_{a}^{-2}\overline{\partial_{0}(g_{a,AB}-r_{a}^{2}s_{AB})}=0\;,
\end{equation}
\begin{equation}
\lim_{r\rightarrow 0}r_{a}^{-1}\overline{\partial _{0}g_{a,1A}}%
=\lim_{r\rightarrow 0}r_{a}^{-1}\overline{\partial _{0}g_{a,0A}}=0\;,
\end{equation}
while
\begin{equation}
\lim_{r\rightarrow 0}r_{a}^{-1}(\overline{g}_{a,00}+1)=0\;,\quad
\lim_{r\rightarrow 0}\partial _{1}\overline{g}_{a,00}=0=\lim_{r\rightarrow 0}%
\overline{\partial _{0}g_{a,00}}\;.  \label{29III10.2}
\end{equation}

Having chosen such coordinates $y_{a}^{\mu}$, respectively
$y_{b}^{\mu}$, for the metrics $g_{a}$ and $g_{b}$, we obtain a
diffeomorphism $\phi_{N}$ by $y_{b}^{\alpha}(y_{a}^{\alpha
}):=y_{a}^{\alpha }$, defined in the subset $y_{a}^{0}\leq
T:=\min(T_{a},T_{b})$. Such a diffeomorphism will be called
\emph{canonical.}
We remark that canonical diffeomorphisms are \emph{not} unique,
and that above we have not required $r$ to be an affine
parameter.

The metrics $g_{b}$ and $\phi _{N,\ast }g_{b}$ are
geometrically equivalent,
and one has equality of components ($\phi_{N,\ast}g_{b})^{\lambda%
\mu}(y_{a})=g_{b}^{\lambda \mu }(y_{b})$ for $y_{b}^{\alpha}=y_{a}^{\alpha}$%
. The coordinates $y^{\alpha}$ are normal for both metrics and
they satisfy in the coordinates $x^{\alpha }$ the vertex limits
(\ref{24IV10.2}--\ref {29III10.2}) recalled above.

To study the geometric uniqueness of our characteristic Cauchy
problem we first consider two metrics $g_{a}$ and $g_{b}$ on
the same manifold which satisfy the characteristic Cauchy
problem on the same cone $C_{O}$. We will prove the following
theorem, using the notations given in previous sections for
$C_{O}$ and $Y_{O}$ (note that we are not assuming
an affine parameterisation of the cone generators here):

\begin{theorem}
\label{T29III10.1} Consider two smooth solutions $g_{a}$ and $g_{b}$ in $%
Y_{O}^{T}$ of the Cauchy problem for the vacuum Einstein equations $%
Ricci(g)=0$ with data on the cone $C_{O}^{T}$, characteristic
for both
metrics. There exists $T^{\prime }\leq T$ so that $g_{a}$ is equivalent to $%
g_{b}$ in $Y_{O}^{T^{\prime}}$ if and only if they induce on
$C_{O}^{T}$ the same degenerate quadratic form satisfying in
the coordinates $x^{\alpha}$ the vertex limits
(\ref{24IV10.2}--\ref{29III10.2}).
\end{theorem}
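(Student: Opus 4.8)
The plan is to prove geometric equivalence by reducing it, via a gauge choice, to a uniqueness statement for the reduced Einstein equations, and then to propagate that uniqueness from the cone into its interior. First I would observe that the "only if" direction is trivial: if $g_a = \phi_{*} g_b$ for a diffeomorphism $\phi$ fixing $C_O^{T}$, then the intrinsic degenerate quadratic form $\tilde g$ induced on $C_O^{T}$ is preserved, and since both metrics are put in the normal-coordinate-based gauge of Section~\ref{ssBcv}, the $\tilde g$'s agree and satisfy the vertex limits \eqref{24IV10.2}--\eqref{29III10.2}. For the "if" direction, the key point is that $\tilde g$ does \emph{not} by itself determine the trace $\overline g$ on the cone, since $\nu_0$, $\nu_A$, $\overline g_{00}$ are gauge-dependent; but once a wave-map gauge with Minkowski target is imposed, Theorem~\ref{T21V10.1} shows that the Einstein--wave-map-gauge constraints, together with the admissible vertex limits, determine $\nu_0$, $\nu_A$ and $\overline g_{00}$ \emph{uniquely} from $\tilde g$ (here we use $\kappa=0$ and the normalizations already fixed).

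The core of the argument is then: given $g_b$ solving $Ricci(g_b)=0$ on $Y_O^{T}$, perform a diffeomorphism to bring it into Minkowski-wave-map gauge while keeping $C_O$ characteristic and keeping the adapted-coordinate form \eqref{null2} of its trace. This is the analogue of the classical spacelike construction: one solves the wave-map equation $\hat\square f = 0$ (equation \eqref{WaveMap}) for a map $f$ into Minkowski space, with $f$ equal to the identity-type data on $C_O$ dictated by the normal coordinates, so that the pulled-back metric $g_b^{(H)}:=f_* g_b$ satisfies $H[g_b^{(H)}]=0$ and hence solves the reduced equations $R^{(H)}_{\alpha\beta}(g_b^{(H)})=0$ with $\overline H = 0$ on $C_O$. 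Do the same for $g_a$, obtaining $g_a^{(H)}$. By Theorem~\ref{Existence1} (uniqueness for the Cauchy--Dossa problem for the reduced equations) applied on a possibly smaller $Y_O^{T'}$, two solutions of the reduced equations with the \emph{same} trace $\overline g$ on $C_O^{T}$ coincide; so it remains to check that $g_a^{(H)}$ and $g_b^{(H)}$ induce the same $\overline g$. Their induced degenerate forms $\tilde g$ agree by hypothesis (the gauge transformations preserve the intrinsic $\tilde g$, being tangent to $C_O$ in the relevant sense), both are in Minkowski-wave-map gauge, and both satisfy the admissible vertex limits; hence by Theorem~\ref{T21V10.1} their $\nu_0,\nu_A,\overline g_{00}$ agree as well, so $\overline g_a = \overline g_b$ on $C_O^{T}$. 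Then Theorem~\ref{Existence1}, or directly the energy-estimate uniqueness for quasidiagonal hyperbolic systems on the characteristic cone, gives $g_a^{(H)} = g_b^{(H)}$ on some $Y_O^{T'}$, and composing the two gauge diffeomorphisms produces the desired equivalence $g_a \sim g_b$.

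The main obstacle is the gauge-fixing step: constructing the wave map $f$ (equivalently, solving $H=0$) with data on the characteristic cone so that (i) $C_O$ remains the light cone $x^0=0$ of the new metric, (ii) the trace retains the form \eqref{null2}, i.e. $\overline g_{11}=\overline g_{1A}=0$ is preserved, and (iii) the vertex limits \eqref{24IV10.2}--\eqref{29III10.2} are not destroyed, so that the Cauchy--Dossa hypotheses still apply. This is delicate precisely because $x^1$ need not be an affine parameter and because the coordinates are singular at $O$; one must check that the wave-map Cauchy problem on $C_O$ is well posed with the prescribed characteristic data and that the resulting diffeomorphism extends smoothly across the vertex. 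I would handle this by noting that the required data for $f$ on $C_O$ are exactly the identity-type data fixed by the normal coordinates of Section~\ref{ssBcv}, for which the transversal derivatives $\overline{\partial_0 f}$ are constrained by Lemma~\ref{lemma4.2} and the vertex-limit analysis; the freedom in choosing the transversal derivative of $f$ at $O$ corresponds to the non-uniqueness of the canonical diffeomorphism already noted in the text, and any admissible choice works. Once $f$ is in hand the rest is a routine application of the existence/uniqueness theorems already established, so I expect the statement to follow with $T'$ controlled by the domains of existence in Theorems~\ref{Existence1}, \ref{T11V.1} and the $\mathcal C_0$ solution theorem.
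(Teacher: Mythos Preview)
Your approach is essentially the same as the paper's: construct wave maps to put each metric in Minkowski-wave-map gauge with identity data on $C_O$, observe that $\tilde g$ is preserved, use the uniqueness part of Theorem~\ref{T21V10.1} together with the admissible vertex limits to conclude that the full traces $\overline g^{(H)}$ agree, and then invoke PDE uniqueness for the reduced equations. The one inaccuracy is your parenthetical ``here we use $\kappa=0$'': since $\tilde g$ itself (not merely its conformal class) is the given datum, it is Part~1 of Theorem~\ref{T21V10.1} that applies and $\kappa$ plays no role; the paper's substantive work---which you correctly identify as the main obstacle---is the explicit verification, using the trace of the wave-map equation on the cone together with Lemma~\ref{lemma4.2}, that the gauge-transformed quantities $\nu_0^{(H)}$, $\nu_A^{(H)}$, $\overline g_{00}^{(H)}$ and their first radial derivatives inherit the required vertex limits, so that the uniqueness clause of Theorem~\ref{T21V10.1} can indeed be invoked.
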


\begin{proof}
We put the metric $g_{a}$ in Minkowski wave-map gauge by
constructing a wave map $f_a $, that is a solution of the
semilinear, tensorial, partial differential equations which
read in abstract index notation
\begin{equation}
\square \,_{g_{a},\hat{g}}f ^{\alpha }_a\equiv g_{a}^{\lambda \mu
}(\partial _{\lambda \mu }^{2}f ^{\alpha} _a-\Gamma _{a},_{\lambda \mu
}^{\sigma }\partial _{\sigma }f ^{\alpha }_a+\partial _{\lambda
}f ^{\sigma }_a\partial _{\mu }f ^{\rho }_a\hat{\Gamma}_{\sigma \rho
}^{\alpha })=0\;,
\end{equation}
which on $C_{O}^{T}$ is the trace of the identity mapping $I$
of $\mathbf{R}^{n+1}$. To simplify the writing we suppress the
index $a$ in the following computations, valid for any metric
$g$ with normal coordinates $y^{\alpha }$ and adapted null
coordinates $x^{\alpha }$, we will reestablish $a$ and $b$ in
the conclusions.

The components $\underline{f^{\alpha }\text{ }}$ and $f^{\alpha
}$ of the image point are linked by the same relations as the
coordinates $y$ and $x$. They take in coordinates $x^{\alpha }$
the initial data
\begin{equation}
\overline{f}^{0}=0\;,\quad \overline{f}^{1}=x^{1},\quad \overline{f}^{A}=x^{A},\quad \text{%
for }x^{0}=0;
\label{partialf}
\end{equation}
and in the coordinates $y^{\alpha }$ the initial data
\begin{equation}
\underline{\overline{f}}^{i}=y^{i},\quad \underline{\overline{f}}^{0}=r,
\end{equation}
we see that in the $y$ coordinates the initial data are the
trace on $C_{O}$ of the set of $C^{\infty }$ functions on
$\mathbf{R}^{n+1}$
\begin{equation*}
\underline{I}^{i}=y^{i},\quad \underline{I}^{0}=y^{0}.
\end{equation*}
The existence of a $C^{2}$ wave map $f$ in some $Y_{O}^{T_{a}}$
taking these initial data can therefore be proved using the
Cagnac-Dossa theorem. In
fact, since the equations are linear in a coordinate system where the $\hat{%
\Gamma}$'s vanish, the usual linear theory~\cite{Friedlander}
suffices to
obtain the result. The resulting wave map extends to a $C^{2}$ mapping%
\footnote{%
though not in general to a $C^{2}$ wave map.}.

To prove that it is a diffeomorphism at least in a
neighbourhood of the vertex we first remark that our
definitions imply
\begin{equation}
\overline{\underline{\partial_{i}f}^{j}}=\delta_{i}^{j}.
\end{equation}
To study the derivatives \underline{$\partial _{0}$} we return
to the $x$ coordinates and consider the set of functions
\begin{equation}
f^{0}-x^{0},\quad f^{1}-x^{1},\quad f^{A}-x^{A}\;.
\end{equation}
They vanish on $C_{O}$, so do therefore their tangential derivatives on $%
C_{O}$, hence by application of the Lemma~\ref{lemma4.2}
\begin{equation}
\lim_{r\rightarrow 0}\overline{\partial_{0}f^{1}}=0\;,\quad
\lim_{r\rightarrow 0}\overline{\partial_{A}f^{0}}=0\;, \qquad
\lim_{r\rightarrow 0}\overline{\partial_{0}f^{0}}=1
 \;.
   \label{24IV10.5}
\end{equation}
By definition of the coordinates $x$ and $y$ we have
\begin{equation*}
\underline{f}^{0}\equiv
(\Sigma (f^{i})^{2})^{\frac{1}{2}}-f^{0}
\end{equation*}
hence
\begin{equation}
\underline{\partial_{0}f^{0}}:=
\frac{\partial}{\partial y^{0}}\underline{f^{0}}
\equiv
-\frac{\partial}{\partial y^{0}}f^{0} =
\frac{\partial}{\partial x^{0}}f^{0} :=
\partial_{0}f^{0}
\end{equation}
while $\underline{f^{i}}$ depends only on $f^{1}$ and
$f^{A}.$ Therefore
\begin{equation}
\lim_{r\rightarrow 0}\overline{\underline{\partial_{0}f^{i}}}=0\;,
\quad
\lim_{r\rightarrow 0}\overline{\underline{\partial_{0}f^{0}}}=1
 \;.
   \label{24IV10.5x}
\end{equation}
Since the Jacobian of the $C^{1}$ mapping $f $ tends to $1$ at
$O$, it is a diffeomorphism, between at least small
neighbourhoods of $O$.

The initial data, trace $\overline{g}^{(H)}$ of the metric
$g^{(H)}$ in wave gauge are linked with the original
$\overline{g}$ by the classical relation
\begin{equation}
 \overline{g}_{\alpha \beta }\equiv \overline{\partial _{\alpha }f^{\lambda }}%
 \overline{\partial _{\beta }f^{\mu }}\overline{g}_{\lambda \mu }^{(H)}
 \;.
 \label{24IV10.3}
\end{equation}
The values
of $\partial_{i}\overline{f}$ in the coordinates $x^{\alpha}$
show the equality of quadratic forms
$\tilde{g}^{(H)}\equiv \tilde{g}$, indeed in these coordinates:
\begin{equation}
\overline{g}_{11}^{(H)}=\overline{g}_{11}=0\;,\quad
\overline{g}_{1A}^{(H)}=\overline{g}_{1A}=0\;,\quad
\overline{g}_{AB}^{(H)}=\overline{g}_{AB} \;.
\end{equation}

Since $g^{(H)}$ is in wave gauge, and satisfies the
vacuum Einstein equations, its trace $\overline{g}^{(H)}$
satisfies the wave-map gauge constraint $\mathcal{C}_{1}=0$,
and $\nu _{0}^{(H)}$ satisfies the same equation than $\nu
_{0}$,
\begin{equation}
\partial_{1}\nu_{0}^{(H)}=\nu_{0}^{(H)}\{\frac{\partial_{1}\tau }{\tau }+
\frac{1}{2}(\nu_{0}^{(H)}\overline{W}^{0}+\tau)+\frac{|\sigma|^{2}}{\tau }
+\frac{\tau }{n-1}\},
 \label{C1finalxxx}
\end{equation}
since the coefficients depend only on $\tilde{g}$, to show that
$\nu_{0}^{(H)}$ tends to $1$ at the vertex like $\nu_{0}$  we
use the identity
\begin{equation}
\nu_{0}\equiv \overline{\partial_{0}f^{\lambda}}
\overline{\partial_{1}f^{\mu}}\overline{g}_{\lambda\mu}^{(H)}\;.
\end{equation}
and the limits \eqref{24IV10.5}-\eqref{24IV10.5x} give
\begin{equation*}
1=\lim_{r\rightarrow 0}\nu _{0}=\lim_{r\rightarrow 0}\nu _{0}^{(H)}.
\end{equation*}
Uniqueness of solutions of \eqref{C1finalxxx} (with non zero $\tau $) with
this limit at $O$ shows that the function $\nu _{0}^{(H)}$ depends only on
$\tilde{g}$, i.e. $\nu_{a,0}^{(H)}=\nu_{b,0}^{(H)}$ since $\tilde{g}_{a}=
\tilde{g}_{b}$, that is $\overline{g}_{a,AB}=\overline{g}_{b,AB}$.

As a consequence, the wave-map-gauge constraints $C_{a,A}=0$ written with
$\overline{g}_{AB}\equiv \overline{g}_{AB}^{(H)}$ and $\nu _{a,0}^{(H)}$
are the same equation for $\nu _{a,A}^{(H)}$, $a=1$ or $2$. The
vertex limit of $\nu _{A}^{(H)}$ will be deduced from the
definition:
\begin{equation}
\nu _{A}:=\overline{g}_{0A}\equiv \overline{\partial _{0}f^{\lambda }}\overline{%
\partial _{A}f^{\mu }}\overline{g}_{\lambda \mu }^{(H)}\equiv \overline{\partial
_{0}f^{\lambda }}\overline{g}_{A\lambda }^{(H)}  \label{24IV10.1}
\end{equation}
which implies using (\ref{partialf}) (compare \eqref{7VI.35})
\begin{equation}
0=\lim_{r\rightarrow 0}r^{-2}\nu _{A}=\lim_{r\rightarrow 0}r^{-2}\nu
_{A}^{(H)},\qquad \lim_{r\rightarrow 0}\nu ^{A}=\lim_{r\rightarrow 0}\overline{g}%
^{AB}\nu _{B}=\lim_{r\rightarrow 0}r^{-2}s^{AB}\nu _{B}=0\;.
\end{equation}
Differentiating (\ref{24IV10.1}) gives
\begin{equation*}
\partial _{1}\nu _{A}\equiv \overline{\partial _{1}\partial _{0}f^{\lambda }}%
\overline{g}_{A\lambda }^{(H)}+\overline{\partial _{0}f^{\lambda }}\partial _{1}%
\overline{g}_{A\lambda }^{(H)}.
\end{equation*}
We have
\begin{equation*}
\lim_{r\rightarrow 0}r^{-1}\partial _{1}\nu _{A}\equiv \lim_{r\rightarrow
0}\,(r^{-1}\overline{\partial _{1}\partial _{0}f^{\lambda }}\overline{g}%
_{A\lambda }^{(H)})+\lim_{r\rightarrow 0}\,(r^{-1}\overline{\partial
_{0}f^{\lambda }}\partial _{1}\overline{g}_{A\lambda }^{(H)})
\end{equation*}
with, by (\ref{partialf}) and \eqref{24IV10.5},
\begin{equation*}
\lim_{r\rightarrow 0}\,(\overline{\partial _{0}f^{\lambda }}r^{-1}\partial
_{1}\overline{g}_{A\lambda }^{(H)})=\lim_{r\rightarrow 0}r^{-1}\partial _{1}\overline{g%
}_{A0}^{(H)},
\end{equation*}
and
\begin{equation*}
\lim_{r\rightarrow 0}\,(r^{-1}\overline{\partial _{1}\partial _{0}f^{\lambda
}}\overline{g}_{A\lambda }^{(H)})=\lim_{r\rightarrow 0}\,(r\overline{\partial
_{1}\partial _{0}f^{B}})s_{AB}.
\end{equation*}
Taking the trace on the cone of the wave map equation, with
Minkowski target, gives
\begin{eqnarray}
2\overline{g}^{01}(
   \partial_{1}\overline{\partial_{0}f^{A}}
  -\overline{\Gamma}_{10}^{0}\overline{\partial_{0}f^{A}}
  -\overline{\Gamma}_{10}^{A}
  +\hat{\Gamma}_{1B}^{A}\overline{\partial_{0}f^{B}}
) &&  \notag \\
-2\overline{g}^{1B}(
   \overline{\Gamma}_{1B}^{A}
  -\hat{\Gamma}_{1B}^{A}
)-\overline{g}^{BC}(
   \overline{\Gamma}_{BC}^{0}\overline{\partial_{0}f^{A}}
  +\overline{\Gamma}_{BC}^{A}
  -\hat{\Gamma}_{BC}^{A}
) &=&0.
\end{eqnarray}
We have
\begin{equation*}
\lim_{r\rightarrow 0}\overline{\Gamma}_{10}^{0}=0\quad \text{and}\quad
\lim_{r\rightarrow 0}\overline{\partial _{0}f^{A}}=0\;,
\end{equation*}
\begin{equation*}
\lim_{r\rightarrow 0}r\overline{\Gamma}_{10}^{A}=\lim_{r\rightarrow 0}\frac{1}{2}%
\{-r\nu^{A}\nu^{0}\overline{\partial_{0}g_{11}}+r\overline{g}^{AB}(\overline{%
\partial_{0}g_{1B}}+\partial_{1}\nu_{B}-\partial_{B}\nu_{0})\}=0\;.
\end{equation*}
Finally, for a wave map $f$
\begin{equation*}
\lim_{r\rightarrow 0}\,(r\partial_{1}\overline{\partial_{0}f^{A}})=0\;,
\end{equation*}
hence, since $\lim_{r\rightarrow 0}r^{-1}\partial_{1}\nu_{A}=0$, we obtain
\begin{equation*}
\lim_{r\rightarrow 0}r^{-1}\partial _{1}\nu _{A}^{(H)}= 0\;.
\end{equation*}
Since $\nu_{a,A}$ and $\nu_{b,A}$ satisfy the same equation
and the same boundary conditions, they are equal.

It remains to analyse the boundary conditions for the functions
$\overline{g}_{a,00}$, which again satisfy the same equation for $a=1$ or
$2$. We have
\begin{equation*}
g_{00}=\partial _{0}f^{\lambda }\partial _{0}f^{\mu }g_{\lambda \mu }^{(H)}.
\end{equation*}%
It implies:
\begin{equation*}
\lim_{r\rightarrow 0}\overline{g}_{00}\equiv
\lim_{r\rightarrow 0}\left(
\overline{g}_{00}^{(H)}\overline{\partial_{0}f^{0}}\overline{\partial_{0}f^{0}}
+2\overline{\partial_{0}f^{0}}\overline{\partial_{0}f^{1}}\nu_{0}^{(H)}
+\overline{g}_{AB}^{(H)}\overline{\partial_{0}f^{A}}\overline{\partial_{0}f^{B}}
 \right)
  \;.
\end{equation*}
The previous limits imply then
\begin{equation*}
\lim_{r\rightarrow 0}\overline{g}_{00}=\lim_{r\rightarrow 0}\overline{g}_{00}^{(H)}=-1.
\end{equation*}
Also
\begin{equation*}
\partial_{1}g_{00}=
\partial_{0}f^{\lambda}\partial_{0}f^{\mu}\partial_{1}g_{\lambda\mu}^{(H)}
+2\partial_{1}\partial_{0}f^{\lambda}\partial_{0}f^{\mu}g_{\lambda\mu}^{(H)}\;,
\end{equation*}
hence using previous limits
\begin{equation*}
\lim_{r\rightarrow 0}r\partial _{1}\overline{g}_{00}=\lim_{r\rightarrow
0}\{r\partial _{1}\overline{g}_{00}^{(H)}+2r\,(\overline{\partial _{1}\partial
_{0}f^{0}}\overline{g}_{00}^{(H)}+\overline{\partial _{1}\partial _{0}f^{1}}\nu
_{0}^{(H)})\}.
\end{equation*}
We have, by definition of a wave map with Minkowskian target,
\begin{equation*}
2\nu^{0}(
   \partial_{1}\overline{\partial_{0}f^{0}}
  -\overline{\Gamma}_{10}^{0}\overline{\partial_{0}f^{0}}
)
+\nu^{0}\tau\overline{\partial_{0}f^{0}}
+\overline{g}^{AB}\hat{\Gamma}_{AB}^{0}=0\;.
\end{equation*}
Hence
\begin{equation*}
\lim_{r\rightarrow 0}r\{2(\partial _{1}\overline{\partial _{0}f^{0}}-\lim
\overline{\Gamma}_{10}^{0})+\frac{n-1}{r}-\psi -\frac{n-1}{r}\}=0\;,
\end{equation*}
which gives
\begin{equation*}
\lim_{r\rightarrow 0}r\partial _{1}\overline{\partial _{0}f^{0}}=0\;.
\end{equation*}
One finds also
\begin{equation*}
\lim_{r\rightarrow 0}r\partial _{1}\overline{\partial _{0}f^{1}}=0\;,
\end{equation*}
hence
\begin{equation*}
\lim_{r\rightarrow 0}r\partial _{1}\overline{g}_{00}=\lim_{r\rightarrow
0}r\partial _{1}\overline{g}_{00}^{(H)}=0\;.
\end{equation*}

We have proved that $\tilde{g}_{a}=\tilde{g}_{b}$ on $C_{O}^{T}$ implies $%
\overline{g}_{a}^{(H)}=\overline{g}_{b}^{(H)}$ on $C_{O}^{T}$ hence by
uniqueness for
the hyperbolic system of the Einstein equations in wave gauge $%
g_{a}^{(H)}=g_{b}^{(H)}$ in $Y_{O}^{T}$. The metrics $g_{a}$
and $g_{b}$ are geometrically equivalent.

The reverse implication is trivial.
\end{proof}

\bigskip

Our next result, one of the main results of this paper, is a
straightforward corollary of Theorem~\ref{T29III10.1}:

\begin{theorem}
\label{T29III10.2} Given points $O_{a}\in V_{a}$ and $O_{b}\in
V_{b}$ denote by $C_{O_{a}}$ and $C_{O_{b}}$ the characteristic
(null) cones of smooth
Lorentzian metrics $g_{a}$ on $V_{a}$ and $g_{b}$ on $V_{b}$. Denote by $%
J_{a}^{+}$ the future of the point $O_{a}$ in the metric
$g_{a}$. There are neighbourhoods $U_{a}$ of $O_{a}$ and
$U_{b}$ of $O_{b}$ such that the spacetimes $(U_{a}\cap
J_{a}^{+},g_{a})$ and $(U_{b}\cap J_{b}^{+},g_{b})$
are locally geometrically the same if and only if the pull back $%
\phi_{N}^{\ast}\tilde{g}_{b}$, where $\phi_{N}$ is a canonical
diffeomorphism of $U_{a}$ onto $U_{b}$, equals $\tilde{g}_{a}$.
\end{theorem}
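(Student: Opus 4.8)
The plan is to deduce the statement from Theorem~\ref{T29III10.1} by using a canonical diffeomorphism to pull the metric $g_b$ back onto a neighbourhood of $O_a$, so that one is comparing two metrics on the same manifold sharing the same characteristic cone. First I would fix, as in the discussion preceding Theorem~\ref{T29III10.1}, normal coordinates $y_a^\mu$ for $g_a$ centred at $O_a$ and $y_b^\mu$ for $g_b$ centred at $O_b$, chosen so that in the associated adapted null coordinates $x_a^\alpha$, $x_b^\alpha$ the vertex limits (\ref{24IV10.2})--(\ref{29III10.2}) hold. Let $\phi_N\colon U_a\to U_b$ be the canonical diffeomorphism determined by $y_b^\alpha\circ\phi_N=y_a^\alpha$, and let $g_b^\sharp:=\phi_N^{*}g_b$ be the metric on $U_a$ whose components in the $y_a^\alpha$ coordinates coincide with those of $g_b$ in the $y_b^\alpha$ coordinates.

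Next I would verify that the pair $(g_a,g_b^\sharp)$ meets the hypotheses of Theorem~\ref{T29III10.1}. Both metrics are smooth on $U_a$; the cone $C_{O_a}=\{y_a^0=r_a\}$ is the light-cone of $g_a$ by construction and of $g_b^\sharp$ because $\phi_N$ carries $C_{O_a}$ onto $C_{O_b}=\{y_b^0=r_b\}$ while identifying $g_b^\sharp$ with $g_b$, so $C_{O_a}$ is characteristic for both; the coordinates $y_a^\mu$ are normal at $O_a$ for both metrics, since normality is a coordinate condition transported by $\phi_N$; and, because $\phi_N$ sends $x_b^\alpha$ to $x_a^\alpha$, the vertex limits (\ref{24IV10.2})--(\ref{29III10.2}) hold in the $x_a^\alpha$ coordinates for $g_b^\sharp$ exactly as they were arranged to hold for $g_b$ in the $x_b^\alpha$ coordinates (and likewise for $g_a$). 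Finally, by naturality of pullback, the degenerate quadratic form induced on $C_{O_a}$ by $g_b^\sharp$ is $\phi_N^{*}\tilde g_b$.

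Now I would invoke Theorem~\ref{T29III10.1} for $(g_a,g_b^\sharp)$: there exists $T'\le T$ such that $g_a$ is geometrically equivalent to $g_b^\sharp$ on $Y_{O_a}^{T'}$ if and only if the two metrics induce on $C_{O_a}^{T}$ the same degenerate quadratic form, satisfying the stated vertex limits in the $x_a^\alpha$ coordinates. Since those limits hold automatically by the previous paragraph, this condition is exactly $\tilde g_a=\phi_N^{*}\tilde g_b$. For the \textbf{if} direction: given a diffeomorphism $\chi$ with $g_a=\chi^{*}g_b^\sharp$, the map $\psi:=\phi_N\circ\chi$ satisfies $g_a=\psi^{*}g_b$, and, shrinking $U_a,U_b$ to lie in the respective causal futures $J_a^+$, $J_b^+$ and using that $\psi$ extends smoothly to the cone and vertex (the metrics being smooth), we conclude that $(U_a\cap J_a^+,g_a)$ and $(U_b\cap J_b^+,g_b)$ are locally geometrically the same. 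For the \textbf{only if} direction: a geometric equivalence $\psi$ of the two truncated futures preserves the causal structure, hence sends $O_a$ to $O_b$ and $C_{O_a}$ to $C_{O_b}$; transporting the normal coordinates $y_a^\mu$ forward by $\psi$ yields normal coordinates for $g_b$ at $O_b$ for which the resulting canonical diffeomorphism is $\psi$ itself near the vertex, whence $\phi_N^{*}\tilde g_b=\psi^{*}\tilde g_b=\tilde g_a$.

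The substantive work is the middle step: checking that pulling $g_b$ back by the canonical diffeomorphism preserves \emph{all} the structural hypotheses of Theorem~\ref{T29III10.1} --- normality of the coordinates at the vertex, the cone being characteristic, and the precise asymptotic vertex conditions --- and identifying the induced quadratic form of $\phi_N^{*}g_b$ with $\phi_N^{*}\tilde g_b$. None of this is deep, but it is exactly what converts the phrase ``straightforward corollary'' into an argument. The only genuinely non-formal point, needed only for the converse, is the causal-geometric observation that an isometry between the two truncated futures must carry vertex to vertex and light-cone to light-cone, so that it restricts to an isometry of the degenerate quadratic forms on the cones.
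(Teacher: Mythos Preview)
Your approach is the same as the paper's: pull $g_b$ back by the canonical diffeomorphism $\phi_N$ and apply Theorem~\ref{T29III10.1} to the pair $(g_a,\phi_N^{*}g_b)$ on the common cone $C_{O_a}$, using the naturality $\widetilde{\phi_N^{*}g_b}=\phi_N^{*}\tilde g_b$. Your verification of the hypotheses and the ``if'' direction are fine and simply make explicit what the paper leaves implicit.

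The ``only if'' paragraph, however, takes an unnecessary detour and contains a gap. You argue that an isometry $\psi$ of the truncated futures is itself a canonical diffeomorphism (for the normal coordinates pushed forward by $\psi$), and then conclude $\phi_N^{*}\tilde g_b=\psi^{*}\tilde g_b$. But the theorem fixes a \emph{particular} canonical diffeomorphism $\phi_N$, and, as the paper itself stresses, canonical diffeomorphisms are not unique; there is no reason why the one produced from $\psi$ should agree with the given $\phi_N$ on the cone, so the equality $\phi_N^{*}\tilde g_b=\psi^{*}\tilde g_b$ is unjustified. The paper's route, which you already have in hand once you invoke Theorem~\ref{T29III10.1} as an ``if and only if'', is simpler and correct: given any isometry $\psi$ realising $(U_a\cap J_a^{+},g_a)\cong(U_b\cap J_b^{+},g_b)$, set $\chi:=\phi_N^{-1}\circ\psi$; then $g_a=\chi^{*}(\phi_N^{*}g_b)$, so $g_a$ and $\phi_N^{*}g_b$ are equivalent on the same cone, and the ``only if'' of Theorem~\ref{T29III10.1} gives $\tilde g_a=\widetilde{\phi_N^{*}g_b}=\phi_N^{*}\tilde g_b$ directly. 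No causal-geometric argument about $\psi$ mapping vertex to vertex is needed for this direction.
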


\begin{proof}
The spacetimes $(U_{b}\cap J_{b}^{+},g_{b})$ and
$(\phi_{N}^{-1}(U_{b}\cap J_{b}^{+})\subset U_{a}\cap
J_{a}^{+}),\phi_{N}^{\ast }g_{b})$ are geometrically
equivalent. Theorem~\ref{T29III10.1} shows that the second one
is locally geometrically equivalent to $(U_{a}\cap
J_{a}^{+}),g_{A})$, the conclusion follows from the fact that
$\phi_{N}^{\ast}\tilde{g}_{b}=
\widetilde{\phi_{N}^{\ast}g_{b}}$ and satisfy the required
vertex limits.
\end{proof}

\bigskip

From the Uniqueness Theorem~\ref{T21V10.1} for the constraints
one deduces straightforwardly a formulation of geometric local
uniqueness starting from data $\gamma $ and $\kappa $.

\section{Conclusions, and open problems}

We have shown that the trace $\overline{g}$  on a characteristic
cone of a solution of Einstein equations which is also a
solution of the reduced Einstein equations in wave-map gauge
satisfies necessarily a set of $n+1$ equations which we have
called wave-map gauge constraints, written out explicitly and
solved. We have shown that, conversely a solution of the
reduced Einstein equations in wave-map gauge with trace
satisfying these wave-map gauge constraints satisfies the
original Einstein equations. Finally we have shown that every
solution of the vacuum Einstein equations is locally (i.e. in a
neighbourhood of the vertex) isometric to a solution in wave
map gauge, uniquely determined, modulo some regularity
conditions, by the degenerate quadratic form induced on the
characteristic cone by the spacetime metric.

There remain many interesting open problems:

\begin{itemize}
\item
Determine the minimum regularity, in particular at the vertex,
under which the initial data lead to a local solution (see
also~\cite{CCM3}).
\item
Extend our analysis to a  characteristic cone with vertex at
$i^-$ (cf.~\cite{F1}).
\item
Study the asymptotic behaviour of the solutions of the wave-map
gauge constraint equations  at future null infinity.
\item
Prove  global existence for small initial data of solutions of
the Einstein equations in higher dimensions by a conformal
method, as was done for the spacelike Cauchy problem with data
identically Schwarzschild outside of a bounded
region~\cite{CCL}.
\item
Prove global existence using the approach of
Lindblad-Rodnianski~\cite{LindbladRodnianski2,Loizelet:AFT}
(compare~\cite{Friedrich:86,CaciottaNicoloI}).
\end{itemize}

\appendix
\section{Collected formulae}
 \label{A7VI.1}
The metric on $C_{O}=\{x^{0}=0\}$ is written as
\begin{equation}
 g = \overline{g}_{00} (dx^{0})^{2} + 2 \overline{g}_{0A} dx^{0} dx^{A} + 2 \overline{g}_{01} dx^{0} dx^{1} +
 \overline{g}_{AB} dx^{A} dx^{B}
 \;,
\end{equation}
and recall that we \emph{do not} assume that this form of the
metric is preserved under differentiation in the
$x^{0}$--direction. {Here and elsewhere we put overbars on
the relevant quantities whenever the formulae hold only on
$C_{O}$.} The inverse is
\begin{equation}
\label{29VIII.1}
g^{\sharp} = \overline{g}^{11}\partial^{2}_{1} + 2 \overline{g}^{1A} \partial_{1} \partial_{A}
+ 2 \overline{g}^{01} \partial_{0} \partial_{1} + \overline{g}^{AB}\partial_{A} \partial_{B}
  \;,
\end{equation}
with
\begin{equation}
\overline{g}^{01} = \frac{1}{\overline{g}_{01}}
\;,
\qquad
\overline{g}^{1A} = -\overline{g}^{01} \overline{g}^{AB} \overline{g}_{0B}
\;,
\qquad
\overline{g}^{11} = (\overline{g}^{01})^{2} (-\overline{g}_{00} + \overline{g}^{AB}\overline{g}_{0A} \overline{g}_{0B})
\;.
\end{equation}

We introduce the special notations
\begin{equation}
\nu_{0} := \overline{g}_{01} \;, \qquad
\nu_{A} := \overline{g}_{0A} \;, \qquad
\tilde{g} := \overline{g}_{AB} dx^{A} dx^{B} \;,
\end{equation}
\begin{equation}
\nu^{0} := \overline{g}^{01} = \frac{1}{\nu_{0}} \;, \qquad
\nu^{A} := -\overline{g}_{01}\overline{g}^{1A} = \overline{g}^{AB} \nu_{B} \;.
\end{equation}
Then
\begin{equation}
\overline{g}^{1A} = -\nu^{0} \nu^{A} \;, \qquad
\overline{g}^{11} = (\nu^{0})^{2}(-\overline{g}_{00}+\nu^{A}\nu_{A}) \;.
\end{equation}

The determinant reads
\begin{equation}
 \label{11VI.A7}
\sqrt{|\det \overline{g}|} = \nu_{0} \sqrt{\det\tilde g_{\Sigma}} \; .
\end{equation}

The Levi-Civita connection of the metric $\overline{g}_{AB}$
will be denoted as $\tilde{\nabla}_{A}$, with corresponding Christoffel
symbols $\tilde{\Gamma}^C_{AB}$ with respect to the derivative
$\partial_{A}$.

We have the following Christoffel symbols on the null
hypersurface,
%
\begin{eqnarray}
\overline{\Gamma}^{0}_{00}
&=&
\frac{1}{2} \nu^{0} ( -\partial_{1}  \overline{g}_{00} + 2 \overline{\partial_{0} g_{01}} )
\; , \\
\overline{\Gamma}^{0}_{01}
&=&
\frac{1}{2} \nu^{0} \overline{\partial_{0} g_{11}}
\; , \\
\overline{\Gamma}^{0}_{11}
&=&
0
\; , \\
\overline{\Gamma}^{1}_{00}
&=&
  \frac{1}{2}\nu^{0}\nu^{A} ( \partial_{A}\overline{g}_{00} - 2\overline{\partial_{0} g_{0A}} )
+ \frac{1}{2} \overline{g}^{11} ( -\partial_{1}  \overline{g}_{00} + 2\overline{\partial_{0} g_{01}} )
 \nonumber
\\
&&
+ \frac{1}{2}\nu^{0}\overline{\partial_{0} g_{00}}
\; , \\
\overline{\Gamma}^{1}_{01}
&=&
\frac{1}{2}\left(
 \nu^{0}\partial_{1} \overline{g}_{00}
+ \nu^{0}\nu^{A} ( \partial_{A}\nu_{0} - \partial_{1} \nu_{A} - \overline{\partial_{0} g_{1A}} )
+ \overline{g}^{11} \overline{\partial_{0} g_{11}}
\right)
 \; ,
 \phantom{xxx}
\\
\overline{\Gamma}^{1}_{11}
&=&
\nu^{0}\partial_{1} \nu_{0}
- \frac{1}{2}\nu^{0}\overline{\partial_{0} g_{11}}
\; , \\
\overline{\Gamma}^{0}_{A0}
&=&
\frac{1}{2}\nu^{0} ( \partial_{A}\nu_{0} + \overline{\partial_{0} g_{1A}} - \partial_{1} \nu_{A})
\; ,  \\
\overline{\Gamma}^{0}_{A1}
&=&
0
\; , \\
\overline{\Gamma}^{1}_{A0}
&=&
\frac{1}{2} \nu^{0} (\partial_{A}\overline{g}_{00}
- \nu^{B} (\tilde{\nabla}_{A}\nu_{B}
- \tilde{\nabla}_{B}\nu_{A} + \overline{\partial_{0} g_{AB}}))
 \nonumber
\\
 &&
+ \frac{1}{2} \overline{g}^{11} (\partial_{A}\nu_{0} + \overline{\partial_{0} g_{1A}} - \partial_{1} \nu_{A})
\; , \\
\overline{\Gamma}^{1}_{A1}
&=&
\frac{1}{2} \nu^{0} ( \partial_{A}\nu_{0} - \overline{\partial_{0} g_{1A}} + \partial_{1} \nu_{A}
- \nu^{B} \partial_{1}  \overline{g}_{AB} )
\; , \\
\overline{\Gamma}^{0}_{AB}
&=&
- \frac{1}{2} \nu^{0} \partial_{1} \overline{g}_{AB}
\; , \\
\overline{\Gamma}^{1}_{AB}
&=&
\frac{1}{2}\nu^{0} (\tilde{\nabla}_{A}\nu_{B}+\tilde{\nabla}_{B}\nu_{A}-\overline{\partial_{0} g_{AB}})
- \frac{1}{2}\overline{g}^{11} \partial_{1} \overline{g}_{AB}
\; , \\
\overline{\Gamma}^{C}_{00}
&=&
- \frac{1}{2} \overline{g}^{CA}\partial_{A}\overline{g}_{00}
+ \frac{1}{2} \nu^{0}\nu^{C} \partial_{1} \overline{g}_{00}
+ \overline{g}^{CA} \overline{\partial_{0} g_{0A}}
- \nu^{0}\nu^C \overline{\partial_{0} g_{01}}
\; , \\
\overline{\Gamma}^{C}_{01}
&=&
  \frac{1}{2} \overline{g}^{CA} ( \overline{\partial_{0} g_{1A}} + \partial_{1} \nu_{A} - \partial_{A}\nu_{0} )
- \frac{1}{2} \nu^{0}\nu^{C} \overline{\partial_{0} g_{11}}
\; , \\
\overline{\Gamma}^{C}_{11}
&=&
0
\; , \\
\overline{\Gamma}^{C}_{A0}
&=&
- \frac{1}{2} \nu^{0} \nu^{C} (\overline{\partial_{0} g_{1A}} + \partial_{A}\nu_{0} - \partial_{1} \nu_{A} )
\nonumber
\\
&&
+ \frac{1}{2} \overline{g}^{BC}(\tilde{\nabla}_{A}\nu_{B} - \tilde{\nabla}_{B}\nu_{A} + \overline{\partial_{0} g_{AB}} )
\; , \\
\overline{\Gamma}^{C}_{A1}
&=&
\frac{1}{2} \overline{g}^{BC} \partial_{1} \overline{g}_{AB}
\; , \\
\overline{\Gamma}^{C}_{AB}
&=&
\tilde{\Gamma}^C_{AB}
+ \frac{1}{2} \nu^{0}\nu^{C} \partial_{1}  \overline{g}_{AB}
\; .
\end{eqnarray}
The remaining ones are obtainable by symmetry.
Note that in spite of having
$\overline{g}_{AB}=\tilde{g}_{AB}$, the Christoffel symbols
$\overline{\Gamma}^{C}_{AB}$ (a part of
$\overline{\Gamma}^{\lambda}_{\mu\nu}$) and
$\tilde{\Gamma}^{C}_{AB}$ (the Christoffel symbols of
$\tilde{g}_{AB}$) do not coincide in general.

We note the following traces of the Christoffel symbols:
\begin{eqnarray}
\overline{\Gamma}^{\mu}_{0\mu}
&=&
\nu^{0} \overline{\partial_{0} g_{01}}
+\frac{1}{2}\overline{g}^{11} \overline{\partial_{0} g_{11}}
-\nu^{0}\nu^{A} \overline{\partial_{0} g_{1A}}
+\frac{1}{2} \overline{g}^{AB}\overline{\partial_{0} g_{AB}}
\; , \\
\overline{\Gamma}^{\mu}_{1\mu}
&=&
\nu^{0}\partial_{1} \nu_{0}
+\frac{1}{2} \overline{g}^{AB}\partial_{1} \overline{g}_{AB}
\; , \\
\overline{\Gamma}^{\mu}_{A\mu}
&=&
\nu^{0}\partial_{A}\nu_{0}
+\frac{1}{2}\overline{g}^{BC}\partial_{A}\overline{g}_{BC}
\; .
\end{eqnarray}

The harmonicity vector on the null surface reads:
\begin{eqnarray}
\overline{\Gamma}^{0}
&=&
 (\nu^{0})^{2} \overline{\partial_{0} g_{11}}
-\frac{1}{2} \nu^{0} \overline{g}^{AB} \partial_{1}  \overline{g}_{AB}
\; , \\
\overline{\Gamma}^{1}
&=&
  \nu^{0}  \overline{g}^{AB}\tilde{\nabla}_B \nu_{A}
+\overline{g}^{11} \nu^{0}\partial_{1} \nu_{0}
- \frac{1}{2}\overline{g}^{11} \overline{g}^{AB}
 \partial_{1} \overline{g}_{AB}
+ (\nu^{0})^{2}\nu^{A}\nu^{B}
 \partial_{1} \overline{g}_{AB}
 \nonumber
\\ &&
+ (\nu^{0})^{2} \partial_{1}  \overline{g}_{00}
- 2 (\nu^{0})^{2} \nu^{A} \partial_{1}  \nu_{A}
- \frac{1}{2} \nu^{0} \overline{g}^{AB} \overline{\partial_{0} g_{AB}}
+ \frac{1}{2} \nu^{0} \overline{g}^{11} \overline{\partial_{0} g_{11}}
\\
&=&
  \nu^{0}  \overline{g}^{AB}\tilde{\nabla}_ B \nu_{A}
- \frac{\partial_{1} (\nu_{0}\overline{g}^{11}\sqrt{\det\tilde g_\Sigma})}{\nu_{0}\sqrt{\det\tilde g_\Sigma}}
- \frac{1}{2} \nu^{0} \overline{g}^{AB} \overline{\partial_{0} g_{AB}}
+ \frac{1}{2} \nu^{0} \overline{g}^{11} \overline{\partial_{0} g_{11}}
\; , \phantom{xxxxxx} \\
\overline{\Gamma}^{A}
&=&
- \overline{g}^{AB} \nu^{0}\partial_{B}\nu_{0}
+ \overline{g}^{CD}\tilde{\Gamma}^{A}_{CD}
+ \frac{1}{2} \overline{g}^{BC} \nu^{0} \nu^{A} \partial_{1}  \overline{g}_{BC}
- \overline{g}^{AC} \nu^{0} \nu^{B} \partial_{1}  \overline{g}_{BC}
\nonumber \\ &&
+ \nu^{0}
 ( \overline{g}^{AB} \partial_{1} \nu_{B} + \overline{g}^{AB} \overline{\partial_{0} g_{1B}} - \nu^{0} \nu^{A} \overline{\partial_{0} g_{11}} )
\; , \\
\overline{g}_{0\mu}\overline{\Gamma}^{\mu}
&=&
- \nu^{A} \nu^{0} \partial_{A}\nu_{0}
+ \overline{g}^{AB}\partial_{B}\nu_{A}
+ \nu^{0} \partial_{1} \overline{g}_{00}
+ \overline{g}^{11} \partial_{1} \nu_{0}
- \nu^{0} \nu^{A}\partial_{1} \nu_{A}
\nonumber \\ &&
-\frac{1}{2} \overline{g}^{AB}\overline{\partial_{0} g_{AB}}
+ \nu^{0} \nu^{A} \overline{\partial_{0} g_{1A}}
-\frac{1}{2} \overline{g}^{11} \overline{\partial_{0} g_{11}}
\; , \\
\overline{g}_{1\mu}\overline{\Gamma}^{\mu}
&=&
- \frac{1}{2} \overline{g}^{AB} \partial_{1} \overline{g}_{AB}
+ \nu^{0} \overline{\partial_{0} g_{11}}
\; , \\
\overline{g}_{A\mu}\overline{\Gamma}^{\mu}
&=&
- \nu^{0}( \partial_{A}\nu_{0} - \partial_{1} \nu_{A}
- \overline{\partial_{0} g_{1A}}
+ \nu^{B} \partial_{1} \overline{g}_{AB} )
+ \overline{g}^{BC}\overline{g}_{AD}\tilde{\Gamma}^D_{BC}
\; .
\end{eqnarray}
(In the main body of the paper we also use
$\overline{\Gamma}_A := \overline{g}_{AB} \overline{\Gamma}^B$,
see \eqref{GammaA}.)

The following formulae are often used in our calculations:
\begin{equation}
\overline{\partial_{0} g^{00}} \equiv
 -(\nu^{0})^{2}\overline{\partial_{0}g_{11}}
\;,\quad
\overline{\partial_{0}g^{0B}}\equiv
 -\nu^{0}(-\nu^{0}\nu^{B}\overline{\partial_{0} g_{11}}
 +\overline{g}^{BC}\overline{\partial_{0}g_{1C}})
\;,
\label{null9}
\end{equation}
\begin{equation}
\overline{\partial_{0} g^{10}}\equiv
 -\{ (\nu^{0})^{2}\overline{\partial_{0} g_{01}}
     +\nu^{0}\overline{g}^{11}\overline{\partial_{0} g_{11}}
     -(\nu^{0})^{2}\nu^{C}\overline{\partial_{0} g_{1C}})
  \}
\;.
\label{null10}
\end{equation}

The scalar wave operator acting on a function $f$ reads
\begin{eqnarray}
\overline{\Box_g f} &=&
\frac{1}{\sqrt{|\det \overline{g}|}}
    \overline{\partial_{\mu}(\sqrt{|\det g|} g^{\mu\nu}\partial_{\nu}f)}
\nonumber
\\
&=&
-\overline{\Gamma}^{\nu}\overline{\partial_{\nu}f}
+ \overline{g}^{11}\partial_{1}^{2} \overline{f}
- 2\nu^{0}\nu^{A}\partial_{1}\partial_{A} \overline{f}
+ 2\nu^{0}\partial_{1}\overline{\partial_{0} f}
\nonumber
\\
 & &
+ \overline{g}^{AB}\partial_{A}\partial_{B} \overline{f}
\; .
\end{eqnarray}

The tensor computations in this article have
been checked with the computer algebra framework {\em xAct}
\cite{xAct}.

\bigskip
\noindent{\textsc{Acknowledgements:}  {PTC and YCB are grateful
to the Mittag-Leffler Institute, Djursholm, Sweden, for
hospitality and financial support during part of work on this
paper. They acknowledge useful discussions with Vincent
Moncrief, as well as comments from Roger Tagn\'{e} Wafo. YCB
wishes to thank Thibault Damour for making available his
detailed manuscript calculations   in the case $n=3$ leading to
equations (22) of~\cite{DamourSchmidt}. JMM thanks OxPDE for
hospitality. He was supported by the French ANR grant
BLAN07-1\_201699 entitled ``LISA Science'', and also in part by
the Spanish MICINN project FIS2009-11893. PTC was supported in
part  by the EC project KRAGEOMP-MTKD-CT-2006-042360, by the
Polish Ministry of Science and Higher Education grant Nr N N201
372736, and by the EPSRC Science and Innovation award to the
Oxford Centre for Nonlinear PDE (EP/E035027/1).}}

\bibliographystyle{amsplain}
\bibliography{./char}
\end{document}